\title{Spectral gap critical exponent for Glauber dynamics \\ of hierarchical spin models}
\author{Roland Bauerschmidt\footnote{University of Cambridge, Statistical Laboratory, DPMMS. E-mail: {\tt rb812@cam.ac.uk}.} \and
  \and Thierry Bodineau\thanks{CMAP \'Ecole Polytechnique, CNRS, Universit\'e Paris-Saclay. E-mail: {\tt thierry.bodineau@polytechnique.edu}.}}
\date{July 10, 2019}
\begin{document}
\maketitle
\begin{abstract}
  We develop a renormalisation group approach to deriving the asymptotics
  of the spectral gap of the generator of Glauber type dynamics of spin systems
  with strong correlations (at and near a critical point).
  In our approach, we derive a spectral gap inequality 
  for the measure recursively in terms of spectral gap 
  inequalities for a sequence of renormalised measures.
  We apply our method to hierarchical versions of
  the $4$-dimensional $n$-component $|\varphi|^4$ model
  at the critical point and its approach from the high temperature side,
  and of the $2$-dimensional Sine-Gordon and the Discrete Gaussian models
  in the rough phase (Kosterlitz--Thouless phase).
  For these models, we show that the spectral gap decays polynomially
  like the spectral gap of the dynamics of a free field (with a logarithmic correction for the $|\varphi|^4$ model),
  the scaling limit of these models in equilibrium.
\end{abstract}

\section{Introduction and main results}
\label{sec:intro}

\subsection{Introduction}

Spin systems in equilibrium have been studied by a variety of methods
which led to a very complete mathematical description of the physical phenomena occurring in the different regimes of the phase diagrams. 
This includes in particular
a good understanding of the critical phenomena in a wide range of models.  
Much less is known about the Glauber dynamics of spin systems.
For sufficiently high temperatures, it is well understood that the dynamics relaxes exponentially fast towards the equilibrium measure.
For the Ising model, the much more difficult question of fast relaxation in the entire uniqueness regime was addressed in 
\cite{MR1269388,MR1269387,MR3844472,MR3486171}.
In the phase transition regime, at least for scalar spins, the dynamical behaviour is governed by the interface motion and 
the relaxation becomes much slower. 
In particular, the relaxation time diverges  as  the system size increases, 
but the dynamical scaling depends strongly on the choice of the boundary conditions. We refer to  \cite{MR1746301}
for a review, as well as to \cite{MR2663712,MR3017041} for more recent results.
In the vicinity of the critical point, strong correlations develop and as a consequence the dynamic
evolution slows down but is no longer driven by phase separation. 
Even though the critical dynamical behaviour has been well investigated in physics
\cite{halperinhohenberg1977}, mathematical results are scarce. 
The only cases for which polynomial lower bounds on the relaxation or mixing times are known are  
the two-dimensional Ising model \cite{MR2945623}, exactly at the critical point,
the Ising model on a tree \cite{MR2585995}, both without sharp exponent, and
the mean-field Ising model which is fully understood \cite{MR2550363,MR2506768}.

\medskip

The goal of this paper is to investigate the dynamical relaxation of  hierarchical models near and at the critical point by deriving the scaling of the spectral gap in terms of
the temperature (or the equivalent parameter of the model) and the system size.

Since their introduction by Dyson \cite{MR0436850} and the pioneering work of Bleher--Sinai \cite{MR1552598},
hierarchical models have been a stepping stone to develop renormalisation group arguments.
At equilibrium, sharp results on the critical behaviour of a large class of models have typically been
obtained first in a hierarchical framework and then later been extended to the Euclidean lattice.
For the equilibrium problem, the hierarchical framework results in a significant technical simplification,
but the results and methods have turned out to be surprisingly parallel to the case of the Euclidean lattice $\Z^d$.
This point of view is discussed in detail in \cite{rg-brief},
to which we also refer for an overview of results and references.
Building on the results for the hierarchical set-up for the equilibrium problem,
we  derive recursive relations on the spectral gap after one renormalisation step. 
This enables us to obtain sharp asymptotic behaviour  of the  spectral gap for
large size Sine-Gordon model in the rough phase (Kosterlitz--Thouless phase)
and for the $|\varphi|^4$ model in the vicinity of the critical point. 
The scaling coincides in both cases with the one of the  hierarchical free field 
dynamics (with a logarithmic corrections for the $|\varphi|^4$ model)
which describes the equilibrium scaling limit of these models.
Renormalisation procedures have already been used to analyze spectral gaps for Glauber dynamics,
see e.g., \cite{MR1746301}, but the renormalisation scheme used in this paper is different
and allows to keep sharp control from one  scale to the next.

\medskip

After recalling the definitions of the hierarchical models and presenting the results of this paper in Section \ref{sec: Models and results}, we implement, in Section \ref{sec:recursion}, 
the induction procedure to control the spectral gap after one renormalisation step. 
We believe that our method could be extended beyond the hierarchical models, thus the induction is described in a general framework under some assumptions which can then be checked for each microscopic models.
This is completed in Section~\ref{sec:phi4} for the hierarchical $|\varphi|^4$ model,
and in Section~\ref{sec:sg} for the hierarchical Sine-Gordon and the Discrete Gaussian models.
Proving these assumptions requires establishing
stronger control on the renormalised Hamiltonians in the \emph{large field region}
than needed when studying the renormalisation at equilibrium
(convexity instead of probabilistic bounds).
Such convexity for large fields is the main challenge to extend the method of this paper
beyond hierarchical models.

\subsection{Spectral gap}
\label{sec:intro-gap}

Let $\Lambda$ be a finite set and $M$ be a symmetric matrix of spin couplings acting on $\R^\Lambda$.
We consider possibly vector-valued spin configurations $\varphi = (\varphi_x^i)_{x\in\Lambda, i=1,\dots, n} \in \R^{n\Lambda} = \{ \varphi: \Lambda \to \R^n\}$,
with action of the form
\begin{equation} \label{e:Hdef}
  H(\varphi) = \frac12(\varphi,M\varphi) + \sum_{x\in\Lambda} V(\varphi_x), \quad (\varphi \in \R^{n\Lambda}),
\end{equation}
for some potential $V: \R^n\to\R$, where $(\cdot,\cdot)$ is the standard inner product on $\R^{n\Lambda}$.
In the vector-valued case $n>1$, we assume that $V$ is $O(n)$-invariant and that
$M$ acts by $(M\varphi)_x^i = (M\varphi^i)_x$ for $i=1,\dots, n$ and $x\in\Lambda$.
The associated probability measure $\mu$ has expectation
\begin{equation}
\label{eq: mu mesure}
  \bbE_\mu (F) = \frac{1}{Z} \int_{\R^{n\Lambda}} e^{-H(\varphi)} F(\varphi) \, d\varphi, \qquad
  Z = \int_{\R^{n\Lambda}} e^{-H(\varphi)} \, d\varphi.
\end{equation}
The (continuous) Glauber dynamics associated with  $H$ is given by the system of stochastic differential equations
\begin{equation} \label{e:Langevin}
  d\varphi_x = -\partial_{\varphi_x} H(\varphi) \, dt +  \sqrt{2} dB_x, \quad (x\in\Lambda),
\end{equation}
where the $B_x$ are independent $n$-dimensional standard Brownian motions.
(The continuous Glauber dynamics is also referred to as overdamped Langevin dynamics;
to keep the terminology concise we  use the term Glauber dynamics in the continuous as well as in the discrete case.)
By construction, the measure $\mu$ defined in \eqref{eq: mu mesure} is invariant with respect to this dynamics.
Its relaxation time scale is controlled by the inverse of the spectral gap of the generator of the Glauber dynamics
(see, for example, \cite[Proposition~2.1]{MR2213477}).
By definition, the spectral gap is the largest constant $\gamma$ such that,
for all functions $F: \R^{n\Lambda} \to \R$ with bounded derivative,
\begin{eqnarray}
\label{e:gap}
  \var_\mu(F)= \bbE_\mu (F^2) - \bbE_\mu(F)^2
  \leq 
  \frac{1}{\gamma} \bbE_\mu (\nabla F, \nabla F)  \, .
\end{eqnarray}

Our goal in this paper is to determine the order of the spectral gap $\gamma$ for specific choices of $M$ and $V$,
when the size of the domain $\gL$ diverges.
For statistical mechanics,
the setting of primary interest is a finite domain of a lattice or a
torus $\Lambda = \Lambda_N \subset \Z^d$ whose size tends to infinity,
and a short-range spin coupling matrix $M$, such as the discrete Laplace operator $-\Delta$ on $\Lambda$.
The discrete Laplace operator has a nontrivial kernel.
This degeneracy must be removed through boundary conditions or an external field (mass term).
For example, for a cube of side length $D$ with Dirichlet boundary conditions,
the smallest eigenvalue is of order $D^{-2}$.
In the hierarchical set-up that we consider,
we impose an external field instead of boundary conditions whose size is such that the
smallest eigenvalue is at least of order $D^{-2}$.

For $V=0$, or more generally for quadratic potentials which can be absorbed in the definition of $M$,
the spectral gap $\gamma$ of the generator of the Langevin dynamics is equal to
the minimal eigenvalue of $M$ (assuming that it is positive) 
by explicit diagonalisation of \eqref{e:Langevin}.
More generally, for $V$ any strictly convex potential
satisfying $V''(\varphi) \geq c > 0$ uniformly in $\varphi$,
the Bakry--Emery criterion \cite{MR889476} implies that
\begin{equation}
  \gamma \geq \lambda + c,
\end{equation}
where $\lambda$ is the smallest eigenvalue of $M$.
Under these conditions, $\mu$ actually satisfies a logarithmic Sobolev inequality with the same constant.
In particular, under these assumptions, the dynamics relaxes quickly, in time of order $1$.

The situation is much more subtle when the potential $V$ is non-convex.
Indeed, as the potential becomes sufficiently non-convex,
the static measure $\mu$ typically undergoes phase transitions.
In fact for unbounded spin systems on a lattice, the relaxation of the Glauber dynamics has been
controlled only in the uniqueness regime under some assumptions  on the decay of correlations 
\cite{MR1715549, MR1936110, MR1764741, MR1704666, MR1837286} 
(see also \cite{MR3098070} for  conservative dynamics).
By considering hierarchical models, we are able to show that the spectral gap decays
polynomially in the vicinity of a phase transition.
The idea is to decompose the measure into renormalised fields such that at each scale,
conditioned on a block spin field, the renormalised potential remains strictly convex.
By induction, we then obtain a recursion on the spectral gaps of the renormalised measures.

Before stating the results,  we first turn to the definition of the hierarchical models.

\subsection{Hierarchical Laplacian}
\label{sec:hierGFF}

The Gaussian free field (GFF) on a finite approximation to $\Z^d$ is a Gaussian field whose
covariance is the Green function of the Laplace operator. The Green function
has decay $|x|^{-(d-2)}$ in dimensions $d\geq 3$ and has asymptotic behaviour $-\log |x|$ in dimension $d=2$.
The hierarchical Laplace operator is an approximation to the Euclidean one in the sense that its Green function
has comparable long-distance behaviour, but simpler short-distance structure.
The study of hierarchical models has a long history in statistical mechanics going back to
\cite{MR0436850,MR1552598}; recent studies and uses of hierarchical models include
\cite{MR880526,MR1143413,MR3422923,MR3526836,1302.5971} and references.

\begin{figure}
\begin{center}
  \input{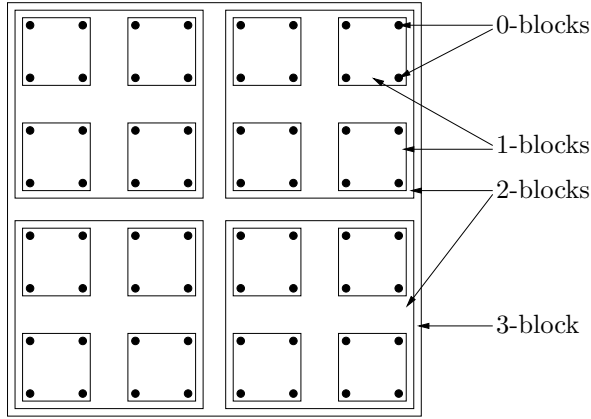}
\end{center}
\caption{Blocks in ${\cal B}_j$ for $j=0,1,2,3$ where $d=2$, $N=3$, $L=2$.
  \label{fig:hier1}}
\end{figure}

There is some flexibility in the choice of the hierarchical field; the precise choice is not significant.
Let $\Lambda = \Lambda_N$ be a cube of side length $L^N$ in $\Z^d$, $d \geq 1$, for some fixed integer $L>1$ and $N$ eventually chosen large.
For scale $0\leq j \leq N$, we decompose $\Lambda$ as the union of disjoint blocks of side lengths $L^j$ denoted $B \in \mathcal{B}_j$;
see Figure~\ref{fig:hier1}.
In particular, $\mathcal{B}_0 = \Lambda$ and the unique block in $\mathcal{B}_N$ is $\Lambda_N$ itself.
The blocks have the structure of a $K$-ary tree with $K=L^d$,
height $N$ and the leaves are indexed by the sites $x \in \Lambda_N$.

For scale $j$ and $x\in\Lambda$, let $B_{j}(x)$ be the block in $\mathcal{B}_j$ containing $x$.
As in \cite[Chapter 4]{rg-brief},
define the block averaging operators, which are the projections
\begin{equation} \label{e:QPdef}
  (Q_jf)_x = \frac{1}{|B_{j}(x)|}\sum_{y \in B_{j}(x)} f_y, \quad \text{for $f\in \R^\Lambda$.}
\end{equation}
Let $P_j = Q_{j-1}-Q_{j}$.
Then $P_1, \dots, P_N, Q_N$ are orthogonal projections on $\R^\Lambda$ with disjoint ranges whose direct sum
is the full space.
An operator on $\R^\Lambda$ is \emph{hierarchical} if it is diagonal with respect to this decomposition.
To obtain a hierarchical Green function with the scaling of the Green function of the usual Laplace operator,
we choose the hierarchical Laplace operator on $\Lambda$ to be
\begin{equation} 
\label{e:DeltaH}
-\Delta_{H}
= \sum_{j=1}^N
L^{-2(j-1)} P_j
.
\end{equation}
Like the usual Laplacian on the discrete torus, this choice of hierarchical Laplacian annihilates the constant functions.
The definition implies that the Green function of the hierarchical Laplacian has comparable long distance behaviour
to that of the nearest-neighbour Laplacian: for $|x-y|^{-1} \ll m$,
\begin{alignat}{2}
  \label{e:DeltaH-asymp1}
  (-\Delta_H+m^2)^{-1}_{xy} &\asymp |x-y|^{-(d-2)} &&\qquad (d>2),
  \\
  \label{e:DeltaH-asymp2}
  (-\Delta_H+m^2)^{-1}_{xy} &= c_N - \sigma \log_L |x-y| + O(1) &&\qquad (d=2),
\end{alignat}
where $|x-y|$ is the Euclidean distance and $\sigma = 1-L^{-2}$ is a constant independent of $N$,
and $A \asymp B$ denotes that $A/B$ and $B/A$ are bounded by $N$-independent constants.
On the other hand, the hierarchical Laplacian has coarser small distance behaviour
than the lattice Laplacian.
For a more detailed introduction to the hierarchical Laplacian,
as well as discussion of its relation to the lattice Laplacian,
see \cite[Chapters~3--4]{rg-brief}.

\subsection{Models and results}
\label{sec: Models and results}

In Section~\ref{sec:recursion},
we are going to develop a quite general multiscale strategy to estimate the spectral gap of
(critical) spin systems by using a renormalisation  group approach.
We will then apply this method to
the $n$-component $|\varphi|^4$ model and the Sine-Gordon model
as well as the degenerate case of the Discrete Gaussian model.
These models correspond to choices of the potential~$V$ defined now.
In the setting of the hierarchical spin coupling, we study
the critical region of the $|\varphi|^4$ model and the rough phase of the Sine-Gordon and Discrete Gaussian models.
These are both settings for which the renormalisation group
method is well developed for the equilibrium case, and we use this as input.

\subsubsection{Ginzburg--Landau--Wilson $|\varphi|^4$ model}

The $n$-component $|\varphi|^4$ model is defined by the double-well potential (if $n=1$),
respectively Mexican hat shaped potential (if $n\geq 2$),
\begin{equation} \label{e:potential-phi4}
  M=-\Delta_H,
  \quad V(\varphi) = \frac14 g|\varphi|^4 + \frac12 \nu|\varphi|^2, \quad (g>0, \; \nu \in \R).
\end{equation}
Our interest is in the case $\nu<0$, when this potential is non-convex.
The $|\varphi|^4$ model is a prototype for a spin model with $O(n)$ symmetry.
The spatial dimension $d=4$ is critical for this model (see, e.g., \cite{rg-brief}).
The following theorem quantifies the decay of the spectral gap in the four-dimensional
hierarchical $|\varphi|^4$ model 
when approaching the critical point from the high temperature side.

\begin{theorem} 
\label{thm:gap-phi4}
Let $\gamma_N(g,\nu,n)$ be the spectral gap of the hierarchical $n$-component $|\varphi|^4$ model on $\Lambda_N$
with dimension $d=4$ (as defined above).
Let $L \geq L_0$, and let $g>0$ be sufficiently small.
There exists $\nu_c = \nu_c(g,n) = -C(n+2)g + O(g^2)$ and a constant $\delta \geq 1$ (independent of $n$)
such that for $t_0 \geq t \geq cL^{-2N}$, where $t_0$ is a small constant,
\begin{equation} \label{e:gap-phi4}
  c_1 t (-\log t)^{-\delta (n+2)/(n+8)}
  \leq
  \gamma_N(g,\nu_c+t,n)
  \leq
  c_2 t (-\log t)^{-(n+2)/(n+8)},
\end{equation}
provided that $N$ is sufficiently large. In particular, $t\geq cL^{-2N}$ is allowed to depend on $N$.
\end{theorem}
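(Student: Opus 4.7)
The plan is to implement the scale-by-scale renormalisation group recursion sketched in the introduction, taking the equilibrium hierarchical RG analysis of the $|\varphi|^4$ model as input. The hierarchical decomposition $\R^{n\Lambda} = \bigoplus_{j=1}^N \mathrm{Ran}(P_j) \oplus \mathrm{Ran}(Q_N)$ diagonalises $-\Delta_H$ with eigenvalue $L^{-2(j-1)}$ on $\mathrm{Ran}(P_j)$, so the Gaussian part of $H$ factorises across scales. Writing $\varphi = \zeta_1 + \cdots + \zeta_N + Q_N\varphi$, successively integrating out $\zeta_1,\ldots,\zeta_j$ produces a renormalised potential $U_j$ on the coarser block-averaged field living on $\mathcal{B}_j$. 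The equilibrium RG of \cite{rg-brief} controls the flow of the effective coupling $g_j$ and mass $\nu_j$: for $\nu = \nu_c + t$, the flow stays inside the critical window until the scale $j^\star$ with $L^{-2j^\star} \asymp t$ (or $j^\star = N$ if $t \asymp L^{-2N}$), while $g_j$ decays marginally like $1/j$ with a coefficient proportional to $n+8$.

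For the upper bound in \eqref{e:gap-phi4} I plug the test function $F(\varphi) = v \cdot \sum_{x\in\Lambda}\varphi_x$, for a fixed unit $v \in \R^n$, into the variational form of \eqref{e:gap}. Then $\bbE_\mu(\nabla F, \nabla F) \asymp |\Lambda|$, while the $O(n)$-invariance of $\mu$ gives $\var_\mu(F) \asymp |\Lambda|\chi_N$ with $\chi_N$ the susceptibility. The hierarchical equilibrium analysis yields $\chi_N \asymp t^{-1}(-\log t)^{(n+2)/(n+8)}$, the signature asymptotic carrying the marginal logarithmic correction specific to $d=4$, and division produces $\gamma_N \leq C\, t(-\log t)^{-(n+2)/(n+8)}$.

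For the lower bound I iterate the one-step recursion developed in Section~\ref{sec:recursion}. At each scale $j$, the recursion bounds $\gamma^{-1}$ in terms of the inverse gap of the fluctuation measure of $\zeta_j$ conditional on the coarser field, plus the inverse gap of the renormalised measure with potential $U_j$. For the fluctuation step, Bakry--Emery applied to the conditional measure delivers a gap of order $L^{-2(j-1)}$ up to convexity corrections from the conditional potential. Telescoping across scales $j=1,\ldots,N$, the dominant contribution comes from the slowest mode at scale $j^\star$, giving the factor $L^{-2j^\star} \asymp t$; dressing this by the accumulated RG flow of $g_j$ over the $j^\star \asymp -\log t$ scales yields the logarithmic correction $(-\log t)^{-\delta(n+2)/(n+8)}$, with $\delta \geq 1$ absorbing losses from using pointwise convexity rather than the sharp equilibrium flow.

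The main obstacle, already flagged in the introduction, is establishing \emph{pointwise} convexity of the renormalised Hamiltonians at each scale. The equilibrium RG of \cite{rg-brief} produces probabilistic control on $U_j$ (moment and tail estimates), but Bakry--Emery requires uniform lower bounds on $\nabla^2 U_j(\varphi)$ as a quadratic form. In the bulk of field space, smallness of the RG remainder combined with the Gaussian part delivers such convexity; in the large-field region, the quartic confinement $\tfrac14 g|\varphi|^4$ dominates lower-order remainder terms, but showing that renormalisation preserves this large-field convexity quantitatively and uniformly across scales is the technical heart of Section~\ref{sec:phi4}, and the likely source of the suboptimality $\delta \geq 1$ in the lower-bound exponent.
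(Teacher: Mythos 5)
Your proposal follows essentially the same route as the paper: upper bound from the finite-volume susceptibility asymptotics via the linear test function $F=v\cdot\sum_x\varphi_x$, lower bound by iterating a scale-by-scale recursion with the equilibrium hierarchical RG flow of \cite{rg-brief} as input, with pointwise convexity of the renormalised Hamiltonian in the large-field region correctly identified as the central new difficulty relative to the equilibrium analysis. Two imprecisions are worth flagging, as they hide the actual mechanism of the recursion. First, you describe the one-step bound as ``inverse gap of the fluctuation measure plus inverse gap of the renormalised measure.'' In the paper's recursion (Theorem~\ref{thm:recurrence-BL}), the variance decomposes as $\bbA_1 + \bbA_2$, and $\bbA_2$ is \emph{not} controlled by simply applying the gap of $\mu_{j+1}$ to $F$; it is the variance of the conditional expectation $\tilde F(\varphi) = \E_{\mu_\varphi}(F)$, and one must relate $\nabla\tilde F$ back to $\nabla F$. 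That step uses the Helffer--Sj\"ostrand representation (Lemmas~\ref{lem:A2} and~\ref{lem:secondterm}) and produces the factor $(1-\epsilon_j)^{-2}$ rather than $(1-\epsilon_j)^{-1}$; it is why the recursion is run at the level of the matrix-valued Brascamp--Lieb quadratic form $D_j$ rather than scalar Bakry--Emery gaps. Second, you propose the $P_j$-decomposition $\varphi=\zeta_1+\cdots+\zeta_N+Q_N\varphi$, but the paper deliberately switches to the $Q_j$-decomposition $C_j=\lambda_j Q_j$ of \eqref{e:DeltaHdecompQ}: under that choice the conditional fluctuation field at each scale lives in the block-constant subspace $X_j$, so Assumption~(A1) reduces to a single-variable convexity estimate on $V_j(B)\circ i_B$, and the Hessian bound \eqref{e:HessiB} follows cleanly from $C_j\leq\vartheta_j^2 L^{2j}Q_j$; translating the results of \cite{rg-brief} (stated for $P_j$) into this convention is precisely the content of Appendix~\ref{app:phi4-rg-pf}. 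Neither issue is a fatal gap, but a proof built literally on scalar-gap telescoping and the $P_j$-decomposition would not reproduce the sharp $(1-\epsilon_j)^{-2}$ factor nor the clean block-constant reduction on which the large-field argument of Section~\ref{sec:phi4} relies.
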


The proof is postponed to Section \ref{sec:phi4}.
The same proof also implies easily that for $t \geq t_0$ the gap is of order $1$, but since
we are interested in the more delicate approach of the critical point, we omit the details.
Together with this, Theorem \ref{thm:gap-phi4} implies that 
for the $|\varphi|^4$ model,
the spectral gap is of order $1$ in the high temperature phase, $\nu > \nu_c$ independently of $N$,
and as the critical point is approached the spectral gap scales like that of the free field,
with a logarithmic correction.
We expect that $\gamma \sim Ct(-\log t)^{-z}$
for a universal critical exponent $z = z(n) \geq \frac{n+2}{n+8}$, which our method does not determine
(see also \cite{halperinhohenberg1977}).
The upper bound 
follows easily from the estimates derived at equilibrium in 
\cite[Theorem~4.2.1]{rg-brief} and we also use the renormalisation group
flow constructed in \cite{rg-brief} as input to prove the lower bound
(see also \cite{MR693402}).
References for the renormalisation group analysis of the $|\varphi|^4$ model on $\Z^4$,
with different approaches,
include \cite{MR790736,MR892924,MR892925}, \cite{MR882810} and
\cite{MR3332938,MR3332939,MR3332940,MR3332941,MR3332942,MR3345374,MR3339164,MR3269689}.

\subsubsection{Sine-Gordon model}

The Sine-Gordon model is defined by a $2\pi$-periodic potential and coupling matrix proportional to the inverse temperature $\beta$, i.e.,
\begin{equation} \label{e:potential-sinegordon}
  M=-\beta\Delta_H   \quad (\beta>0), \qquad
  \text{$V(\varphi)$ is even and
    $2\pi$-periodic}.
\end{equation}

The corresponding energy $H(\varphi)$ in \eqref{e:Hdef} is invariant under
$\varphi \mapsto \varphi + 2\pi n \b 1$ for any $n \in \Z$,
where $\b 1$ denotes the constant function on $\Lambda$ with $\b 1_x = 1$ for all $x\in\Lambda$.
To break this non-compact symmetry, we 
add the external field and consider
\begin{equation}
  H_\epsilon(\varphi)
  =
  H(\varphi)
  + \frac{\epsilon}{2} \pa{\frac{1}{\sqrt{|\Lambda|}}\sum_{x} \varphi_x}^2
  =
  \frac{\beta}{2} (\varphi,-\Delta_H \varphi) + \sum_x V(\varphi_x)
  + \frac{\epsilon}{2} \pa{\frac{1}{\sqrt{|\Lambda|}}\sum_{x} \varphi_x}^2
  .
\end{equation}
As previously, we are interested in the large volume limit $|\Lambda|\uparrow \infty$;
to avoid some uninteresting technicalities, we will make the convenient choice $\epsilon =\beta L^{-2N}$.
If $V$ was, e.g., the double well potential $V(\varphi) = \varphi^4-\varphi^2$ instead of a periodic potential as above,
then the corresponding measure has a uniform spectral gap for any $\beta>0$ sufficiently small (see, e.g., \cite{MR3926125}).
The following theorem shows that this is not the case for periodic potentials: the spectral gap decreases to $0$.
Thus that the resulting models are critical, in the sense of slow decay of correlations, is also reflected in their dynamics.

For the statement of the theorem,
denote by $\hat V(q) = (2\pi)^{-1}  \int_{-\pi}^\pi e^{iq\varphi} V(\varphi) \, d\varphi$
the Fourier coefficient of the $2\pi$-periodic function $V$,
and let $\sigma = 1-L^{-2}$ be the constant in \eqref{e:DeltaH-asymp2} with dimension $d=2$.

\begin{theorem} 
\label{thm:gap-sg}
Let $\gamma_N(\beta,V)$ be the spectral gap of the hierarchical Sine-Gordon model on $\Lambda_N$ with dimension $d=2$ (as defined above).
Assume $\sum_{q\in \Z \setminus \{0\}} (1+q^2) |\hat V(q)|$ is  small enough.
Let $0 < \beta < \sigma/(4\log L)$ and let $\epsilon = \beta L^{-2N}$.
There are $\kappa\in (0,1)$ and $c>0$ such that the spectral gap scales as
\begin{equation}
  cL^{-2N}
  \leq
  \gamma_N(\beta, V)
  \leq
  L^{-2N}(1-O(\kappa^N))
\end{equation}
provided that $N$ is sufficiently large.
\end{theorem}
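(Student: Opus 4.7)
\medskip

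\noindent\textbf{Proof plan.}

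\emph{Decomposition and setup.} My plan is to apply the general recursive framework sketched in Section~\ref{sec:recursion}. With the external field, the quadratic form in $H_\epsilon$ is $\beta(-\Delta_H) + \epsilon\, \Pi_0$, where $\Pi_0 = Q_N$ is the projection onto the constant mode. Using the orthogonal decomposition $I = P_1+\cdots+P_N+Q_N$, this operator is hierarchical with eigenvalues $\beta L^{-2(j-1)}$ on $\mathrm{Ran}(P_j)$ and $\epsilon = \beta L^{-2N}$ on $\mathrm{Ran}(Q_N)$. Accordingly, I decompose $\varphi = \zeta_1 + \cdots + \zeta_N + \zeta_{N+1}$ into independent Gaussian fluctuation fields with covariances $(\beta L^{-2(j-1)})^{-1}P_j$ and $\epsilon^{-1}Q_N$. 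The renormalised measure at scale $j$ is obtained by integrating out $\zeta_1,\dots,\zeta_j$, producing an effective potential $V_j$ living on the block-spin field at scale $j$.

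\emph{Lower bound via induction on scales.} At each scale $j$, conditional on the coarser block-spin field, the measure has a quadratic Gaussian part with spectral gap $\beta L^{-2(j-1)}$ (resp.\ $\epsilon$ at scale $N+1$) plus the renormalised potential $V_j$. The induction of Section~\ref{sec:recursion} shows that if each $V_j$ is sufficiently convex on the relevant scale, the conditional gaps combine into a global spectral gap bounded below by the minimum over scales of those conditional gaps, i.e.\ of order $\epsilon = \beta L^{-2N}$. The input I need is therefore that the $V_j$ are $C^2$ with Hessian controlled so that Bakry--Emery applies to each conditional measure uniformly in the conditioning and in $N$. For the hierarchical Sine-Gordon model in the rough phase $\beta < \sigma/(4\log L)$, the equilibrium renormalisation group analysis (to be recalled in Section~\ref{sec:sg}) shows that, starting from the periodic $V$ with $\sum_{q\neq 0}(1+q^2)|\hat V(q)|$ small, each $V_j$ remains a periodic perturbation of a quadratic of size $O(\kappa^j)$ for some $\kappa<1$, hence in particular $\|V_j''\|_\infty$ is small. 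Since the Gaussian gap at the deepest scale is $\epsilon$ and the perturbation is much smaller than $\epsilon$ in a suitable sense, Bakry--Emery at each scale yields the lower bound $\gamma_N \geq c L^{-2N}$.

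\emph{Upper bound via a test function.} For the matching upper bound I test the spectral gap inequality on the normalised mean $F(\varphi) = |\Lambda|^{-1/2}\sum_x \varphi_x$. A direct computation gives $\bbE_\mu(\nabla F,\nabla F) = 1$, so \eqref{e:gap} reads
\begin{equation}
  \gamma_N \leq \frac{1}{\var_\mu(F)}.
\end{equation}
Since $F$ depends only on the zero mode $\zeta_{N+1}$, its variance under $\mu$ equals the variance of the zero mode under the last-scale renormalised measure, whose quadratic part has mass $\epsilon$. The equilibrium RG analysis provides that $\var_\mu(F) = \epsilon^{-1}(1 - O(\kappa^N))$, because the renormalised potential at the final scale contributes only a correction of order $\kappa^N$ to the Gaussian variance. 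This yields the stated upper bound $\gamma_N \leq L^{-2N}(1 - O(\kappa^N))$.

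\emph{Main obstacle.} The delicate step is establishing the convexity (rather than probabilistic smallness) of each renormalised potential $V_j$ on the full field space, including the large field region. For Sine-Gordon this is tractable because $V$ is bounded with bounded second derivative via the Fourier hypothesis, so the renormalised potentials inherit a similar $L^\infty$ bound on the Hessian uniformly in the block-spin, and the contraction $\kappa^j$ of the coupling along the rough phase RG flow ensures these bounds stay small enough to run Bakry--Emery scale by scale. Verifying this uniformly in $N$ and in the conditioning field is the key technical content to be supplied in Section~\ref{sec:sg}; once in place, the recursion of Section~\ref{sec:recursion} delivers the theorem.
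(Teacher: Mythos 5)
Your plan follows the paper's route: decompose the Gaussian into scales, use the Sine--Gordon RG contraction $\|V_j - \hat V_j(0)\| = O(\kappa^j)$ to verify convexity of each conditional measure, iterate the spectral gap estimate across scales via Section~\ref{sec:recursion}, and obtain the matching upper bound by testing on the mean of the field and using $\var_\mu(F) = \epsilon^{-1}|\Lambda|(1-O(\kappa^N))$. All the key ingredients are present and correctly identified, including the role of the Fourier norm controlling $V''$ uniformly and the observation that the $\kappa^j$ contraction makes $\sum_j \epsilon_j$ summable so the product of Brascamp--Lieb constants stays bounded.

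One concrete issue: you take the \emph{non-redundant} decomposition with covariances $(\beta L^{-2(j-1)})^{-1}P_j$ and $\epsilon^{-1}Q_N$. This does not directly fit Assumptions (A2)--(A3) of Theorem~\ref{thm:recurrence-BL}. In that framework the matrix $D_+$ carried down from scale $j+1$ is supported in the block-spin space $\mathrm{Ran}(Q_{j+1}) \subset \mathrm{Ran}(Q_j)$, and (A2) requires $D_+ = D_+ Q$ where $Q$ projects onto the image of the scale-$j$ covariance. With $C_j = c_j P_j$ one has $\mathrm{Ran}(P_j)\perp\mathrm{Ran}(Q_j)$, so $D_+Q = D_+P_j = 0$, forcing $D_+=0$ and breaking the induction. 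The paper instead works with the \emph{redundant} decomposition $C_j = \lambda_j Q_j$ of \eqref{e:DeltaHdecompQ}, where $\lambda_j = (\sigma/\beta)L^{2j}$; the nested ranges $\mathrm{Ran}(Q_0)\supset\mathrm{Ran}(Q_1)\supset\cdots$ make (A2), (A3) automatic, and the Hessian bound $C_j^{1/2}(\He_{X_j}V_j)C_j^{1/2}\geq -\epsilon_j Q_j$ is then immediate from the $L^\infty$ estimate on $(V_j\circ i_B)''$. You should swap to this block-averaging decomposition before running the recursion; the rest of your plan goes through unchanged. (Also, ``Bakry--Emery at each scale'' is a loose description: the actual engine is the conditional Brascamp--Lieb inequality combined with Helffer--Sj\"ostrand, as in Lemmas~\ref{lem:A1}--\ref{lem:secondterm}, but that is a naming rather than a logical issue.)
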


The Sine-Gordon model is dual to a Coulomb gas model (see, e.g., \cite{MR2523458,MR634447}).
Under this duality, the inverse temperature of the Coulomb gas model is proportional
to the temperature $1/\beta$ of the Sine-Gordon model. We here primarily view 
the Sine-Gordon model as a spin model, rather than as a description of the Coulomb gas,
and therefore choose $\beta$ instead of $1/\beta$ in \eqref{e:potential-sinegordon}.
Note that the usual normalisation of the logarithm in \eqref{e:DeltaH-asymp2} is
$c_N - \frac{1}{2\pi} \log |x| + O(1)$
for the Laplace operator on $\Z^2$.
For this normalisation of the hierarchical Laplace operator, the hierarchical critical inverse
temperature becomes $1/\beta =  8\pi$.
This is only approximately true
in the Euclidean model because of a field-strength (stiffness) renormalisation
which is not present in the hierarchical model.
For the critical inverse temperature
$\beta = \sigma/(4\log L)$, we expect that $\gamma \sim C L^{-2N} N^{-z}$
for a universal critical exponent $z>0$. For the presence of logarithmic corrections to the free field scaling in the static case, see \cite{1311.2237}.
Our theorem uses the set-up for the renormalisation group for this model of \cite{MR2523458}
(see also \cite{MR1003504}).
References for the Sine-Gordon model on $\Z^2$ include \cite{MR634447} and \cite{MR2917175,1311.2237,
MR1063215,MR1777310,MR1240586,MR1101688}.

\subsubsection{Discrete Gaussian model}

We conclude this section with a discrete model which is closely linked to the Sine-Gordon model.
The Discrete Gaussian model is an integer-valued field with expectation given by
\begin{equation}
  \bbE_ \mu(F) = \frac{1}{Z} \sum_{\sigma \in (2\pi\Z)^\Lambda} F(\sigma) 
  e^{-\frac{\beta}{2}(\sigma,-\Delta_H\sigma) - \frac{\epsilon}{2} \p{\frac{1}{\sqrt{|\Lambda|}}\sum_{x}\sigma_x}^2}
  \quad
  \text{for $F: (2\pi\Z)^\Lambda \to \R$,}
  \qquad (\beta>0).
\end{equation}
Note that by rescaling $\beta$ and $\epsilon$ by $(2\pi)^2$, this definition is equivalent
to the one in which the model takes values in $\Z$ rather than $2\pi \Z$. The normalisation
by $2\pi$ is convenient for our proof.
The model formally takes the form of a degenerate Sine-Gordon model in which $e^{-V(\varphi)}$
is replaced by a sum of $\delta$-functions.
As the spins take integer values,
we now consider a discrete Glauber dynamics for the Discrete Gaussian model
with Dirichlet form
\begin{equation}
\label{eq: dirichlet discrete}
\frac{1}{2(2\pi)^2}
\sum_{x\in\Lambda}  \bbE_\mu  
\Big( (F(\sigma^{x+})-F(\sigma))^2  + (F(\sigma^{x-})-F(\sigma))^2 \Big),
\end{equation}
where $\sigma^{x\pm}$ is obtained from $\sigma \in (2\pi\Z)^\Lambda$ by increasing/decreasing 
the entry at $x\in \Lambda$ by $2\pi$.
Thus the corresponding spectral gap of this dynamics is the smallest constant $\gamma$ such that, for all
functions $F: (2\pi\Z)^\Lambda \to \R$ with finite variance,
\begin{equation}
  \var_\mu(F) \leq \frac{1}{\gamma} \frac{1}{2(2\pi)^2} \sum_{x\in\Lambda} 
 \bbE_\mu  \Big( (F(\sigma^{x+})-F(\sigma))^2  + (F(\sigma^{x-})-F(\sigma))^2 \Big).
\end{equation}

The following theorem is related to Theorem \ref{thm:gap-sg}.
It shows that the spectral gap of the Discrete Gaussian model scales like the one of the GFF.

\begin{theorem} \label{thm:gap-dg}
Let $\gamma_N(\beta)$ be the spectral gap of the hierarchical Discrete Gaussian model
on $\Lambda_N$ in dimension $d=2$ (as defined above).
For $\beta>0$ sufficiently small and $\epsilon = \beta L^{-2N}$,
there are $\kappa \in (0,1)$ and $c>0$ such that
\begin{equation}
   cL^{-2N} \leq \gamma_N(\beta) \leq
  L^{-2N} \p{ 1- O(\kappa^N)}
\end{equation}
provided that $N$ is sufficiently large.
\end{theorem}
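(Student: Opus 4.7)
The plan is to parallel the proof of Theorem~\ref{thm:gap-sg}; the integer-valued nature of the field is handled by Poisson summation, which recasts the Discrete Gaussian as a (formally singular) Sine-Gordon-type model with continuous field $\varphi\in\R^\Lambda$ and periodic potential given by $\prod_x \sum_{q_x\in\Z} e^{iq_x\varphi_x}$. Once this is done, the renormalisation group construction used for Theorem~\ref{thm:gap-sg} applies: for $\beta$ small, the effective charge activities shrink geometrically with scale, so the renormalised Hamiltonians converge to a Gaussian fixed point, just as in the small-$V$ regime of the Sine-Gordon model.

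For the upper bound, I would test~\eqref{eq: dirichlet discrete} with the zero-mode observable $F(\sigma) = |\Lambda|^{-1/2}\sum_{x\in\Lambda}\sigma_x$. Each $\pm 2\pi$ jump changes $F$ by $\pm 2\pi/\sqrt{|\Lambda|}$, so summing the $2|\Lambda|$ squared increments and multiplying by $1/(2(2\pi)^2)$ gives exactly $1$ on the right-hand side of~\eqref{e:gap}. The variance of $F$ under $\mu$ is controlled by the effective covariance of the zero mode: the Gaussian approximation gives $1/\epsilon = L^{2N}/\beta$, and the renormalisation group flow applied to the Poisson-dual representation shows that the true variance differs from this only by a multiplicative $(1+O(\kappa^N))$ correction coming from the charged (vortex) contributions. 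Combined with the Dirichlet form computation this gives $\gamma_N(\beta) \leq L^{-2N}(1-O(\kappa^N))$ as soon as $\beta$ is small enough.

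For the lower bound, I would apply the general multiscale recursion developed in Section~\ref{sec:recursion}. Via Poisson duality, $\mu$ is expressed as a weighted superposition of continuous Gaussians, and integrating out block averages at successive scales yields renormalised Hamiltonians on block-averaged fields. Two hypotheses of the recursion must be verified at each scale: first, that the renormalised Hamiltonians remain strictly convex in the large-field region, which for the Poisson-dual representation follows from the same small-activity estimates used for Theorem~\ref{thm:gap-sg}; and second, that the one-block conditional dynamics has a uniform spectral gap. The base scale is special: conditionally on all other spins, the law of $\sigma_x$ on $2\pi\Z$ is log-concave with bounded variance, and the $\pm 2\pi$-jump dynamics therefore has a spectral gap bounded below by a positive constant depending only on $\beta$.

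The principal obstacle is to bridge the continuous spectral-gap recursion of Section~\ref{sec:recursion} with the discrete jump dynamics of~\eqref{eq: dirichlet discrete}. One approach is to couple the Discrete Gaussian to a continuous auxiliary field at each scale so that~\eqref{eq: dirichlet discrete} dominates its continuous counterpart up to constants; another is to redo the induction directly in the discrete framework, which at the finest scale requires a discrete analogue of the Brascamp--Lieb/Bakry--Emery control that drives the continuous case. This is where the paper's requirement of convexity (rather than only probabilistic control) of the renormalised Hamiltonians in the large-field region is essential: it ensures that, after the discrete-to-continuous comparison, the conditional single-site measures remain log-concave uniformly across scales, which then feeds back into the inductive step and yields the matching lower bound $cL^{-2N}$.
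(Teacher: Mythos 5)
Your high-level plan — smooth the Discrete Gaussian into a Sine-Gordon model, then reuse the Sine-Gordon machinery, with the zero-mode test function for the upper bound and the single-site $\pm 2\pi$-jump gap on $2\pi\Z$ at the finest scale — is the right strategy and matches the paper. The single-site gap is indeed Lemma~\ref{lem:gap-dg-single} and the upper bound is computed essentially as you describe. However, the proposal contains a genuine gap in the lower bound, and a wrong turn in how the discrete-to-continuous passage is organised.

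The paper does not apply the RG to the formally singular Dirac-comb potential $\prod_x \sum_{q_x} e^{iq_x\varphi_x}$; that object has no finite norm. Instead it performs the first covariance step $C_{\geq 0} = C_0 + C_{\geq 1}$ with $C_0 = \beta^{-1}\mathrm{id}$ \emph{by hand}: convolving with a Gaussian of that variance replaces the Dirac comb by the smooth $2\pi$-periodic potential $V(\psi) = -\log \sum_{n\in 2\pi\Z}e^{-\beta(n-\psi)^2/2}$, which Lemma~\ref{lem:Vr} shows has norm $\|V-\hat V(0)\| = O(e^{-1/(2\beta)})$, so Theorem~\ref{thm:gap-sg} applies from scale $1$ onward. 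After this single regularisation there is no remaining discreteness in the multiscale recursion: $\mu_r$ is an honest continuous Sine-Gordon measure. Your statement that ``the conditional single-site measures remain log-concave uniformly across scales'' is therefore not correct — the discrete conditional measures $\mu_\psi$ on $2\pi\Z$ appear only at the base scale, and the convexity needed at scales $\geq 1$ is the usual Sine-Gordon Assumption~(A1), which follows from the small-norm bound.

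The missing ingredient is the mechanism that bridges the continuous gap of $\mu_r$ to the discrete Dirichlet form~\eqref{eq: dirichlet discrete}. You correctly identify this as the principal obstacle, but the two routes you sketch (re-coupling at every scale; or redoing the whole induction discretely, which would require a discrete Helffer--Sj\"ostrand/Brascamp--Lieb theory) are not what the paper does and would require substantial new work. The paper's bridge is a one-shot variance decomposition: writing $\E_\mu(F) = \E_{\mu_r}\big(\E_{\mu_\varphi}(F)\big)$ with $\mu_\varphi = \prod_x \mu_{\varphi_x}$ a product of single-site discrete measures, the law of total variance gives
\begin{equation*}
  \var_\mu(F) = \E_{\mu_r}\big(\var_{\mu_\varphi}(F)\big) + \var_{\mu_r}(G), \qquad G(\varphi) = \E_{\mu_\varphi}(F).
\end{equation*}
The first term is controlled by the tensorised single-site gap and is already a multiple of $\bbD(F)$. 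For the second term one invokes Corollary~\ref{cor:mur} ($\mu_r$ has inverse gap $O(1/\epsilon)$) and then must bound $|\nabla_{\varphi_x}G|^2 = \beta^2\cov_{\mu_\varphi}(F,\sigma_x)^2$ by the discrete Dirichlet form at site $x$; this is done via Cauchy--Schwarz using $\var_{\mu_{\varphi_x}}(\sigma_x) = O(1)$ together with the single-site spectral gap inequality. That covariance estimate is what actually transfers the continuous Brascamp--Lieb control to the jump dynamics, and without identifying it your proof does not close.
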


\section{Induction on renormalised Brascamp--Lieb inequalities}
\label{sec:recursion}

The Brascamp--Lieb inequality is a generalisation of the spectral gap inequality.
We here say that a measure $\mu$ on a finite-dimensional vector space $X$ with inner product $(\cdot,\cdot)$
satisfies a Brascamp--Lieb inequality with quadratic form
$D :X \to X$ if for all smooth functions $F$,
\begin{equation}
\label{eq: Brascamp--Lieb inequality}
  \var_\mu(F) \leq \bbE_\mu(\nabla F, D\nabla F).
\end{equation}
In particular, if the quadratic form satisfies $D \leq \id / \lambda$ for some $\lambda>0$,
then $\mu$ satisfies a spectral gap inequality with constant $\lambda$. 
In this section, we construct inductive bounds on Brascamp--Lieb inequalities
between renormalised versions of a spin system.
From these we deduce in particular an induction on the spectral gap.
In the remainder of this paper, we will verify the generic assumptions made in this section
in the specific cases of the hierarchical $|\varphi|^4$ and the Sine-Gordon models.

\subsection{Hierarchical decomposition}

While the results of this section are somewhat more general,
in the remainder of this paper we will apply them to hierarchical models.
We therefore recall their structure which can be helpful to keep in mind throughout this section.
From Section~\ref{sec:hierGFF}, first recall
the orthogonal projections $P_1, \dots, P_N, Q_N$ whose ranges span $\R^\Lambda$,
and the hierarchical Laplacian $\Delta_H$ (see \eqref{e:DeltaH}).
By spectral calculus, for any $m^2 > 0$, its Green function can be written as
\begin{equation} \label{e:DeltaHdecompP}
  (-\Delta_H + m^2 )^{-1}
  = \sum_{j=1}^N
  (1+m^2L^{2(j-1)})^{-1}
  L^{2(j-1)} P_j
  + m^{-2} Q_N.
\end{equation}
Using the definition $P_j = Q_{j-1}-Q_j$ to express the right-hand side of the last equation in terms of the
block averaging operators $Q_j$,
we can alternatively write
\begin{equation} 
\label{e:DeltaHdecompQ}
  (-\Delta_H + m^2 )^{-1}
  =\sum_{j=0}^N C_j
  \quad \text{with}  \quad C_j = \lambda_j Q_j,
\end{equation}
where
\begin{gather}
\label{eq: tilde sigma}
  \lambda_0(m^2) = \frac{1}{1+m^2},
  \qquad
  \lambda_N(m^2)
  = \frac{1}{m^2(1+m^2L^{2(N-1)})},
  \\
  \lambda_j(m^2) = L^{2j} \frac{(1-L^{-2})}{(1+m^2L^{2j})(1+m^2L^{2(j-1)})}
  \quad (0<j<N).
\end{gather}
The above spin coupling matrices generalise directly
to the $O(n)$-invariant vector-valued case,
in which all operators act separately on each component,
and we use the same notation in this case.
Thus the Laplacian and the covariances act on the space $X_0 = \R^{n\Lambda}$.

The covariances $C_j$ are degenerate and 
it is convenient to introduce the subspaces of $X_0=\R^{n\Lambda}$ on which they are supported.
Thus define $X_j$ to be the image of $C_j$, i.e., 
\begin{equation} \label{e:Xjdef}
  X_j = \{ \varphi \in \R^{n\Lambda}: \text{$\varphi|_B$ is constant for every $B \in \cB_j$} \},
\end{equation}
and, for $S \subset \Lambda$,
\begin{equation}
  X_j(S) = \{ \varphi \in X_j: \varphi_x = 0 \text{ for } x \not\in S \}.
\end{equation}
Then the Gaussian field $\zeta = \{\zeta_x\}_{x\in\Lambda}$ with values in $X_j$ and covariance $C_j$ can be realised as
\begin{equation}
\forall x \in B, \qquad \zeta_x = \zeta_B,
\end{equation}
where  $\{ \zeta_B \}_{B \in \cB_j}$ are independent Gaussian variables 
in $\R^n$ with variance 
$ \frac{\lambda_j}{|B_{j}(x)|}= L^{-dj} \lambda_j$.

In general, one can identify $\varphi \in X_j$ with $\{\varphi_B\}_{B\in\cB_j}$.
In the following, we are going to consider functions defined only on the subspaces $X_j$.
Let $F$ be such a function of class $C^2$ written as 
\begin{equation}
\{ \varphi_B \}_{B \in \cB_j} \in \R^{n |\cB_j|}
\mapsto F \big( \{ \varphi_B \} \big).
\end{equation}
Then $F$ can be extended as a smooth function on the whole of $\R^{n\Lambda}$ by setting, for example,
\begin{equation}
  \label{eq: moyenne F}
  F(\varphi) = F\pB{\Big\{\frac{1}{|B|}\sum_{x\in B}\varphi_x\Big\} }.
\end{equation}
For such $F$,
we will consider the gradient and the Hessian of $F$ only in the directions spanned by $\b 1_B$ 
so that we set 
\begin{equation} \label{e:grad-He-Q}
  \forall \varphi \in X_j, \qquad 
  \nabla_{X_j} F (\varphi)  = Q_j\nabla F(\varphi ),
  \quad
  \He_{X_j} F (\varphi) = Q_j\He F(\varphi)Q_j .
\end{equation}
As the gradient and the Hessian are projected only in the directions spanned by $\b 1_B$,
their restrictions on $X_j$ are independent of the way $F$ has been extended in $\R^{n\Lambda}$.

\subsection{Renormalised measure}
\label{sec:renorm-measure}

Let $X_0=\R^{n\Lambda}$ with the standard inner product $(\cdot,\cdot)$.
From now on,
we consider a Gaussian measure on $X_0$ whose covariance $C_{\geq 0}$ has a decomposition $C_{\geq 0}=C_0 + \cdots + C_N$,
with the $C_i$ symmetric and positive semi-definite.
We then consider the class of probability measures $\mu$ with expectation
\begin{equation}
 \bbE_\mu(F) = \frac{\E_{C_{\geq 0}} (e^{-V_0}F)}{\E_{C_{\geq 0}} (e^{-V_0})},
\end{equation}
for some potential $V_0$.
In particular, the models introduced in Section~\ref{sec:intro} are in this class, with
\begin{equation}
  V_0(\varphi) = \sum_{x \in \Lambda} V(\varphi_x) \quad \text{for $\varphi \in X_0 = \R^{n\Lambda}$,}
\end{equation}
and the decomposition \eqref{e:DeltaHdecompQ}.
Given such a decomposition $C_0+ \cdots+C_N$ and the potential $V_0$, we define the renormalised potentials $V_j$ inductively by
\begin{equation} 
\label{e:sg-V+}
e^{-V_{j+1}(\varphi)} = \E_{C_{j}}(e^{-V_j(\varphi+\zeta)}),
\end{equation}
where the expectation applies to $\zeta$.
(This definition includes $j=N$, but throughout this section we will only use $j<N$.)
The associated renormalised measure $\mu_j$ is then defined by the expectation
\begin{equation}
\label{eq: renormalisation du potentiel}
\E_{\mu_j}(F) = \frac{\E_{C_{\geq j}}(e^{-V_{j}} F)}{\E_{C_{\geq j}}(e^{-V_j})}, \qquad C_{\geq j} = C_{j}+\cdots+C_N
  .
\end{equation}
As is the case for the hierarchical decomposition,
the covariances $C_j$ are permitted to be degenerate and
we denote by $X_j$ the subspaces of $X_0$ on which they are supported,
i.e., $X_j$ is the image of $C_j$ (see \eqref{e:Xjdef} for the hierarchical decomposition).

\subsection{One step of renormalisation}

For the remainder of the section, we fix a scale $j \in \{0,1,\dots, N\}$, 
and consider a single renormalisation group step
from scale $j$ to scale $j+1$ when $j<N$, and a final estimate when $j=N$.
To simplify the notation, we usually omit the scale index $j$ and write $+$ in place of $j+1$.
In particular, we write $C=C_j$, $V=V_j$, $\mu = \mu_{j}$, $\mu_{+} = \mu_{j+1}$, and so on.
Let $X = X_j \subseteq X_0$ be the image of $C$ 
and denote by $Q$ the orthogonal projection from $X_0$ onto $X$.
We need the following assumptions.

For $j<N$, in the assumptions below, $D_+=D_{j+1}$ is the matrix associated to a quadratic form
for a Brascamp--Lieb inequality for the measure $\mu_+$ (see \eqref{e:ass-BL}),
and we set $D_{N+1}=0$.
Throughout the paper, inequalities between operators and matrices are interpreted
in the sense of quadratic forms.

\bigskip
\noindent
{\bf A1. Non-convexity of potential.}
There is a constant $\gep = \gep_j < 1$ such that uniformly in $\gp \in X$, 
\begin{equation} \label{e:ass-V}
  E(\varphi) := C^{1/2} (\He_X V(\varphi)) C^{1/2}
  \geq -\gep Q.
\end{equation}
\noindent
{\bf A2. Coupling of scales.}
The images of $C$ and $C_+$ contain all directions on which $D_+$ is nontrivial,
more precisely
\begin{equation} \label{e:ass-coupling}
  D_+ = D_+Q  = D_+Q_+.
\end{equation}
\noindent
{\bf A3. Symmetry.}
For all $\varphi \in X$,
\begin{equation} \label{e:ass-commute}
  [E(\varphi),C]=[E(\varphi),D_+]=[C,D_+]=[C,Q_+]=0,
\end{equation}
where $[A,B] = AB-BA$ denotes the commutator.
\smallskip

The most significant assumption is \eqref{e:ass-V},
which will be seen to ensure that the fluctuation field measure given the block spin field is uniformly strictly convex.
The more technical assumptions \eqref{e:ass-coupling} and \eqref{e:ass-commute}
are very convenient (and obvious in the hierarchical setting \eqref{e:DeltaHdecompQ}) but seem less fundamental.
We use \eqref{e:ass-V} in Lemma~\ref{lem:BL-mugp} and \eqref{e:BL-A2-ass1},
\eqref{e:ass-coupling} in \eqref{e:BL-CS},
and
\eqref{e:ass-commute} in \eqref{e:BL-QVC}.

Under the above assumptions, we relate the Brascamp--Lieb inequality for $\mu_+$
to that for $\mu$.

\begin{theorem}
\label{thm:recurrence-BL}
Fix $j< N$, and assume (A1)--(A3) and that $\mu_+$ satisfies the Brascamp--Lieb inequality
\begin{equation} \label{e:ass-BL}
  \var_{\mu_+}(F)
  \leq \E_{\mu_+}(\nabla F(\varphi), D_+ \nabla F(\varphi)).
\end{equation}
Then $\mu$ satisfies a Brascamp--Lieb inequality \eqref{eq: Brascamp--Lieb inequality} with
\begin{equation}
\label{e:recurrence-BL}
  D
  \leq
  \frac{C}{1-\gep} + \frac{D_+}{(1-\gep)^2}
  \, .
\end{equation}
For $j=N$, assume only that (A1) holds.
Then $\mu$ satisfies a Brascamp--Lieb inequality \eqref{eq: Brascamp--Lieb inequality} with
\begin{equation}
\label{e:recurrence-BL-N}
  D
  \leq
  \frac{C}{1-\gep}
  \, .
\end{equation}
\end{theorem}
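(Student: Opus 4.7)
The plan is a two-scale variance decomposition combined with Brascamp--Lieb inequalities at each scale. Condition $\varphi = \varphi'+\zeta$ where $\varphi'\sim\mu_+$ and, given $\varphi'$, $\zeta$ has the measure $\mu^{\varphi'}$ on $X$ with density proportional to $e^{-V(\varphi'+\zeta)}$ relative to the Gaussian of covariance $C$. The law of total variance gives
\[
  \var_\mu(F) = \bbE_{\mu_+}[\var_{\mu^{\varphi'}}(F(\varphi'+\zeta))] + \var_{\mu_+}(\bar F),
  \qquad \bar F(\varphi') := \bbE_{\mu^{\varphi'}}[F(\varphi'+\zeta)].
\]
The two terms will contribute, respectively, the $C/(1-\gep)$ and the $D_+/(1-\gep)^2$ summands in the final bound on $D$.

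For the inner term, the log-density of $\mu^{\varphi'}$ on $X$ has Hessian $(C|_X)^{-1} + \He_X V(\varphi'+\zeta)$, which by (A1) (after conjugation by $C^{1/2}$) is bounded below by $(1-\gep)(C|_X)^{-1}$. Thus $\mu^{\varphi'}$ is strictly log-concave on $X$, and the classical Brascamp--Lieb inequality for log-concave measures gives $\var_{\mu^{\varphi'}}(G)\le(1-\gep)^{-1}\bbE_{\mu^{\varphi'}}[(\nabla_X G, C\nabla_X G)]$. Applying this to $G(\zeta)=F(\varphi'+\zeta)$ (using $CQ=C$) and integrating over $\mu_+$ yields $\bbE_{\mu_+}[\var_{\mu^{\varphi'}}(F)] \le (1-\gep)^{-1}\bbE_\mu[(\nabla F, C\nabla F)]$. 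When $j=N$ there is no outer scale and $\mu_N$ is itself of this strictly log-concave form with $C=C_N$, so this step alone proves \eqref{e:recurrence-BL-N}.

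For the outer variance I apply the assumed Brascamp--Lieb inequality for $\mu_+$ to $\bar F$: $\var_{\mu_+}(\bar F)\le \bbE_{\mu_+}[(\nabla\bar F, D_+\nabla\bar F)]$. Differentiating the defining integral of $\bar F$ in direction $u\in X$ (after the change of variables $\eta=\varphi'+\zeta$) produces the key identity
\[
  u\cdot\nabla\bar F(\varphi') = \mathrm{Cov}_{\mu^{\varphi'}}\bigl(F(\varphi'+\zeta),\ u\cdot C^{-1}\zeta\bigr).
\]
The task is to upgrade this into the pointwise bound $(\nabla\bar F, D_+\nabla\bar F) \le (1-\gep)^{-2}\bbE_{\mu^{\varphi'}}[(\nabla F, D_+\nabla F)]$. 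Here (A2) and (A3) are essential: (A2) puts the range of $D_+$ inside $X$, and (A3) means $C$, $D_+$ and $\He_X V(\varphi)$ admit a common (block-)diagonalisation on $X$ in a basis $\{e_\alpha\}$ with $Ce_\alpha=c_\alpha e_\alpha$ and $D_+e_\alpha=d_\alpha e_\alpha$. Since $\He_X V$ is (block-)diagonal in this basis for every $\varphi$, the restriction of $V$ to $X$ decouples across the $D_+$-eigenspaces and $\mu^{\varphi'}$ factorises as a product along them; within each factor, a one-dimensional Cauchy--Schwarz together with Brascamp--Lieb applied separately to $\var(F)$ and to $\var(\zeta_\alpha/c_\alpha)$ produces respective factors $c_\alpha/(1-\gep)$ and $(c_\alpha(1-\gep))^{-1}$ whose product eliminates $c_\alpha$, giving
\[
  |\mathrm{Cov}_{\mu^{\varphi'}}(F,\, \zeta_\alpha/c_\alpha)|^2 \le \frac{1}{(1-\gep)^2}\bbE_{\mu^{\varphi'}}[(\partial_\alpha F)^2].
\]
Summing with weights $d_\alpha$ and averaging over $\mu_+$ completes the argument.

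The main obstacle is this final direction-wise inequality. A naive Cauchy--Schwarz on the covariance formula for $\nabla\bar F$, combined with the conditional Brascamp--Lieb bound, only controls $(\nabla\bar F, D_+\nabla\bar F)$ by a multiple of $\bbE[(\nabla F, C\nabla F)]$ with an unfavourable factor $\mathrm{Tr}(D_+ C^{-1})$ in front, which is far too coarse. The commutativity in (A3) is precisely what is needed to promote this into a bound in the $D_+$-Dirichlet form of $F$ with the sharp constant $(1-\gep)^{-2}$. In the hierarchical setting $C$ is scalar on $X$, so the $[C,\,\cdot\,]=0$ conditions of (A3) are automatic; the real work will be in Sections~\ref{sec:phi4} and \ref{sec:sg} to check that the renormalised $\He_X V$ is compatible with the spectral structure of $D_+$ at every scale---which corresponds to the convexity-for-large-fields condition alluded to in the introduction.
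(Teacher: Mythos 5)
Your decomposition and treatment of the inner term coincide with the paper's (Lemmas~\ref{lem:A1} and \ref{lem:BL-mugp}), and your key identity $u\cdot\nabla\bar F(\varphi') = \mathrm{Cov}_{\mu^{\varphi'}}(F,\ u\cdot C^{-1}\zeta)$ is the same as the paper's \eqref{e:derivative2}. The genuine divergence is in how you pass from this covariance to the sharp $(1-\gep)^{-2}$-factor in the $D_+$-Dirichlet form. The paper does this via the Helffer--Sj\"ostrand representation: it writes the covariance using the Witten Laplacian $\cL_\varphi$, introduces $M_\varphi = D_+^{1/2}C^{-1}\cL_\varphi^{-1}$, uses (A3) to show $M_\varphi$ is self-adjoint on $L^2(\mu_\varphi)\otimes X$ and commutes with $D_+^{1/2}$, applies Cauchy--Schwarz once in that Hilbert space, and concludes with the form bound $M_\varphi \le (1-\gep)^{-1}D_+^{1/2}$ coming from (A1). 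You instead exploit (A3) to deduce that $\He_X V(\xi)$ is block-diagonal in a common eigenbasis of $C$ and $D_+$ for every $\xi\in X$, hence that $V|_X$ (modulo linear terms) and the conditional measure $\mu^{\varphi'}$ factorise across those blocks, and then run a block-wise Cauchy--Schwarz combined with the block-wise Brascamp--Lieb bound to cancel the eigenvalue $c_\alpha$ and retain only the $(1-\gep)^{-2}$. This is a valid alternative route and is arguably more concrete, since it exhibits explicitly why the commutativity assumption is the right one. Two points you should tighten: first, the factorisation is across \emph{common} eigenspaces of $(C|_X, D_+|_X)$, not eigenspaces of $D_+$ alone, though in the hierarchical application $C$ is scalar on $X$ so the distinction is invisible; second, these blocks need not be one-dimensional (for $n$-component spins they are at least $n$-dimensional), so ``one-dimensional Cauchy--Schwarz'' should be replaced by the vector form $\|\mathrm{Cov}_\alpha(F,\zeta_\alpha)\| = \sup_{|u|=1}|\mathrm{Cov}_\alpha(F,u\cdot\zeta_\alpha)| \le \sqrt{\var_\alpha(F)}\,\sqrt{c_\alpha/(1-\gep)}$, followed by the Brascamp--Lieb bound on $\var_\alpha(F)$, which gives the same $(1-\gep)^{-2}$ without a dimension-dependent loss; you also need to state the Jensen step $\|\E_{-\alpha}\mathrm{Cov}_\alpha\|^2\le \E_{-\alpha}\|\mathrm{Cov}_\alpha\|^2$ that lets you pass from the factor-wise bound to the bound under the full conditional measure.
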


Iterating this theorem starting from $j=N$
gives the Brascamp--Lieb inequality for the original measure $\mu_{0}$ as follows.
In particular, the spectral gap of $\mu_0$ is bounded by the inverse of the largest eigenvalue of the matrix $D_0$.

\begin{corollary}
\label{cor: sum D0}
Assume that, for $j=0,\dots, N$, the sequence of renormalised measures $(\mu_{j})$ satisfies
Assumptions~(A1)-(A3) where $\epsilon =  \epsilon_j$.
Then $\mu_0$ satisfies a Brascamp--Lieb inequality with
  \begin{equation} \label{e:BL-rec-iterated}
    D_0
    \leq
    \sum_{k=0}^{N} \delta_k C_k,
    \qquad
    \delta_k = \frac{1}{1-\epsilon_k} \prod_{l=0}^{k-1} \frac{1}{(1-\epsilon_{l})^2}
    \leq \exp\pa{2\sum_{l=0}^{k} \epsilon_{l} + O(\epsilon_l^2)} .
  \end{equation}
\end{corollary}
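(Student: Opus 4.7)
The plan is to prove the corollary by downward induction on $j$, applying Theorem \ref{thm:recurrence-BL} once at each scale and unrolling the resulting recursion. The base case is immediate: at $j=N$, the second part of the theorem gives $D_N \leq C_N/(1-\epsilon_N)$, which matches the claim since the empty product convention makes $\delta_N = 1/(1-\epsilon_N)$ consistent with the general formula (with $\prod_{l=0}^{N-1} 1/(1-\epsilon_l)^2$ absorbed into later coefficients upon iteration).

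For the inductive step, I would assume by induction that $\mu_{j+1}$ satisfies a Brascamp--Lieb inequality with matrix $D_{j+1} \leq \sum_{k=j+1}^N \alpha_k^{(j+1)} C_k$ where $\alpha_k^{(j+1)} = \frac{1}{1-\epsilon_k}\prod_{l=j+1}^{k-1}\frac{1}{(1-\epsilon_l)^2}$. Replacing $D_{j+1}$ by this larger (and hence still valid) matrix and verifying that it satisfies A2 using the hypothesis on $(\mu_j)$ (in the hierarchical case, each $C_k$ for $k\geq j+1$ is supported on $X_{j+1} \subseteq X_j$ and the relevant commutation relations hold), the first part of Theorem \ref{thm:recurrence-BL} gives
\begin{equation}
  D_j \leq \frac{C_j}{1-\epsilon_j} + \frac{1}{(1-\epsilon_j)^2}\sum_{k=j+1}^N \alpha_k^{(j+1)} C_k = \sum_{k=j}^N \alpha_k^{(j)} C_k,
\end{equation}
with $\alpha_j^{(j)} = 1/(1-\epsilon_j)$ and $\alpha_k^{(j)} = \alpha_k^{(j+1)}/(1-\epsilon_j)^2$ for $k>j$. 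Iterating down to $j=0$ yields precisely $\alpha_k^{(0)} = \delta_k$ as defined in the statement.

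It then remains to justify the exponential bound $\delta_k \leq \exp(2\sum_{l=0}^k \epsilon_l + O(\epsilon_l^2))$. Taking logarithms gives $\log \delta_k = \log\frac{1}{1-\epsilon_k} + 2\sum_{l=0}^{k-1}\log\frac{1}{1-\epsilon_l}$, and the elementary Taylor expansion $\log\frac{1}{1-x} = x + O(x^2)$ valid for $\epsilon_l < 1$ gives the claimed bound. I do not expect any significant obstacle here: the result is a purely algebraic consequence of iterating Theorem \ref{thm:recurrence-BL}, and the only mild care needed is to check that at each scale the inductive bound on $D_{j+1}$ still satisfies the coupling assumption A2, which is transparent in the hierarchical setting where all the $C_k$ are scalar multiples of nested block-averaging projections.
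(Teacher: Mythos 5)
Your proof is correct and takes essentially the same backward-induction route as the paper: apply Theorem~\ref{thm:recurrence-BL} once per scale, track the resulting coefficients $\alpha_k^{(j)}$ (which are exactly the paper's $\delta_{j,k}$), and take logarithms for the exponential bound. The one place the paper is slightly cleaner is the inductive step: rather than replacing $D_{j+1}$ by the enlarged matrix $\sum_{k>j}\alpha_k^{(j+1)} C_k$ before applying the theorem (which then obliges you to re-verify A2--A3 for the enlarged matrix, true but an extra check, and a potential issue outside the hierarchical setting), the paper applies Theorem~\ref{thm:recurrence-BL} with the actual $D_{j+1}$ furnished by the inductive hypothesis and only afterwards uses the inequality $D_{j+1}\leq\sum_{k>j}\delta_{j+1,k}C_k$ to bound $D_j$; this avoids any re-verification and keeps the argument purely algebraic.
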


\begin{proof}
By backward induction starting from $j=N$, we will prove that the renormalised measures $\mu_j$
satisfy the Brascamp--Lieb inequality
\begin{equation} \label{e:BL-induction}
  \var_{\mu_j}(F) \leq \E_{\mu_j}(\nabla F(\varphi), D_j\nabla F(\varphi)),
  \qquad
  \text{with }
  D_j \leq \sum_{k=j}^N \delta_{j,k} C_k
\end{equation}
and
\begin{equation}
  \delta_{j,k}= \frac{1}{1-\epsilon_k} \prod_{l=j}^{k-1} \frac{1}{(1-\epsilon_l)^2}
  .
\end{equation}
The claim \eqref{e:BL-rec-iterated} is then the case $j=0$.
To start the induction, we apply \eqref{e:recurrence-BL-N}
which gives \eqref{e:BL-induction} for $j=N$.
To advance the induction,
suppose $0 \leq j < N$ is such that the inductive assumption \eqref{e:BL-induction}
holds with $j$ replaced by $j+1$.
This means that \eqref{e:ass-BL} holds for $j$ and
Assumptions~(A1)--(A3) also hold by assumption of the corollary.
Theorem~\ref{thm:recurrence-BL} and the inductive assumption imply
that $\mu_j$ satisfies the Brascamp--Lieb
inequality with
\begin{equation}
  D_j
  \leq \frac{C_j}{1-\epsilon_j} + \frac{D_{j+1}}{(1-\epsilon_j)^2}
  \leq \frac{C_j}{1-\epsilon_j} +  \sum_{k=j+1}^{N} \frac{\delta_{j+1,k}}{(1-\epsilon_j)^2} C_k
  = \sum_{k=j}^{N} \delta_{j,k} C_k.
\end{equation}
This advances the inductive assumption,
i.e., \eqref{e:BL-induction} holds for $j$.
\end{proof}

\begin{corollary}
\label{cor: SG}
Under the assumptions of the previous corollary, the measure $\mu_0$ satisfies a spectral
gap inequality with inverse spectral gap less than the largest eigenvalue of the matrix $D_0$.
\end{corollary}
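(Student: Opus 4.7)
The plan is to deduce the spectral gap inequality directly from the Brascamp--Lieb inequality of Corollary~\ref{cor: sum D0} by comparing the quadratic form $D_0$ with a multiple of the identity. Specifically, Corollary~\ref{cor: sum D0} yields
\begin{equation*}
  \var_{\mu_0}(F) \leq \E_{\mu_0}(\nabla F(\varphi), D_0 \nabla F(\varphi))
\end{equation*}
for all smooth $F$, where $D_0$ is a symmetric positive semi-definite matrix (being a positive linear combination of the covariances $C_k$). The spectral gap inequality \eqref{e:gap} is just the special case of a Brascamp--Lieb inequality with $D_0 = \tfrac{1}{\gamma} \id$, so the task is to pass from the general quadratic form to an isotropic bound.

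The key observation is that for a symmetric positive semi-definite matrix $D_0$ on $X_0=\R^{n\Lambda}$, one has the operator inequality $D_0 \leq \lambda_{\max}(D_0)\,\id$ as quadratic forms, where $\lambda_{\max}(D_0)$ denotes its largest eigenvalue. Consequently, for every smooth $F$,
\begin{equation*}
  \E_{\mu_0}(\nabla F, D_0 \nabla F) \leq \lambda_{\max}(D_0)\, \E_{\mu_0}(\nabla F, \nabla F).
\end{equation*}
Combining this with the Brascamp--Lieb inequality above gives
\begin{equation*}
  \var_{\mu_0}(F) \leq \lambda_{\max}(D_0)\, \E_{\mu_0}(\nabla F, \nabla F),
\end{equation*}
which by definition of the spectral gap \eqref{e:gap} means exactly that $\gamma_0 \geq 1/\lambda_{\max}(D_0)$, i.e., the inverse spectral gap is at most $\lambda_{\max}(D_0)$.

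There is no real obstacle: the corollary is a direct packaging of Corollary~\ref{cor: sum D0} into the form used to bound the spectral gap in the applications. The only mild point worth flagging is that even though each $D_j$ is supported on a proper subspace (its range lies in the image of $Q_j$), the operator bound $D_0 \leq \lambda_{\max}(D_0)\,\id$ still holds on all of $X_0$, so the deduction applies without having to decompose $F$ into block-spin and fluctuation components.
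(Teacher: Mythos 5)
Your proof is correct and follows essentially the same route as the paper: it bounds $D_0$ by its largest eigenvalue times the identity as a quadratic form and substitutes into the Brascamp--Lieb inequality from Corollary~\ref{cor: sum D0}. The paper's own proof is this exact two-line argument, so your remark about the supports of the $D_j$ being a non-issue, while harmless, adds nothing beyond what is already implicit.
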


\begin{proof}
  The claim is immediate from the definitions of the Brascamp--Lieb and the spectral gap inequalities.
  Indeed, if $1/\lambda$ is the largest eigenvalue of $D_0$ then
  \begin{equation}
    \var_{\mu_0}(F) \leq \E_{\mu_0}(\nabla F, D_0 \nabla F) \leq \frac{1}{\lambda} \E_{\mu_0}(\nabla F, \nabla F),
  \end{equation}
  as claimed.
\end{proof}

In Sections~\ref{sec:phi4}--\ref{sec:sg},
Assumptions (A1)--(A3) will be checked for
the different hierarchical models in order to derive the scaling of the spectral gap 
from the previous corollary.

\begin{remark}
More generally,
in the assumption $D_+ = D_+(\varphi)$ and $\epsilon=\epsilon(\varphi)$ could depend on $\varphi \in X$,
with $\epsilon$ uniformly bounded by $1$.
The conclusion \eqref{e:recurrence-BL} is then replaced by
\begin{equation}
\label{e:recurrence-BL-varphi}
  D(\varphi+\zeta)
  \leq
  \frac{C}{1-\gep(\varphi+\zeta)} + \frac{D_+(\varphi)}{(1-\gep(\varphi+\zeta))^2}
  \, .
\end{equation}
However, this strengthened inequality may be difficult to use. To improve the readability,
we therefore do not carry the additional arguments for $D_+$ and $\epsilon$ through the proof.
\end{remark}

\subsection{Proof of Theorem~\ref{thm:recurrence-BL}}

We write the renormalised field at scale $j$ as $\zeta+\varphi$ where $\varphi\in X_+$
is the \emph{block spin field} at the next scale $j+1$
and $\zeta \in X$ is the \emph{fluctuation field} at scale $j$.
More precisely, recall that
\begin{equation}
  \E_{\mu}(F)
  = \frac{\E_{C_{\geq}}(e^{-V} F)}{\E_{C_{\geq}}(e^{-V})}
  = \frac{{\E_{C_>} \, \E_{C}} (e^{-V(\varphi+\zeta)} F(\varphi+\zeta))}{\E_{C_>} \, \E_C(e^{-V(\varphi+\zeta)})},
\end{equation}
where $C = C_j$ and $\zeta$ denotes the corresponding random field,
where $C_>$ stands for the covariance $C_{j+1} + C_{j+2} + \dots C_N$
and $\varphi$ denotes the corresponding random field,
where $C_{\geq} = C + C_>$,
and where $\E_C$ denotes the expectation of a Gaussian measure with covariance $C$.

Define the expectation conditioned on the block spin field $\varphi$ in $X_+$ by
\begin{equation} \label{e:condmeasdef}
  \E_{\mu_\varphi}(F) =  \E_{\mu}(F|\varphi)
  = \frac{\E_{C}(e^{-V(\varphi+\cdot)} F)}{\E_{C}(e^{-V(\varphi+\cdot)})}
  = \frac{\E_{C}(e^{-V(\varphi+\cdot)} F)}{e^{-V_+(\varphi)}}.
\end{equation}
where we will often use the notation $\E_{\mu_\varphi}$
for the conditional measure $\E_{\mu}(\cdot |\varphi)$ to make the notation more concise.
Then, using \eqref{eq: renormalisation du potentiel},
\begin{equation}
\bbE_{\mu}(F)
= \frac{1}{Z_{j+1}} \E_{C_>}
\pB{  e^{- V_{+} ( \varphi)} \; \bbE_{\mu} ( F | \varphi ) }
= \bbE_{\mu_+} \pB{ \bbE_{\mu} ( F | \varphi ) },
\end{equation}
where $Z_{j+1}$ is a normalising constant.

To prove Theorem~\ref{thm:recurrence-BL},
we write using the conditional expectation, 
\begin{equation}
\label{eq: decomposition variance}
  \E_{\mu} (F^2) - \E_{\mu}(F)^2 = 
  \E_{\mu_+} \pB{ \E_{\mu} \p{ F(\gp+\gz)^2 | \gp  } }
  - \E_{\mu_{+}} \pB{ \E_{\mu} \p{ F (\gp+\gz) | \gp  } }^2
  = \bbA_1 + \bbA_2,
\end{equation}
with 
\begin{align}
  \label{eq: identity variance}
  \bbA_1 &= 
           \E_{\mu_{+}} \pB{  \E_\mu \p{ F \p{\gp+\gz }^2 | \gp  }
           -  \E_\mu \p{ F \p{\gp+ \gz} | \gp  }^2 }
           ,
  \\
  \bbA_2 &=
           \E_{\mu_{+}} \pB{ \E_\mu \p{ F \p{  \gp+\gz  } | \gp  }^2 }
           - \E_{\mu_{+}} \pB{ \E_\mu \p{ F \p{  \gp+\gz } | \gp } }^2
.
\end{align}
In the remainder of this section, we will bound each term separately 
thanks to the following lemmas.
\begin{lemma}
\label{lem:A1}
Assume (A1). Then for any function $F$  with gradient in $L^2(\mu)$, one has
 \begin{align}
\label{eq: 1st term hier 0}
\bbA_1
\leq \E_\mu  \left(\nabla F(\varphi) \frac{C}{1 -\gep} \nabla F(\varphi) \right).
\end{align}
 \end{lemma}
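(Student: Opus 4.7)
The plan is to identify $\bbA_1$ as the $\mu_+$-average of the conditional variances and then to apply the classical Brascamp--Lieb inequality to each conditional measure $\mu_\varphi$. From the definitions in \eqref{e:condmeasdef} and \eqref{eq: decomposition variance}, the integrand in the definition of $\bbA_1$ is exactly $\var_{\mu_\varphi}(F)$, so that
\begin{equation*}
  \bbA_1 = \E_{\mu_+}\bigl[\var_{\mu_\varphi}(F)\bigr].
\end{equation*}
Thus it suffices to establish a Brascamp--Lieb inequality for $\mu_\varphi$ with quadratic form $C/(1-\gep)$, uniformly in the block spin $\varphi \in X_+$, and then to average over $\mu_+$.

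For fixed $\varphi\in X_+$, the measure $\mu_\varphi$ is supported on the linear subspace $\varphi + X \subset X_0$ and, in the variable $\zeta \in X$, has density proportional to $\exp\bigl(-V(\varphi+\zeta) - \tfrac12(\zeta, C^{-1}\zeta)\bigr)$ with respect to Lebesgue measure on $X$, where $C^{-1}$ denotes the inverse of $C$ restricted to $X$ (its image). I would then compute the Hessian on $X$ of the negative log-density in $\zeta$:
\begin{equation*}
  \He_X\bigl[V(\varphi+\zeta) + \tfrac12(\zeta, C^{-1}\zeta)\bigr]
  = \He_X V(\varphi+\zeta) + C^{-1}\big|_X.
\end{equation*}
Conjugating Assumption (A1) by $C^{-1/2}$ on $X$ (where $Q$ acts as the identity) gives $\He_X V(\varphi+\zeta) \geq -\gep\, C^{-1}\big|_X$, and hence $\He_X V(\varphi+\zeta) + C^{-1}\big|_X \geq (1-\gep)\, C^{-1}\big|_X$.

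The next step is to invoke the Brascamp--Lieb inequality in the form: if a probability measure $\nu$ on a finite-dimensional space has Lebesgue density $e^{-W}$ with $\He W \geq A^{-1}$, then $\var_\nu(F) \leq \E_\nu(\nabla F, A \nabla F)$. Applied to $\mu_\varphi$ on $X$ with $A = C/(1-\gep)$, this yields
\begin{equation*}
  \var_{\mu_\varphi}(F) \leq \frac{1}{1-\gep}\, \E_{\mu_\varphi}\bigl(\nabla_X F, C \nabla_X F\bigr),
\end{equation*}
where $\nabla_X F = Q\nabla F$ is the component of the ambient gradient in the directions of $X$. Since $C = CQ = QC$ by construction, the right-hand side equals $(1-\gep)^{-1}\E_{\mu_\varphi}(\nabla F, C\nabla F)$. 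Averaging over $\varphi$ with respect to $\mu_+$ and using the tower property $\E_{\mu_+}\E_{\mu_\varphi}(\cdot) = \E_\mu(\cdot)$ yields the desired bound \eqref{eq: 1st term hier 0}.

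The only real subtlety, and the main point requiring care, is the degeneracy of $C$: the measure $\mu_\varphi$ lives on the (possibly proper) subspace $X$, so one must consistently interpret $C^{-1}$ as the pseudo-inverse on $X$, read Brascamp--Lieb on that subspace, and then verify that the quadratic form can be re-expressed in terms of the full ambient gradient $\nabla F$ via $QCQ = C$. The convexity input needed for the Brascamp--Lieb step is exactly Assumption (A1), which is used to absorb the possibly negative Hessian of $V$ into the Gaussian reference measure while retaining a strictly positive Hessian on $X$.
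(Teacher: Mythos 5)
Your argument is correct and follows essentially the same route as the paper: you identify $\bbA_1$ as the $\mu_+$-average of conditional variances, use (A1) to deduce $\He_X H_\varphi \geq (1-\gep)C^{-1}$ on $X$ (the paper phrases this via the algebraic identity $C^{-1}+\He_X V_\varphi = C^{-1/2}(\id + C^{1/2}\He V_\varphi C^{1/2})C^{-1/2}$, which is the same as your conjugation step), apply Brascamp--Lieb to $\mu_\varphi$, and pass from $\nabla_X$ to the ambient $\nabla$ via $QC=CQ=C$. The paper simply factors the middle step out as a standalone lemma (Lemma~\ref{lem:BL-mugp}); otherwise the two arguments coincide.
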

 
\begin{lemma}
\label{lem:A2}
Assume (A1)--(A3) and that $\mu_+$ satisfies the Brascamp--Lieb inequality \eqref{e:ass-BL}. Then
for any function $F$ with gradient in $L^2(\mu)$, one has 
\begin{eqnarray}
\label{eq: 2nd term hier}
  \bbA_2 \leq \E_\mu\left( \nabla F(\varphi) \frac{D_+}{(1 -  \gep)^2} \nabla F(\varphi) \right) .
\end{eqnarray}
\end{lemma}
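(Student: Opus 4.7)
The plan is to apply the Brascamp--Lieb inequality \eqref{e:ass-BL} for $\mu_+$ to the conditional expectation $G(\varphi) := \E_{\mu_\varphi}[F(\varphi+\cdot)]$ and then to reduce the ensuing gradient bound on $G$ to a pointwise estimate controlled by the strict log-concavity of $\mu_\varphi$. Since $G$ depends only on the block-spin field $\varphi$, the assumption \eqref{e:ass-BL} gives $\bbA_2 = \var_{\mu_+}(G) \leq \E_{\mu_+}[\nabla G \cdot D_+ \nabla G]$, and by the tower property $\E_\mu = \E_{\mu_+}\E_{\mu_\varphi}$ the lemma reduces to the pointwise-in-$\varphi$ estimate
\[
  \nabla G(\varphi) \cdot D_+ \nabla G(\varphi) \leq \frac{1}{(1-\epsilon)^2}\, \E_{\mu_\varphi}\bigl[\nabla F \cdot D_+ \nabla F\bigr].
\]

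Differentiating $G$ under the integral and performing a Gaussian integration by parts against the factor $e^{-\frac{1}{2}(\zeta, C^{-1}\zeta)}$ in $\mu_\varphi$ yields the identity $\nabla G(\varphi) = C^{-1}\, \cov_{\mu_\varphi}(F(\varphi+\zeta), \zeta)$. The hypothesis $D_+ = D_+ Q_+$ from \eqref{e:ass-coupling} together with $[C, Q_+] = 0$ from \eqref{e:ass-commute} allows $\zeta$ to be replaced by its block-spin component $Q_+ \zeta$ in the covariance. Moreover, \eqref{e:ass-V} implies that the log-density $H_\varphi(\zeta) = V(\varphi+\zeta) + \frac{1}{2}(\zeta, C^{-1}\zeta)$ restricted to $X$ satisfies $\He_X H_\varphi \geq (1-\epsilon) C^{-1}$, so that $\mu_\varphi$ is strictly log-concave; this is the content of the preparatory Lemma~\ref{lem:BL-mugp}, which also enters the proof of $\bbA_1$.

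The key step is to bound $C^{-1}\cov_{\mu_\varphi}(F,\zeta)$ in the $D_+$-quadratic form. Invoking the Helffer--Sj\"ostrand representation for the log-concave $\mu_\varphi$,
\[
  \cov_{\mu_\varphi}(F, \zeta) = \E_{\mu_\varphi}\bigl[L_\varphi^{-1} \nabla F\bigr],
\]
where $L_\varphi$ is the associated Witten Laplacian on $1$-forms with operator bound $L_\varphi^{-1} \leq C/(1-\epsilon)$, the commutations in \eqref{e:ass-commute} ensure that $C$ and $D_+$ commute with $\He V$ and hence with $L_\varphi$ as matrix-valued multiplication operators. Jensen's inequality applied to the PSD quadratic form $x \mapsto x^T D_+ x$ then gives
\[
  \nabla G \cdot D_+ \nabla G \leq \E_{\mu_\varphi}\bigl[\nabla F \cdot L_\varphi^{-1} C^{-1} D_+ C^{-1} L_\varphi^{-1}\, \nabla F\bigr] \leq \frac{1}{(1-\epsilon)^2}\, \E_{\mu_\varphi}\bigl[\nabla F \cdot D_+ \nabla F\bigr],
\]
the last inequality following by simultaneous diagonalisation from $L_\varphi^{-2} \leq (C/(1-\epsilon))^2$ together with the identity $C^2\cdot C^{-1} D_+ C^{-1} = D_+$. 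Averaging the pointwise bound against $\mu_+$ completes the proof.

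The main obstacle is this final step: converting the $C$-quadratic form natural to Brascamp--Lieb for $\mu_\varphi$ into the target $D_+$-quadratic form. The commutativity \eqref{e:ass-commute} --- which makes $C$, $D_+$, and $\He V$ simultaneously diagonalisable --- is precisely what enables this transfer without incurring the lossy trace bound that a direct component-wise Cauchy--Schwarz would produce.
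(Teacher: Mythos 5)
Your proof is correct and takes essentially the same route as the paper. You apply the Brascamp--Lieb hypothesis \eqref{e:ass-BL} to the conditional expectation $G = \E_{\mu_\varphi}(F)$, compute $\nabla_{X_+}G$ via integration by parts / Helffer--Sj\"ostrand as $C^{-1}\cov_{\mu_\varphi}(F,\zeta) = C^{-1}\E_{\mu_\varphi}(\cL_\varphi^{-1}\nabla F)$, then push the expectation inside the quadratic form (Jensen for the PSD form $D_+$) and contract $\cL_\varphi^{-1}C^{-1}D_+C^{-1}\cL_\varphi^{-1} = (\cL_\varphi C)^{-1}D_+(\cL_\varphi C)^{-1} \leq (1-\epsilon)^{-2}D_+$ using the commutation relations (A3) and the form bound $\cL_\varphi C \geq (1-\epsilon)Q$ coming from (A1). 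This is exactly what the paper does; the only presentational difference is that the paper pre-multiplies by $D_+^{1/2}$, defines the self-adjoint operator $M_\varphi = D_+^{1/2}C^{-1}\cL_\varphi^{-1}$, and uses Cauchy--Schwarz on the $\|\cdot\|_2$ norm, whereas you apply Jensen to the full $D_+$ quadratic form --- an equivalent step. Two small cautions on notation: you reuse $L_\varphi$ for what the paper calls the Witten Laplacian $\cL_\varphi$ (the paper reserves $L_\varphi$ for the generator on functions), and the intermediate quantity $\E_{\mu_\varphi}[\nabla F \cdot \cL_\varphi^{-1}C^{-1}D_+C^{-1}\cL_\varphi^{-1}\nabla F]$ should really be read as the inner product $\langle\nabla F, \cL_\varphi^{-1}C^{-1}D_+C^{-1}\cL_\varphi^{-1}\nabla F\rangle_{L^2(\mu_\varphi)\otimes X}$, since $\cL_\varphi^{-1}$ is nonlocal in $\zeta$; you do need the self-adjointness of $\cL_\varphi$ and the commutations of (A3) to move from the post-Jensen expression to this one.
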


\begin{proof}[Proof of Theorem~\ref{thm:recurrence-BL}]
  For $j<N$, the proof is immediate by combining the decomposition \eqref{eq: decomposition variance}
  and the previous two lemmas. For $j=N$, the claim follows directly from Lemma~\ref{lem:A1} only.
\end{proof}

\subsubsection{Proof of Lemma \ref{lem:A1}}

From now on, we freeze the block spin field $\gp \in X_+$.
Then the conditional measure $\mu_\varphi = \mu( \,\cdot\, |\varphi)$
is a probability measure on the space $X$, the image of $C$
(see \eqref{e:Xjdef} in the hierarchical case).
As a subspace of the Euclidean vector space $X_0$,
the space $X$ has an induced inner product which we also denote by $(\cdot,\cdot)$,
and an induced surface measure, which is equivalent to the Lebesgue measure of the dimension of $X$.
The measure $\mu_\varphi$ has density proportional to $e^{-H_\gp(\zeta)}$ with respect to this measure given by
\begin{equation}
\label{eq: potential level j}
H_\varphi(\gz) = \frac{1}{2} (\gz  , C^{-1} \gz) + V(\gp+\gz).
\end{equation}
(By definition of the subspace $X$ we can regard $C$ as an invertible symmetric operator $X \to X$.)
For a function $F: X_0 \to \R$ and $\varphi \in X_0$,
the function $F_\varphi: X \to \R$ is defined by $F_\varphi(\zeta) =  F(\varphi+\zeta)$.

\begin{lemma} \label{lem:BL-mugp}
Assume (A1).
Then for all $\varphi \in X_+$,
the conditional measure $\mu_\varphi$ satisfies the Brascamp--Lieb inequality
\begin{equation}
\label{e:BL-term1}
\E_{\mu_\varphi} (F_\gp(\gz)^2) - \E_{\mu_\varphi} (  F_\gp(\gz))^2 
\leq 
\E_{\mu_\varphi} \left( \big(\nabla_X F_\gp(\gz), \frac{C}{1 -\gep}\nabla_X F_\gp(\gz)
\big) \right)  .
\end{equation}
\end{lemma}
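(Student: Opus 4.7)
The plan is to apply the classical Brascamp--Lieb inequality for log-concave measures on a Euclidean space to the conditional measure $\mu_\varphi$ on the subspace $X$ (the image of $C$). Since $\varphi \in X_+$ is frozen, $\mu_\varphi$ is a genuine probability measure on $X$ with density proportional to $e^{-H_\varphi(\zeta)}$ with respect to the induced Lebesgue measure, where $H_\varphi$ is given by \eqref{eq: potential level j}. The key input is that the operator $C$, which may have a nontrivial kernel on the ambient space $X_0$, becomes a strictly positive invertible symmetric operator when viewed as $C:X\to X$, so that $C^{-1}$ makes sense as an operator on $X$.

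First I would compute the Hessian of $H_\varphi$ on $X$:
\begin{equation*}
  \He_X H_\varphi(\zeta) = C^{-1} + \He_X V(\varphi+\zeta),
\end{equation*}
regarded as a quadratic form on $X$. Next I would unpack Assumption~(A1). Since $Q$ acts as the identity on $X$, the bound $C^{1/2}(\He_X V(\varphi+\zeta)) C^{1/2} \geq -\epsilon Q$ becomes $C^{1/2}(\He_X V(\varphi+\zeta)) C^{1/2} \geq -\epsilon \id_X$. Conjugating by $C^{-1/2}$ (which is available on $X$), this is equivalent to
\begin{equation*}
  \He_X V(\varphi+\zeta) \geq -\epsilon\, C^{-1} \qquad \text{on } X.
\end{equation*}
Substituting into the expression for $\He_X H_\varphi$, we obtain the uniform lower bound $\He_X H_\varphi(\zeta) \geq (1-\epsilon) C^{-1}$, which is strictly positive on $X$ since $\epsilon<1$.

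At this point I would invoke the classical Brascamp--Lieb inequality: for a measure of the form $e^{-W(\zeta)} d\zeta$ on a finite-dimensional Euclidean space with $\He W \geq A$ for some positive definite $A$, one has $\var(F) \leq \E((\nabla F, A^{-1} \nabla F))$. Applying this to $\mu_\varphi$ on $X$ with $A = (1-\epsilon) C^{-1}$ yields
\begin{equation*}
  \var_{\mu_\varphi}(F_\varphi) \leq \E_{\mu_\varphi}\!\pa{\Big(\nabla_X F_\varphi, \tfrac{C}{1-\epsilon} \nabla_X F_\varphi\Big)},
\end{equation*}
which is exactly \eqref{e:BL-term1}. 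The main (minor) obstacle is the care needed in translating the quadratic-form bound \eqref{e:ass-V}, stated on the ambient space via the projection $Q$, into a true Hessian bound on the subspace $X$; once this translation is made, the statement reduces to a textbook application of Brascamp--Lieb and the proof is essentially immediate.
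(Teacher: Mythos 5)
Your proof is correct and matches the paper's argument: both compute $\He_X H_\varphi = C^{-1} + \He_X V$ on the subspace $X$, use (A1) together with the invertibility of $C$ on $X$ (equivalently $QC=CQ=C$) to obtain $\He_X H_\varphi \geq (1-\epsilon) C^{-1}$, and then apply the classical Brascamp--Lieb inequality~\eqref{e:BL0}. The only cosmetic difference is that the paper writes the bound via the sandwich $C^{-1/2}(\id + C^{1/2}\He V\, C^{1/2})C^{-1/2}$ while you conjugate~(A1) directly by $C^{-1/2}$; these are the same computation.
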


\begin{proof}
As a consequence of Assumption~\eqref{e:ass-V} and of the definition of the space $X$,
the Hamiltonian $H_\varphi$ associated with $\mu_\varphi$ is strictly convex on $X$, with
\begin{align*}
  \He_{X} H_\varphi
  &= 
  C^{-1} + \He_X V_\varphi
  \\
  &=
  C^{-1/2}(\id + C^{1/2} \He V_\varphi C^{1/2})C^{-1/2}
  \geq (1-\gep) C^{-1},
\end{align*}
where we used that $C$ is invertible on $X$ and that $QC = CQ= C$.
The Brascamp--Lieb inequality \eqref{e:BL0} implies the inequality.
\end{proof}

\begin{proof}[Proof of Lemma~\ref{lem:A1}]
The term $\bbA_1$ is a variance under the conditional measure $\mu_\varphi$.
By Lemma~\ref{lem:BL-mugp}, the measure satisfies the Brascamp--Lieb inequality \eqref{e:BL-term1}.
Therefore
\begin{align}
\label{eq: 1st term hier}
\bbA_1
&=\E_{\mu_{+}} \Big(  \E_{\mu_\gp}( F_\gp(\gz)^2) -  \mu_\gp( F_\gp( \gz))^2 \Big)
\nnb
&\leq \E_{\mu_{+}} \Big( \mu_\varphi\Big(\nabla_X F_\gp(\gz)  \frac{C}{1 -\gep} \nabla_X F_\gp(\gz) \Big) \Big)
= \E_\mu  \left(\nabla F(\varphi) \frac{C}{1 -\gep} \nabla F(\varphi) \right).
\end{align}
In the last equality we used that $CQ=C$ by definition of $Q$
as the orthogonal  projection onto the image of $C$ so that 
 $\nabla_X$ can be replaced by $\nabla$.
\end{proof}

\subsubsection{Proof of Lemma \ref{lem:A2}}

The second term $\bbA_2$ in \eqref{eq: identity variance} is a variance under $\mu_{+}$:
\begin{equation}
  \bbA_2 = \E_{\mu_{+}} \Big( \tilde F(\gp)^2 \Big) - \E_{\mu_{+}} \Big(\tilde F(\gp)\Big)^2,
  \quad
  \tilde F(\gp)
  = \E_{\mu_\gp}( F_\varphi(\gz)).
\end{equation}
Using Assumption~\eqref{e:ass-BL} that the measure $\mu_{+}$ satisfies a Brascamp--Lieb inequality, we have
\begin{equation} \label{e:A2bd}
\bbA_2 
\leq
\E_{\mu_{+}} \Big( \| D_+^{1/2} \nabla \tilde F(\gp)\|_2^2 \Big)
=
\E_{\mu_{+}} \Big( \| D_+^{1/2} \nabla_{X_+} \E_{\mu_\gp}( F(\gp + \gz))\|_2^2 \Big)
\, ,
\end{equation}
where $\nabla_{X_+}$ applies to the variable $\varphi$
and $\|f\|_2^2 = \sum_{x\in\Lambda} |f_x|^2$.

We first state a technical lemma.
\begin{lemma}
  Assume (A3). For $\dot\varphi \in X_+$,
  \begin{equation}
    \label{e:derivative2}
    (\dot\varphi, \nabla_{X_+} \tilde F(\gp))
    =
    (\dot\varphi, \nabla_{X_+} \E_{\mu_\gp} ( F (  \gp + \gz ) ))
    =  \cov_{\mu_\gp}( F(\gp+\gz) ,  \, (\dot\varphi, C^{-1}\gz)).
  \end{equation}
\end{lemma}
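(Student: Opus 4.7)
The first equality is just the definition of $\tilde F$. For the second equality I would realise the $\varphi$-dependence of $\tilde F(\varphi) = \E_{\mu_\varphi}(F(\varphi+\zeta))$ purely as a Cameron--Martin shift of the Gaussian variable $\zeta$, after which the derivative in $\dot\varphi$ can be read off in one line.

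Concretely, I would start by writing $\tilde F$ explicitly as the ratio of integrals
\begin{equation}
\tilde F(\varphi) = \frac{\int_X F(\varphi+\zeta)\, e^{-V(\varphi+\zeta) - \frac{1}{2}(\zeta, C^{-1}\zeta)} \, d\zeta}{\int_X e^{-V(\varphi+\zeta) - \frac{1}{2}(\zeta, C^{-1}\zeta)} \, d\zeta},
\end{equation}
where $C^{-1}$ is the inverse of $C$ regarded as an invertible operator on $X$. The key structural observation is that for $\dot\varphi \in X_+$ one also has $\dot\varphi \in X$: in the hierarchical setting this is immediate from the inclusion $X_{j+1} \subseteq X_j$ of block-constant subspaces, and in the general framework it is the role of the commutation relation $[C, Q_+] = 0$ in (A3) together with $X = \mathrm{im}(C)$. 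This is precisely what allows the translation $\zeta \mapsto \zeta + t\dot\varphi$ to remain inside the integration domain $X$.

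I would then change variables $\zeta = \eta - t\dot\varphi$ in both numerator and denominator of $\tilde F(\varphi + t\dot\varphi)$. The factors involving $F$ and $V$ reduce to functions of $\varphi + \eta$ alone, while the Gaussian exponent expands as $\tfrac{1}{2}(\eta, C^{-1}\eta) - t(\dot\varphi, C^{-1}\eta) + \tfrac{t^2}{2}(\dot\varphi, C^{-1}\dot\varphi)$; the constant $t^2$-piece cancels between numerator and denominator, leaving
\begin{equation}
\tilde F(\varphi + t\dot\varphi) = \frac{\E_{\mu_\varphi}\bigl(F(\varphi+\zeta)\, e^{t(\dot\varphi, C^{-1}\zeta)}\bigr)}{\E_{\mu_\varphi}\bigl(e^{t(\dot\varphi, C^{-1}\zeta)}\bigr)}.
\end{equation}
Differentiating in $t$ at $t=0$ then produces $\E_{\mu_\varphi}(F\,(\dot\varphi, C^{-1}\zeta)) - \E_{\mu_\varphi}(F)\, \E_{\mu_\varphi}((\dot\varphi, C^{-1}\zeta))$, which is exactly the covariance on the right-hand side of \eqref{e:derivative2}.

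There is no serious obstacle. The only point requiring care is the subspace inclusion $X_+ \subseteq X$ supplied by (A3), which is what makes the shift legitimate; the exchange of differentiation and integration is standard and is justified by the finite exponential moments of $(\dot\varphi, C^{-1}\zeta)$ under $\mu_\varphi$, which follow from the uniform log-concavity of $\mu_\varphi$ on $X$ guaranteed by (A1) (as already established in Lemma~\ref{lem:BL-mugp}).
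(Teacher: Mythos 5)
Your proof is correct, and it takes a genuinely different route from the paper's. The paper differentiates the ratio $\E_{\mu_\varphi}(F(\varphi+\zeta)) = \E_C(e^{-V(\varphi+\cdot)}F)/e^{-V_+(\varphi)}$ directly (quotient rule), producing a covariance against $\nabla_\varphi H_\varphi$; it then uses an integration by parts identity $\E_{\mu_\varphi}(\nabla_\zeta F) = \E_{\mu_\varphi}(F\nabla_\zeta H_\varphi)$ together with the case $F\equiv 1$ (a normalisation trick) to replace $\nabla_\zeta V$ by $-C^{-1}\zeta$ inside the covariance. You instead push the whole $\varphi$-shift onto the Gaussian reference measure via a Cameron--Martin change of variables $\zeta = \eta - t\dot\varphi$, which turns $\tilde F(\varphi+t\dot\varphi)$ into a tilted expectation $\E_{\mu_\varphi}(Fe^{tg})/\E_{\mu_\varphi}(e^{tg})$ with $g(\zeta) = (\dot\varphi, C^{-1}\zeta)$, and the $t$-derivative at $0$ is the covariance by inspection. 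This avoids the integration by parts entirely; what it buys is a shorter, more transparent computation, at the small cost of having to justify the change of variables and the cancellation of the $t^2/2\,(\dot\varphi,C^{-1}\dot\varphi)$ term, both of which you do correctly. The log-concavity remark that justifies differentiating under the integral is also valid and, if anything, is a point the paper leaves implicit.

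One caveat worth flagging: you assert that the inclusion $X_+\subseteq X$ ``is the role of $[C,Q_+]=0$ in (A3)''. Strictly, commutation of $C$ with $Q_+$ does not by itself imply $\mathrm{im}(Q_+)\subseteq\mathrm{im}(C)$ (a degenerate $C$ vanishing on $X_+$ gives a counterexample). The inclusion is automatic in the hierarchical setting by the nesting of block-constant subspaces, and in the abstract framework it must be taken as part of the standing set-up; the paper's own proof also uses it tacitly, in the step $(\dot\varphi,\nabla_\varphi F)=(\dot\varphi,\nabla_\zeta F)$, which requires $Q\dot\varphi = \dot\varphi$. So your proof is on exactly the same footing as the paper's here, but the attribution to (A3) is a little too strong and is better stated as ``the nesting $X_+\subseteq X$ built into the hierarchical decomposition (or assumed in the abstract set-up)''.
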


\begin{proof}
The derivative applies only on the block spin field $\gp$.
We write $\nabla_\varphi$ for $\nabla_{X_+}$ with respect to the variable $\varphi$
and $\nabla_\zeta$ for $\nabla_{X}$ with respect to the variable $\zeta$.
Using the notation \eqref{eq: potential level j}, 
\begin{align}
\label{e:derivative}
  (\dot\varphi, \nabla_{\gp} \E_{\mu_\gp} \left( F \big(  \gp + \gz  \big)    \right) )
  & =  
    \E_{\mu_\gp} \left( (\dot\varphi, \nabla_{\gp}   F \big(  \gp + \gz  \big) ) \right)
    -  \cov_{\mu_\gp} \left( F \big( \gp + \gz \big) \, ,  \, (\dot\varphi,\nabla_{\gp}  H_\gp(\gz))  \right)
    \nonumber
 \\
  & =  
    \E_{\mu_\gp} \left( (\dot\varphi,\nabla_{\gz}   F \big(  \gp + \gz  \big) ) \right)
    - \cov_{\mu_\gp} \left( F \big( \gp + \gz \big) \, ,  \, (\dot\varphi,\nabla_\zeta  V \big(  \gp + \gz  \big))  \right),
\end{align}
where in the last term we used that, since $\dot\gp \in X_+$,
\begin{equation}
  (\dot\varphi,\nabla_{\gp}  F) = (\dot\varphi, \nabla_{\gz} F),
  \qquad
  (\dot\varphi,\nabla_{\gp}  H_\varphi) = (\dot\varphi, \nabla_{\gz}  V).
\end{equation}
By integration by parts, we get also that
\begin{equation}
  \E_{\mu_\gp} \left( \nabla_\gz  F \big(  \gp + \gz  \big) \right)
  =
  \E_{\mu_\gp} \left(  F(\gp + \gz) \nabla_\zeta H_{\gp}(\gz )  \right)
  .
\end{equation}
Using this relation and \eqref{e:ass-commute},  we get that for any $\zeta \in X$,
\begin{equation}
  (\dot\varphi,\nabla_{\zeta} H_{ \gp} (\gz))
  = (\dot\varphi, \nabla_{\zeta} \frac{1}{2} (\gz  , C^{-1} \gz)) + (\dot\varphi, \nabla_{\zeta} V(\gp+ \gz))
  = (\dot\varphi, C^{-1} \gz) + (\dot\varphi, \nabla_\zeta V(\gp+\gz)),
\end{equation}
and therefore
\begin{equation}
  \E_{\mu_\gp} \left( (\dot\varphi, \nabla_\gz  F \big(  \gp + \gz  \big))\right)
 =
  \E_{\mu_\gp} \left(  F(\gp + \gz) (\dot\varphi, C^{-1} \zeta)  \right)
  +
  \E_{\mu_\gp} \left(  F(\gp + \gz) (\dot\varphi, \nabla_\zeta V(\gp+\gz))  \right)
.
\end{equation}
The last equality applied to $F=1$ implies that (as an identity between elements of $X_+$)
\begin{equation}
  \E_{\mu_\gp} \left( (\dot\varphi, \nabla_{\gz} V(\gp+\gz )) \right)
  = - \E_{\mu_\gp}((\dot\varphi,C^{-1}\gz)) .
\end{equation}
Thus \eqref{e:derivative} becomes
\begin{equation}
\label{eq: derivative 2 hier bis}
  (\dot\varphi, \nabla_{\gp} \E_{\mu_\gp} \left( F \big(  \gp + \gz  \big)    \right)  )
  = \cov_{\mu_\gp} \left( F \big( \gp + \gz \big) \,,  (\dot\varphi, C^{-1} \zeta)  \right),
\end{equation}
as claimed.
\end{proof}

\begin{lemma} \label{lem:secondterm}
Assume (A1)--(A3).
Then for $\gp$ in $X_+$,
\begin{equation}
  \|D_+^{1/2} \nabla_{X_+} \E_{\mu_\gp} \left( F(\gp + \gz)    \right)\|_2^2
  \leq
  \E_{\mu_\gp} \left( \|\frac{D_+^{1/2}}{1-\gep}\nabla_{X_+} F(  \gp + \gz )\|_2^2     \right)
  =
  \E_{\mu_\gp} \left( \|\frac{D_+^{1/2}}{1-\gep}\nabla F(  \gp + \gz )\|_2^2     \right)
  .
\end{equation}
\end{lemma}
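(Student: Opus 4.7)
The plan is to use the covariance identity from the preceding lemma, combined with a weighted Cauchy--Schwarz inequality and the operator form of the Brascamp--Lieb bound for $\mu_\varphi$ supplied by Lemma~\ref{lem:BL-mugp}.

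First, I would apply the preceding lemma with $\dot\varphi = D_+^{1/2}w$ for arbitrary $w \in X_+$: since $D_+$ maps into $X_+$ (by (A2)) and commutes with $C$ (by (A3)), this writes
\[
  (w,\, D_+^{1/2}\nabla_{X_+}\E_{\mu_\varphi}[F(\varphi+\cdot)])
  = \cov_{\mu_\varphi}\bigl(F(\varphi+\zeta),\, (C^{-1}D_+^{1/2}w,\, \zeta)\bigr),
\]
expressing the quantity of interest as the covariance of $F$ with a linear function of $\zeta$ of constant gradient $C^{-1}D_+^{1/2}w \in X$.

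Next, for this covariance I would invoke the Helffer--Sj\"ostrand representation $\cov_{\mu_\varphi}(F, (a,\zeta)) = \E_{\mu_\varphi}[(\nabla F,\, \mathcal{M}^{-1}a)]$, where $\mathcal{M}$ is the Witten Laplacian of $\mu_\varphi$ on vector fields; (A1) together with the proof of Lemma~\ref{lem:BL-mugp} provides the operator bound $\mathcal{M}^{-1} \leq C/(1-\varepsilon)$. A pointwise $D_+$-weighted Cauchy--Schwarz inequality in $L^2(\mu_\varphi; X)$ then yields
\[
  (w,\, D_+^{1/2}\nabla_{X_+}\tilde F)^2
  \leq \E_{\mu_\varphi}\bigl[\|D_+^{1/2}\nabla F\|_2^2\bigr]\cdot
  \E_{\mu_\varphi}\bigl[\,\bigl\|D_+^{-1/2}\mathcal{M}^{-1}C^{-1}D_+^{1/2}w\bigr\|_2^2\bigr],
\]
where $\tilde F(\varphi) := \E_{\mu_\varphi}[F(\varphi+\cdot)]$ and $D_+^{-1/2}$ is well defined on the range of $D_+^{1/2}$.

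Finally, using the commutativity of $\mathcal{M}$, $C$, $D_+$, and $Q_+$ (all consequences of (A3), extended to $\mathcal{M}$ via its definition) to pull $D_+^{\pm 1/2}$ through $\mathcal{M}^{-1}$ and $C^{-1}$, the second factor reduces to $\E_{\mu_\varphi}[\|\mathcal{M}^{-1}C^{-1}w\|_2^2]$. The spectral bound $\mathcal{M}^{-2} \leq C^2/(1-\varepsilon)^2$, a consequence of $\mathcal{M}^{-1} \leq C/(1-\varepsilon)$ under commutativity, then bounds this by $\|w\|_2^2/(1-\varepsilon)^2$. Taking the supremum over $w \in X_+$ with $\|w\|_2 \leq 1$ gives the claimed inequality; the final equality $\|D_+^{1/2}\nabla_{X_+}F\|_2^2 = \|D_+^{1/2}\nabla F\|_2^2$ is immediate from $D_+^{1/2} Q_+ = D_+^{1/2}$, a direct consequence of (A2).

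The main obstacle is the operator-theoretic justification of the Witten Laplacian manipulations: verifying the commutativity of $\mathcal{M}$ with $C$, $D_+$, and $Q_+$ as operators on vector fields in $L^2(\mu_\varphi; X)$, and correctly interpreting the action of $\mathcal{M}^{-1}$ on (essentially) constant vector fields in that space. A naive Cauchy--Schwarz $|\cov(F,G)|^2 \leq \var(F)\var(G)$ is insufficient here, since it would only produce a bound involving $\E[(\nabla F, C\nabla F)]$ and $(w, D_+C^{-1}w)$, whose product generally exceeds the target $\E[(\nabla F, D_+ \nabla F)]\|w\|_2^2$; the gain from the $D_+$-weighted Cauchy--Schwarz coupled with commutativity is precisely what resolves this gap.
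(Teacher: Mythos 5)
Your proposal is correct and takes essentially the same approach as the paper's proof. Both hinge on the same three ingredients: the Helffer--Sj\"ostrand representation to rewrite $\nabla_{X_+}\tilde F$ in terms of $\cL_\varphi^{-1}\nabla_X F$, the commutativity supplied by (A3) to move $D_+^{1/2}$ and $C$ through $\cL_\varphi^{-1}$, and the operator bound $\cL_\varphi C \geq (1-\varepsilon)Q$ (from (A1)) to close the estimate. The only cosmetic difference is organizational: the paper directly defines $M_\varphi = D_+^{1/2}C^{-1}\cL_\varphi^{-1}$, applies Jensen to bound $\|\E_{\mu_\varphi}(M_\varphi\nabla_X F)\|_2^2 \leq \E_{\mu_\varphi}\|M_\varphi\nabla_X F\|_2^2$, and then uses self-adjointness, positivity and the commuting form bound $M_\varphi \leq (1-\varepsilon)^{-1}D_+^{1/2}$ to conclude; you instead test against an arbitrary $w\in X_+$, apply a $D_+$-weighted Cauchy--Schwarz in $L^2(\mu_\varphi)\otimes X$, and then take a supremum over $\|w\|_2\leq 1$. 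Your diagnostic that the naive Cauchy--Schwarz $|\cov(F,G)|^2 \leq \var(F)\var(G)$ is insufficient is correct and well-targeted, and your treatment of the $D_+^{-1/2}$ insertion is legitimate in this finite-dimensional setting precisely because the commutativity from (A3) ensures $\mathcal{M}^{-1}C^{-1}D_+^{1/2}w$ lies in the range of $D_+^{1/2}$.
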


Applying the expectation $\E_{\mu_+}( \cdot)$ on both sides and substituting the result into \eqref{e:A2bd},
this completes Lemma \ref{lem:A2}.

\begin{proof}[Proof of Lemma~\ref{lem:secondterm}]
The block spin field $\gp \in X_+$ is fixed and in the proof we study the measure $\mu_\varphi$ on the subspace $X$.  
We define $L_\varphi$ to be the self-adjoint generator of the Glauber dynamics
for the conditional measure $\mu_\varphi$ on $X$, i.e., 
\begin{equation}
\label{e:langevin}
L_\varphi F (\gz) = \Delta_{X} F (\gz) + (\nabla_{X} H_\varphi(\gz),\nabla_X F (\gz));
\end{equation}
see also Appendix~\ref{app:convex}.
Moreover, we define the Witten Laplacian $\cL_\varphi$ on $L^2(\mu_\varphi) \otimes X$ by
\begin{equation}
\cL_\varphi = L_\varphi \otimes \id_X + \He_{X} H_\varphi \, .
\end{equation}
Using the Helffer-Sj\"ostrand representation (Theorem~\ref{thm:HS}),
one can rewrite the correlations \eqref{e:derivative2} under the conditional measure in terms
of the operator $\cL_\varphi$ as
\begin{align} \label{e:CnablaE}
  (\dot\gp, \nabla_{X_+} \E_{\mu_\varphi}( F (  \gp + \gz )))
  &=  \cov_{\mu_\gp}( F(\gp+\gz) ,  \, (C^{-1}\gz, \dot\gp))
    \nnb
  &= \E_{\mu_\varphi}(\nabla_{X} (C^{-1} \gz,\dot\gp) , \cL_\varphi^{-1}  \, \nabla_{X}  F ( \gp + \gz))
  \nnb
  &= (C^{-1} \dot\gp, \E_{\mu_\varphi}(\cL_\varphi^{-1}  \, \nabla_{X}  F ( \gp + \gz)))
  \nnb
  &= (\dot\gp, \E_{\mu_\varphi}(C^{-1}\cL_\varphi^{-1}  \, \nabla_{X}  F ( \gp + \gz))).
\end{align}
This is an identity  in $X_+$ which can be rewritten by using the projection $Q_+$ as
\begin{equation}
\nabla_{X_+} \E_{\mu_\varphi}( F (  \gp + \gz )) = 
\E_{\mu_\varphi}(Q_+ C^{-1}\cL_\varphi^{-1}  \, \nabla_{X}  F ( \gp + \gz)).
\end{equation}
Composing by $D_+^{1/2}$ and using that $D_+ = D_+ Q_+$ by \eqref{e:ass-coupling}, we deduce that 
\begin{equation}
\label{e:CnablaE 2}
D_+^{1/2} \nabla_{X_+} \E_{\mu_\varphi}( F (  \gp + \gz )) = 
\E_{\mu_\varphi}( M_\varphi  \, \nabla_{X}  F ( \gp + \gz)).
\end{equation}
where the operator $M_\varphi$ is defined as
\begin{equation}
  M_\varphi = D_+^{1/2}C^{-1} \cL_\varphi^{-1}.
\end{equation}
Since $D_+$ commutes with $C$ and with $\cL_\varphi C$ by \eqref{e:ass-commute},
the operator $M_\varphi$ acts on $L^2(\mu_\varphi) \otimes X$ and is self-adjoint.
From \eqref{e:CnablaE 2} and the Cauchy-Schwarz inequality,
we finally obtain
\begin{equation} \label{e:BL-CS}
  \|D_+^{1/2} \nabla_{X_+} \E_{\mu_\gp}(F(\gp+\gz))\|_2^2
  \leq \E_{\mu_\gp} \pB{ \|M_\varphi  \nabla_{X}  F(\gp+\gz)\|_2^2  },
\end{equation}
where $\|f\|_2^2 = (f,f)$ and $\nabla_{X_+}$ applies to $\varphi$ and $\nabla_X$ applies to $\zeta$.
In the following, we will show that the operator $M_\varphi$
obeys the following form inequality on $L^2(\mu_\varphi)\otimes X$:
\begin{equation}
  M_\varphi \leq (1-\gep)^{-1} D_+^{1/2},
\end{equation}
 which then concludes the proof of the lemma.
Recall that the operator $\cL_\gp$ is defined by
\begin{equation}
\cL_\gp
= L_\varphi \otimes \id_{X} + \He_X H_\varphi
= L_\varphi \otimes \id_{X} + \He_X V(\gp+\gz) + C^{-1}.
\end{equation}
Under Assumption~\eqref{e:ass-commute}, we can write
\begin{equation} \label{e:BL-QVC}
  (\He_X V)C
  = C^{1/2} (\He_X V) C^{1/2}.
\end{equation}
Using that $L_\varphi$ and $C$ are positive operators,
using Assumption~\eqref{e:ass-V},
it follows that
as operators on $L^2(\mu_\varphi) \otimes X$,
\begin{equation} \label{e:BL-A2-ass1}
  \cL_\varphi C
  = C^{1/2} \cL_\varphi C^{1/2}
  = L_\varphi \otimes C + \id_X + C^{1/2}(\He _X V (\gp+\gz) )C^{1/2}
  \geq (1 -\gep) Q 
  .
\end{equation}
Finally, using that $D_+=D_+Q$ by Assumption~\eqref{e:ass-coupling},
and using \eqref{e:ass-commute}, it follows that $M_\varphi$
satisfies the desired form bound
\begin{equation}
  M_\varphi \leq (1-\epsilon)^{-1}  D_+^{1/2}.
\end{equation}
This completes the proof.
\end{proof}

\section{Hierarchical $|\varphi|^4$ model}
\label{sec:phi4}

In this section, we apply Corollaries~\ref{cor: sum D0}--\ref{cor: SG} to the hierarchical $|\varphi|^4$ model.
Throughout this section, the dimension is fixed to be $d=4$.
Nevertheless, we sometimes write $d$ to emphasise that a factor $4$ arises from the dimension $d=4$
rather than from the exponent of $|\varphi|^4$.

\subsection{Renormalisation group flow}
\label{sec:phi4-rg}

\newcommand{\Vstep}{\hat V}
\newcommand{\Wstep}{\hat W}
\newcommand{\Kstep}{\hat K}

For $m^2>0$ (to be determined in Theorem~\ref{thm:phi4-rg} as a function of $g$ and $\nu$), 
we decompose
\begin{equation}
  (-\Delta_H+m^2)^{-1} = C_0 + \cdots + C_N,
\end{equation}
as in \eqref{e:DeltaHdecompQ}, and define the renormalised potential with respect to this decomposition
as in \eqref{e:sg-V+},
\begin{equation} \label{e:phi4-Vdef}
  e^{-V_{j+1}(\varphi)}
  = \E_{C_{j}} \pa{ e^{-V_j(\varphi+\zeta)} }.
\end{equation}
Note in particular that the sequence of renormalised potentials depends on the choice of $m^2$,
and that $C_j \leq \vartheta_j^2 L^{2j} Q_j$ where we define $\vartheta_j = 2^{-(j-j_m)_+}$.
As a consequence of the hierarchical structure, the renormalised potential can be written as
\begin{equation} \label{e:tildeV}
  V_j(\varphi) = \sum_{B\in \mathcal{B}_j} V_j(B,\varphi),
\end{equation}
where $V_j(B,\varphi)$ is a function of $\varphi$ that depends only on the restriction $\varphi|_B$ for any block $B \in \cB_j$.

We always restrict the domain of the functions $V_j(B)$ to the space $X_j(B) \cong \R^n$
of fields that are constant on $B$. 
Explicitly, for a block $B \in \mathcal{B}$,
denote by $i_B: \R^n \to \R^{nB}$ the linear map that sends $\varphi \in \R^n$ to the constant field $\varphi : B \to \R^n$
with $\varphi_x = \varphi$ at every $x \in B$.
Then $V_j(B) \circ i_B$ is a function of a single variable in $\R^n$ induced by $V_j(B)$.
In particular using \eqref{eq: moyenne F} one can view  $V_j(B)$ as a function in $\R^{nB}$, so that
for any ${\dot\varphi} \in X_j(B)$ taking the constant value ${\dot\varphi}_B \in \R^n$,
\begin{equation} \label{e:HeVB}
  {\dot\varphi} (\He V_j(B)) {\dot\varphi}
  = 
  {\dot \varphi_B} \He (V_j(B) \circ i_B)\dot\varphi_B.
\end{equation}
If there is a  constant $s >0$ such that 
\begin{equation} 
\label{e:HeVB borne inf}
\frac{1}{ |B| }  {\dot \varphi_B} \He (V_j(B) \circ i_B) \dot \varphi_B 
\geq - s  (\dot\varphi_B,\dot\varphi_B),
\end{equation}
then using that $(\dot\varphi,\dot\varphi) =|\dot\varphi_B|^2|B|$, we deduce 
\begin{equation} 
 {\dot\varphi} (\He V_j(B)) {\dot\varphi}
\geq - s (\dot\varphi ,\dot\varphi ).
\end{equation}
With the notation \eqref{e:grad-He-Q}, the inequalities \eqref{e:HeVB borne inf}
and $C_j \leq \vartheta_j^2 L^{2j} Q_j$, it follows that
\begin{equation} 
\label{e:HessiB}
C_j^{1/2}(\He_{X_j} V_j)C_j^{1/2}
\geq - s \vartheta_j^2 L^{2j}  Q_j .
\end{equation}
Thus, in the hierarchical model, Assumption~(A1) in \eqref{e:ass-V}
with $\epsilon_j = s \vartheta_j^2 L^{2j}$
follows from \eqref{e:HeVB borne inf}. In the rest of this section, we therefore
reduce to the study of the function $V_j(B) \circ i_B$ in $\R^n$.

The renormalisation group for the $|\varphi|^4$ model provides precise estimates on the
renormalised potential $V_j$ when the field $\varphi$ is not too large.
The following theorem about the renormalisation group flow is proved in \cite{rg-brief}.
Note that $V_j$ in \eqref{e:phi4-Vdef} is the full renormalised potential
(the logarithm of the density with respect to the Gaussian reference measure),
not its leading contribution as in \cite{rg-brief}.
We will denote the latter instead by $\Vstep_j$ as it plays a less central role in the arguments of this paper.
It is determined by the coupling constants $(g_j,\nu_j) \in \R^2$ through
\begin{equation} \label{e:VWstep}
  \Vstep_j(B,\varphi) = \sum_{x\in B} \pa{ \frac14 g_j|\varphi_x|^4 + \frac12\nu_j|\varphi_x|^2},
  \quad \Wstep_j(B,\varphi) = \sum_{x \in B} \pa{ \frac16 \alpha_j g_j^2 |\varphi_x|^6},
\end{equation}
where $\alpha_j = \alpha_j(m^2) = O(L^{2j}L^{-(j-j_m)_+})$ is an explicit ($j$-dependent) constant
and $j_m = \floor{\log_L m^{-1}}$ is the mass scale.
We stress the fact that if the field is constant on $B$ then  
\begin{equation} 
\label{e:VWstep 2}
  \Vstep_j(B) \circ i_B (\varphi) = |B| \pa{ \frac14 g_j|\varphi|^4 + \frac12\nu_j|\varphi|^2},
  \quad 
  \Wstep_j(B) \circ i_B (\varphi) = |B| \pa{ \frac16 \alpha_j g_j^2 |\varphi|^6},
\end{equation}
so that in the following we will often consider the effective potential normalised by the factor $1/ |B|$
(see also \eqref{e:HeVB borne inf}).

For the statement of the theorem, define
the \emph{fluctuation field scale} $\ell_j$ and the \emph{large field scale} $h_j$ by
\begin{equation}
  \label{eq: scales}
  \ell_j =  L^{-(d-2) j/2} = L^{-j},
  \qquad
  h_j = L^{-dj/4}g_j^{-1/4} = L^{-j} g_j^{-1/4}.
\end{equation}
Finally, we define $\cF_j$ by $F \in \cF_j$ if for any $B\in\cB_j$ there is a function
$\varphi \in \R^{n\Lambda} \mapsto F(B,\varphi)$ that (i) depends only on the average of $\varphi$ over the block $B$;
(ii) the function $F(B) \circ i_B$ is the same for any block $B$;
and (iii) the function $F(B)$ is invariant under rotations, i.e., $F(\varphi,B) = F(T\varphi,B)$ for any $T \in O(n)$
acting on $\varphi \in \R^{n\Lambda}$ by $(T\varphi)_x = T\varphi_x$; see \cite[Definition~5.1.5]{rg-brief}.

\begin{theorem} \label{thm:phi4-rg}
Let $L \geq L_0$.
For any $g>0$ small enough, there exists $\nu_c(g) = -C(n+2)g + O(g^2)$ such that for
$\nu > \nu_c(g)+cL^{-2N}$, there exists $m^2 > 0$,
a sequence of coupling constants $(g_j,\nu_j, u_j) \subset \R^3$,
and $\hat K_j \in \cF_j$ 
such that the following are true.
\begin{enumerate}
\item
  The full renormalised potential $V_j$ defined by \eqref{e:phi4-Vdef} satisfies:
  for all $\varphi$ that are constant on $B$,
\begin{equation} \label{e:phi4-rg-repr}
  e^{-V_j(B,\varphi)} = e^{-u_j|B|}(e^{-\Vstep_j(B,\varphi)}(1+\Wstep_j(B,\varphi)) + \Kstep_j(B,\varphi)).
\end{equation}
\item
  The sequence $(g_j,\nu_j)$ of coupling constants satisfies $(g_0,\nu_0)=(g,\nu-m^2)$, and
  \begin{equation} \label{e:phi4-rg-gnubd}
    g_{j+1} = g_j - \beta_j g_j^2 + O(2^{-(j-j_m)_+}g_j^3),
    \qquad
    0 \geq L^{2j}\nu_j = O(2^{-(j-j_m)_+}g_j),
  \end{equation}
  where $\beta_j = \beta_0^0(1+m^2L^{2j})^{-2}$ for an absolute constant $\beta_0^0>0$ and $j_m = \floor{\log_Lm^{-1}}$.
\item
  The functions $\hat K_j$ satisfy $\hat K_0=0$ and
  \begin{align} \label{e:phi4-rg-Tinfty}
    \sup_{\varphi \in \R^n} \max_{0\leq \alpha \leq 3} h_j^{\alpha} |\nabla^\alpha (\Kstep_j(B) \circ i_B)(\varphi)| &= O(2^{-(j-j_m)_+}g_j^{3/4}),
    \\
    \label{e:phi4-rg-T0}
    \max_{0\leq \alpha \leq 3} \ell_j^{\alpha} |\nabla^\alpha (\Kstep_j(B) \circ i_B)(0)| &= O(2^{-(j-j_m)_+}g_j^{3}),
  \end{align}
  where $\ell_j = L^{-j}$ and $h_j = L^{-j} g_j^{-1/4}$.
\item
  The relation between $t = \nu - \nu_c(g) >0$ and $m^2>0$ satisfies, as $t \downarrow 0$,
  \begin{equation} \label{e:phi4-rg-mepsilon}
    m^2 \sim C_g t(\log t^{-1})^{-(n+2)/(n+8)}.
  \end{equation}
\end{enumerate}  
\end{theorem}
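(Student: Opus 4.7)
The plan is to invoke the renormalisation group construction of the critical hierarchical $|\varphi|^4$ model carried out in \cite{rg-brief} and merely repackage its output in the normalisation \eqref{e:phi4-rg-repr} used here. The hierarchical structure \eqref{e:DeltaHdecompQ} is crucial: because the fluctuation field $\zeta$ with covariance $C_j$ is constant on each block $B \in \cB_j$, the Gaussian convolution defining $V_{j+1}$ in \eqref{e:phi4-Vdef} preserves the factorisation \eqref{e:tildeV} of $V_j$ over $j$-blocks. Consequently one only has to analyse a single-block map on functions of one variable in $\R^n$, parametrised by $(u_j,g_j,\nu_j,\Kstep_j)$ via \eqref{e:phi4-rg-repr}, where $\Vstep_j,\Wstep_j$ collect the relevant and marginal monomials $|\varphi|^2,|\varphi|^4,|\varphi|^6$ from \eqref{e:VWstep} and $\Kstep_j$ absorbs the nonperturbative remainder, measured in the scaled $T_\varphi$-type norms of \cite{rg-brief} with fluctuation scale $\ell_j$ and large-field scale $h_j$ as in \eqref{eq: scales}. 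The $(1+\Wstep_j)$ factor is carried because after one integration the $|\varphi|^6$ contribution is generated at leading order and keeping it explicit yields the correct second-order beta function for $g_j$.

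Next, I would carry out the perturbative step. Expanding $e^{-V_j(\varphi+\zeta)}$ in $\zeta$, performing the Gaussian integration against $C_j$, and reordering monomials produces a new $\Vstep_{j+1}$ with $g_{j+1} = g_j - \beta_j g_j^2 + O(2^{-(j-j_m)_+}g_j^3)$ and $L^{2(j+1)}\nu_{j+1} = O(2^{-(j-j_m)_+}g_j)$. The one-loop constant $\beta_j$ arises from contracting $g_j^2(|\varphi|^4)^2$ against $C_j \le \vartheta_j^2 L^{2j}Q_j$, and the factor $(1+m^2L^{2j})^{-2}$ in $\beta_j$ reflects the mass-induced cutoff of the flow for $j>j_m$. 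Everything not captured by $\Vstep_{j+1}+\Wstep_{j+1}$ is collected into $\Kstep_{j+1}$, and standard RG bookkeeping shows that the corresponding map is a contraction in the $\Kstep$-norm: the polynomial-to-$\Kstep$ contribution is $O(g_j^3)$ (the source of \eqref{e:phi4-rg-T0}), the $\Kstep_j$-to-$\Kstep_{j+1}$ contribution is linear with small operator norm, and the $\Kstep \times \Kstep$ contribution is quadratic, yielding \eqref{e:phi4-rg-Tinfty}.

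The main obstacle is the nonperturbative stable-manifold construction, needed because the $\nu$-direction of the linearised flow is expanding. The plan is to run the RG flow forward in $(g_j,\nu_j,\Kstep_j)$ for fixed mass parameter $m^2$ and to show, via a contraction-mapping argument on the Banach space of flow sequences, that there is a unique choice of initial datum $\nu = \nu_c(g)+t(m^2)$ for which $\Kstep_j$ stays uniformly small for all $j \le N$; the implicit function theorem then produces $\nu_c(g) = -C(n+2)g+O(g^2)$ as the $m^2\downarrow 0$ value, with the sign coming from $\beta_0^0>0$. Combining this with the finite-dimensional fixed-point equation coupling $m^2$ to the initial mass $\nu-\nu_c(g)$ determines $m^2 = m^2(g,\nu)$ with $m^2>0$ whenever $t = \nu-\nu_c(g) > cL^{-2N}$.

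Finally, the asymptotic \eqref{e:phi4-rg-mepsilon} would be extracted from the coupled flow for $(g_j,\nu_j)$. Solving the scalar recursion $g_{j+1}^{-1} \approx g_j^{-1} + \beta_0^0$ for $j\le j_m$ gives $g_j \sim (\beta_0^0 j)^{-1}$ and hence $g_{j_m} \asymp (\log m^{-1})^{-1}$. Integrating the $\nu$-flow against this running coupling, and using that the anomalous dimension of $\nu$ at one loop is proportional to $(n+2)g_j$ while the beta function for $g$ has prefactor proportional to $(n+8)$, yields the universal exponent
\begin{equation}
  t \asymp m^2 (\log m^{-1})^{(n+2)/(n+8)},
\end{equation}
which inverts to \eqref{e:phi4-rg-mepsilon}. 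All ingredients above are established in \cite{rg-brief}; the remaining task is to verify that the renormalised potential produced there, expressed as $\Vstep_j+\Wstep_j+\Kstep_j$, coincides with the full $V_j$ of \eqref{e:phi4-Vdef}, which follows from uniqueness of the Gaussian convolution.
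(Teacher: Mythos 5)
Your proposal reads as a sketch of how the renormalisation group flow itself is constructed in \cite{rg-brief} (perturbative polynomial flow, contraction of the nonperturbative remainder $\Kstep$, stable-manifold argument for $\nu_c$, mass asymptotics). That is a fair summary of the inputs, but it misses the one point the paper's proof actually has to establish, and the step on which you lean hardest --- ``the renormalised potential produced there coincides with the full $V_j$ of \eqref{e:phi4-Vdef}, which follows from uniqueness of the Gaussian convolution'' --- does not go through as stated.

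The issue is that the covariance decomposition used in \cite{rg-brief} is \emph{not} the one used here. The reference works with $(-\Delta_H+m^2)^{-1} = \sum_j \tilde C_j + \hat C_N$ where $\tilde C_j = \tilde\lambda_j P_j$ is supported on the orthogonal fluctuation projections $P_j=Q_{j-1}-Q_j$, whereas \eqref{e:phi4-Vdef} defines $V_j$ with respect to the decomposition \eqref{e:DeltaHdecompQ}, $C_j = \lambda_j Q_j$, supported on the block-averaging projections $Q_j$. These two families of partial sums differ: one has
\begin{equation}
  \sum_{k=0}^{j-1} C_k \;=\; \sum_{k=1}^{j-1} \tilde C_k \;+\; \tilde\lambda_j Q_{j-1},
\end{equation}
so that $V_j$ as defined here is obtained from the scale-$(j-1)$ output of the RG flow in \cite{rg-brief} only after one further Gaussian integration against $\tilde\lambda_j Q_{j-1}$ --- a step that integrates over field directions without changing the block size. ``Uniqueness of the Gaussian convolution'' does not make this extra step disappear; it has to be carried out, and its output expressed again in the $(\Vstep,\Wstep,\Kstep)$ form with controlled norms. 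The paper does exactly this by invoking \cite[Proposition~6.2.2 and Remark~10.7.2]{rg-brief}, which is the ``last step'' (reblocking-free) of the RG there, applied at \emph{every} scale $j$ rather than only at $j=N$. This intermediate translation, and the verification that the resulting bounds on $(g_j,\nu_j,\Kstep_j)$ survive the extra convolution with only $O(\vartheta_j g_j^{3/4})$ and $O(\vartheta_j g_j^{3})$ errors, is the actual content of the proof; your proposal omits it entirely. Parts (2) and (4) (the coupling-constant recursion and the asymptotics of $m^2$) are then indeed direct imports from \cite[Theorem~6.2.1, Corollary~6.2.2, Theorem~4.2.1]{rg-brief} as you indicate, so the rest of your outline is consistent with the paper once this gap is filled.
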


In the above theorem and everywhere else, the error terms $O(\cdot)$ are uniform in the scale $j$.
The theorem is mainly proved and explained in \cite{rg-brief}.
For our application to the analysis of the spectral gap of the Glauber dynamics,
it is however more convenient to use a slightly different organisation than that used in \cite{rg-brief}.
It is here better to use the decomposition \eqref{e:DeltaHdecompQ}
instead of \eqref{e:DeltaHdecompP} (used in \cite{rg-brief}).
We translate between the conventions in \cite{rg-brief} and those used in the statement of Theorem~\ref{thm:phi4-rg} in Appendix~\ref{app:phi4-rg-pf}
and also give precise references there.

We remark that the normalising constants $u_j$ are unimportant for our purposes, and that
the recursion \eqref{e:phi4-rg-gnubd} implies that, as $m^2 \downarrow 0$,
\begin{equation} \label{e:gjm}
  g_j^{-1} = O(g_{j_m}^{-1}),
  \qquad
  g_{j_m}^{-1} \sim \beta_0^0 \log m^{-1};
\end{equation}
see \cite[Proposition~6.1.3]{rg-brief}.

\bigskip

A variant of the theorem implies the following asymptotic behaviour of the susceptibility as the critical point is approached.

\begin{corollary}
  Let $F= \sum_x\varphi_x^{1}$. Then for $t = \nu-\nu_c \geq c L^{-2 N}$,
  \begin{equation} \label{e:phi4-varF}
    \frac{ \var_\mu(F)}{|\Lambda_N|} = \frac{1}{m^2} \pa{1+o\pa{\frac{1}{L^{2N}m^2}}}
    \sim
    C_g \frac{1}{t}(-\log t)^{(n+2)/(n+8)},
  \end{equation}
  with $o(1)$ tending to $0$ as $L^{2N}m^2 \to \infty$,
  and $\var_\mu$ denotes the variance under the full $|\varphi|^4$ measure
  as in \eqref{eq: mu mesure}.
\end{corollary}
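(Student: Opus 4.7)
The plan is to express the susceptibility as the second derivative of a source-coupled partition function and then control the deviation from the free-field answer via the renormalisation group. Since $\bbE_\mu(F)=0$ by the internal $\varphi^1\mapsto -\varphi^1$ symmetry,
\begin{equation*}
\var_\mu(F) = \partial_h^2\big|_{h=0}\log Z(h), \qquad Z(h) = \bbE_{C_{\geq 0}}\!\pa{e^{-\tilde V_0(\varphi) + h F(\varphi)}},
\end{equation*}
where $\tilde V_0 := V_0 - \tfrac{m^2}{2}|\cdot|^2$ is the potential after the mass $m^2$ of Theorem~\ref{thm:phi4-rg} has been absorbed into the Gaussian reference $\hat C = (-\Delta_H+m^2)^{-1}$. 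Letting $\b v\in\R^{n\Lambda}$ be the vector with $\b v_x^1 = 1$ and other components zero, so that $F=(\b v,\varphi)$, the constant shift $\varphi \mapsto \varphi+\phi_h$ with $\phi_h := (h/m^2)\b v$ satisfies $(-\Delta_H+m^2)\phi_h = h\b v$, and completing the square gives the exact identity
\begin{equation*}
\var_\mu(F) \;=\; \frac{|\Lambda|}{m^2} \;+\; \partial_h^2\big|_{h=0}\log \bbE_{\hat C}\!\pa{e^{-\tilde V_0(\varphi+\phi_h)}}.
\end{equation*}
The first summand is the free-field susceptibility, and the task reduces to showing that the correction is $o(|\Lambda|/(L^{2N}m^4))$.

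To bound the correction I would re-run the renormalisation group of Theorem~\ref{thm:phi4-rg} starting from the shifted initial potential $\tilde V_0(\cdot+\phi_h)$ for each small $h$. Since $\phi_h$ is constant on every block at every scale, it commutes with each projection $Q_j$, and so the renormalised potentials are simply $V_j^{(h)}(B,\varphi) = V_j(B,\varphi+\phi_h)$ up to additive constants $u_j^{(h)}|B|$, and the representation \eqref{e:phi4-rg-repr} and the estimates \eqref{e:phi4-rg-gnubd}--\eqref{e:phi4-rg-T0} remain valid with the same coupling constants $(g_j,\nu_j)$ but with $\Wstep_j,\Kstep_j$ evaluated at $\varphi+\phi_h$. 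The required second derivative thus reduces to $-|\Lambda|\,\partial_h^2|_{h=0}u_N^{(h)}$ plus boundary contributions from $\Wstep_N,\Kstep_N$; using $g_N = O(1/\log m^{-1})$ from \eqref{e:gjm}, the bound $L^{2N}|\nu_N| = O(g_N)$ from \eqref{e:phi4-rg-gnubd}, and the smallness of $\Wstep_N,\Kstep_N$ from \eqref{e:phi4-rg-Tinfty}, one checks that all of these contributions are of the form $(|\Lambda|/(L^{2N}m^4))\cdot o(1)$.

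Finally, inserting the asymptotic relation $m^2\sim C_g\,t(-\log t)^{-(n+2)/(n+8)}$ from \eqref{e:phi4-rg-mepsilon} into $|\Lambda|/m^2$ produces the announced $C_g\,t^{-1}(-\log t)^{(n+2)/(n+8)}$ behaviour. The main obstacle is verifying that the RG estimates of Theorem~\ref{thm:phi4-rg} are stable under the constant shift $\phi_h$ uniformly for $h$ in a small neighbourhood of $0$, so that two derivatives in $h$ may be legitimately exchanged with the flow bounds. Because $\phi_h$ lies in the top-scale subspace $X_N$ and perturbs the bare couplings only by $O(h)$, this is a minor extension of the proof of Theorem~\ref{thm:phi4-rg} in \cite{rg-brief} --- the ``variant of the theorem'' alluded to just before the statement of the corollary.
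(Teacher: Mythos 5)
The paper does not give an independent proof of this corollary: it is simply cited to \cite[Theorem~5.2.1 and (6.2.17)]{rg-brief}, after observing that $\var_\mu(F)/|\Lambda_N|$ is the finite-volume susceptibility. Your sketch is therefore not comparable to a proof in the paper, but it is a plausible reconstruction of how the susceptibility asymptotics are derived in the cited reference, and the strategy you propose --- expressing $\var_\mu(F)$ as $\partial_h^2\log Z(h)$, completing the square against the massive Gaussian $\hat C=(-\Delta_H+m^2)^{-1}$ to extract the free-field term $|\Lambda|/m^2$, and then observing that the constant shift $\phi_h$ renormalises trivially in the hierarchical model (so $V_j^{(h)}(\varphi)=V_j(\varphi+\phi_h)$ exactly by shift-invariance of Gaussian integration) --- is exactly the right mechanism.

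One point in your error estimate is imprecise and, as written, insufficient. You quote ``$L^{2N}|\nu_N| = O(g_N)$'' from \eqref{e:phi4-rg-gnubd}, but the actual bound carries the factor $\vartheta_N = 2^{-(N-j_m)_+}$, i.e., $L^{2N}|\nu_N| = O(\vartheta_N g_N)$. This factor is not cosmetic: the correction to the susceptibility that comes from $\He V_{N+1}$ is of order $|\Lambda|\,\nu_N/m^4 + \dots$, and to match the claimed error $|\Lambda|\cdot o(L^{-2N}/m^4)$ as $L^{2N}m^2\to\infty$ you need $\vartheta_N g_N \to 0$ in that regime, not merely $g_N\to 0$. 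When $m^2$ is held fixed and $N\to\infty$, the flow of $g_j$ freezes at the mass scale $j_m = \lfloor\log_L m^{-1}\rfloor$ and $g_N\to g_{j_m}$ stays a fixed positive constant; it is only the massive decay $\vartheta_N = 2^{-(N-j_m)}\to 0$ (equivalently, $L^{2(N-j_m)}\asymp L^{2N}m^2\to\infty$) that produces the $o(1)$. The same caveat applies to your treatment of the $\Wstep_N$ and $\Kstep_N$ contributions, whose bounds in \eqref{e:phi4-rg-Tinfty} likewise carry the $\vartheta_N$ factor. With this correction your sketch is sound, and the final asymptotics follow from \eqref{e:phi4-rg-mepsilon} exactly as you write.
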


Indeed, the corollary is
\cite[Theorem~5.2.1 and (6.2.17)]{rg-brief},
noting that $\var_\mu(F)/|\Lambda_N|$ is the finite volume susceptibility studied there.
The corollary provides the upper bound in Theorem~\ref{thm:gap-phi4}
since, with $F$ as defined in the corollary,
\begin{equation}
  \frac{(\nabla F,\nabla F)}{|\Lambda_N|}  = 1,
\end{equation}
and $\gamma_N(g,\nu_c(g)) \leq \var_\mu(F)/\E_\mu(\nabla F,\nabla F)$ for any $F$
by definition of the spectral gap.

\subsection{Small field region}
\newcommand{\tj}{\vartheta_j}

The bounds of Theorem~\ref{thm:phi4-rg} are effective for small fields $|\varphi| \leq h_j$.
For such fields $\varphi$, the approximate effective potential
$\Vstep_j(\varphi)$ is a good approximation to $V_j(\varphi)$.
Indeed, then $e^{\Vstep_j(B,\varphi)} = e^{O(1)}$ and
\begin{align} 
\label{e:V-Vstep}
  V_j(B,\varphi)-\Vstep_j(B,\varphi)
  &=
  -\log (1+\Wstep_j(B,\varphi) + e^{\Vstep_j(B,\varphi)} \Kstep_j(B,\varphi)) + u_j |B|
  \nnb
  &= -\Wstep_j(B,\varphi)- e^{\Vstep_j(B,\varphi)} \Kstep_j(B,\varphi) + u_j |B|
  + O(\Wstep_j+e^{\Vstep_j} \Kstep_j)^2.
\end{align}
Recall the abbreviation $\tj = 2^{-(j-j_m)_+}$ where $j_m = \lfloor \log_L m^{-1} \rfloor$ is the mass scale.
By \eqref{e:phi4-rg-gnubd} and \eqref{e:phi4-rg-Tinfty} and the definition of $\Wstep$,
uniformly in $\varphi \in \R^n$ with $|\varphi| \leq h_j$,
\begin{align}
  \max_{0\leq \alpha \leq 3} h_j^{\alpha} |\nabla^\alpha (\Wstep_j(B) \circ i_B)(\varphi)| &= O(\tj g_j^{2/4}),
  \\
  \max_{0\leq \alpha \leq 3} h_j^{\alpha} |\nabla^\alpha (e^{\Vstep_j(B)}\Kstep_j(B) \circ i_B)(\varphi)| &= O(\tj g_j^{3/4}),
\end{align}
and the remainder satisfies an analogous estimate.
In particular, by \eqref{e:V-Vstep},
\begin{align} \label{e:V-VWstep}
  \He (V_j(B) \circ i_B)(\varphi)
  &= \He ((\Vstep_j-\Wstep_j)(B) \circ i_B)(\varphi) + O(\tj h_j^{-2}g_j^{3/4})\id_{n}
  \nnb
  &= \He ((\Vstep_j-\Wstep_j)(B) \circ i_B)(\varphi) + O(\tj L^{2j}g_j^{5/4}) \id_{n}
    ,
\end{align}
where $\id_{n}$ is the identity matrix acting on the single-spin space $\R^n$.
The first term on the right-hand side can be computed explicitly from \eqref{e:VWstep},
which implies that as quadratic forms,
\begin{align} \label{e:HetildeV}
  \frac{1}{|B|} \He (\Vstep_j(B) \circ i_B)(\varphi)
  &=
  ((g_j|\varphi|^2 +\nu_j)\id_n + 2g_j (\varphi^k\varphi^l)_{k,l})
  \geq
  \big(   g_j|\varphi|^2 + \nu_j  \big) \id_{n},
  \\
  \label{e:HetildeW}
  \frac{1}{|B|}|\He (\Wstep_j(B) \circ i_B)(\varphi)|
  &\leq
  5\alpha_j g_j^2(|\varphi|^4 \id_n + 2|\varphi|^2(\varphi^k\varphi^l)_{k,l})
  \leq (15\alpha_j g_j^2|\varphi|^4) \id_n,
\end{align}
where $|B| =L^{dj}$,
and where we used that the $n\times n$ matrix $(\varphi^k\varphi^l)_{k,l}$ has eigenvalues $0$ and $|\varphi|^2 \geq 0$.
Combining \eqref{e:V-VWstep} with \eqref{e:HetildeV}--\eqref{e:HetildeW}, we find that
\begin{equation} \label{e:phi4-small-HeV-bis}
  \frac{1}{|B|} \He (V_j(B) \circ i_B)(\varphi)
  \geq \pB{g_j|\varphi|^2 + \nu - 15\alpha_j g_j^2 |\varphi|^4 - O(\tj L^{-2j}g_j^{5/4})} \id_n.
\end{equation}
Using that $\alpha_j g_j|\varphi|^2 = O(g_j^{1/2})$ for $|\varphi| \leq h_j$ (since $\alpha_j = O(L^{2j})$),
in summary, we have obtained the following corollary of Theorem~\ref{thm:phi4-rg}.

\begin{corollary} \label{cor:phi4-small}
Suppose that $V_0$ satisfies the conditions of Theorem~\ref{thm:phi4-rg}.
Then for all scales $j\in\N$ and all $\varphi \in \R^n$ with $|\varphi|\leq h_j$,
the effective potential satisfies the quadratic form bounds
\begin{equation} \label{e:phi4-small-HeV}
  \frac{1}{|B|}\He (V_j(B) \circ i_B)(\varphi)
  \geq \pB{g_j|\varphi|^2(1-O(g_j^{1/2})) + \nu_j - O(\tj L^{-2j}g_j^{5/4})}\id_n,
\end{equation}
with $0 \leq -\nu_j = O(\tj L^{-2j}g_j)$, and furthermore
\begin{equation} \label{e:phi4-small-dV}
  \frac{1}{|B|} \nabla (V_j(B) \circ i_B)(\varphi)
  = g_j\varphi|\varphi|^2(1-O(g_j^{1/2})) + \nu_j\varphi +   O(\tj L^{-3j} g_j).
\end{equation}
\end{corollary}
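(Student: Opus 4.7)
The plan is to read off the corollary by taking the logarithm of the factorised representation \eqref{e:phi4-rg-repr} and exploiting the smallness of $\Wstep_j$ and $\Kstep_j$ on the fluctuation scale $|\varphi|\le h_j$. Essentially all of the substantive computation needed is already written out in \eqref{e:V-Vstep}--\eqref{e:phi4-small-HeV-bis}; what remains is to repackage those bounds (and run the analogous computation for $\nabla$ rather than $\He$) into the form that will be used to verify Assumption (A1) via \eqref{e:HeVB borne inf}.

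For the Hessian, I would start from the identity \eqref{e:V-VWstep}, in which the $\Vstep_j{-}\Wstep_j$ piece is explicit and the remainder is of size $O(\tj L^{2j}g_j^{5/4})\,\id_n$ by \eqref{e:phi4-rg-Tinfty}. Combining the explicit lower bound \eqref{e:HetildeV} on $\He\Vstep_j$ with the upper bound \eqref{e:HetildeW} on $\He\Wstep_j$ gives precisely \eqref{e:phi4-small-HeV-bis}. The last simplification is to absorb the $15\alpha_j g_j^2|\varphi|^4$ term into the factor $g_j|\varphi|^2\bigl(1-O(g_j^{1/2})\bigr)$: using $\alpha_j=O(L^{2j})$ from Theorem~\ref{thm:phi4-rg} together with $|\varphi|^2\le h_j^2=L^{-2j}g_j^{-1/2}$, one has $\alpha_jg_j|\varphi|^2=O(g_j^{1/2})$, which is exactly what is required. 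The signed bound $0\le-\nu_j=O(\tj L^{-2j}g_j)$ is a direct rewrite of the second half of \eqref{e:phi4-rg-gnubd}.

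For the gradient bound \eqref{e:phi4-small-dV}, I would run the parallel argument. From \eqref{e:VWstep 2} one reads
\[
\tfrac{1}{|B|}\nabla(\Vstep_j(B)\circ i_B)(\varphi)=g_j\varphi|\varphi|^2+\nu_j\varphi,\qquad
\tfrac{1}{|B|}|\nabla(\Wstep_j(B)\circ i_B)(\varphi)|\le \alpha_jg_j^2|\varphi|^5.
\]
On $|\varphi|\le h_j$ the $\Wstep_j$ contribution equals $(\alpha_jg_j|\varphi|^2)\cdot g_j\varphi|\varphi|^2=O(g_j^{1/2})\cdot g_j\varphi|\varphi|^2$, which is absorbed into the $(1-O(g_j^{1/2}))$ prefactor. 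The $\Kstep_j$ contribution, bounded via \eqref{e:phi4-rg-Tinfty} and divided through by $|B|=L^{4j}$, gives $|B|^{-1}h_j^{-1}\tj g_j^{3/4}=L^{-4j}\cdot L^jg_j^{1/4}\cdot\tj g_j^{3/4}=\tj L^{-3j}g_j$, matching the stated remainder. The quadratic remainder in \eqref{e:V-VWstep} is of the same form after differentiation once and is harmless. No conceptual obstacle arises at this step; the only care needed is in tracking the interplay between the factor $|B|$, the fluctuation scale $h_j$, and the weights $\tj g_j^{3/4}$ appearing in \eqref{e:phi4-rg-Tinfty}--\eqref{e:phi4-rg-T0}, so that the three summands $\Vstep_j$, $\Wstep_j$, $\Kstep_j$ each contribute in exactly the announced places.
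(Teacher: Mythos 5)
Your proposal is correct and follows essentially the same route as the paper: the Hessian bound \eqref{e:phi4-small-HeV} is exactly the repackaging of \eqref{e:V-Vstep}--\eqref{e:phi4-small-HeV-bis} that precedes the corollary statement, with the $15\alpha_j g_j^2|\varphi|^4$ term absorbed using $\alpha_j g_j|\varphi|^2=O(g_j^{1/2})$ for $|\varphi|\le h_j$. The paper leaves the gradient bound \eqref{e:phi4-small-dV} implicit, and your parallel computation — tracking the $\Vstep_j$, $\Wstep_j$, $e^{\Vstep_j}\Kstep_j$ and quadratic-remainder contributions with the correct powers of $h_j$ and $|B|=L^{4j}$ — is precisely the intended derivation.
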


\subsection{Large field region}

Using the small field estimates as input,
we are going to prove the following estimate for the large field region.

\begin{theorem} \label{thm:phi4-convex}
Assume the conditions of Theorem~\ref{thm:phi4-rg},
in particular that $g>0$ is sufficiently small and that $\nu > \nu_c(g) + cL^{-2N}$.
Then for all $j \in \N$ and all $B \in \cB_j$,
the effective potential satisfies
\begin{equation} \label{e:phi4-convex}
  L^{2j} \frac{1}{|B|} \He (V_j(B) \circ i_B) \geq \epsilon_j \id_n
  \quad
  \text{for all $\varphi\in\R^n$ with $|\varphi| \geq h_j$,}
\end{equation}
where the constants $\epsilon_j$ satisfy 
$\epsilon_{j+1} = \bar\epsilon_j - O(\vartheta_j^2\bar\epsilon_{j}^2)$ and $\epsilon_0= \frac15 g_0^{1/2}$ 
where $\bar\epsilon_j  = \epsilon_j \wedge \frac15 g_j^{1/2}$.
\end{theorem}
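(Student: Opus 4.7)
I would prove Theorem \ref{thm:phi4-convex} by induction on the scale $j$.

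\emph{Base case ($j=0$).}
Since $V_0(\varphi) = \frac14 g|\varphi|^4 + \frac12 \nu_0 |\varphi|^2$ with $|B_0|=1$, direct computation gives $\He V_0(\varphi) = (g|\varphi|^2+\nu_0)\id_n + 2g\varphi\varphi^T \geq (g|\varphi|^2+\nu_0)\id_n$. For $|\varphi|\geq h_0 = g^{-1/4}$, $g|\varphi|^2 \geq g^{1/2}$, while the hypotheses of Theorem \ref{thm:phi4-rg} and the definition $\nu_0 = \nu - m^2$ force $|\nu_0| = O(g) \ll g^{1/2}$. Hence $\He V_0 \geq \frac15 g_0^{1/2} \id_n$ for $g$ small.

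\emph{Inductive step.}
Define the per-site effective potential $f_j(\varphi) := (V_j(B)\circ i_B)(\varphi)/|B|$ for any $B \in \cB_j$, well defined and independent of $B$ by hierarchical symmetry; the inductive hypothesis reads $L^{2j}\He f_j(\psi) \geq \epsilon_j \id_n$ for $|\psi|\geq h_j$. The hierarchical factorisation of \eqref{e:phi4-Vdef} over the $L^d$ children of a block $B\in\cB_{j+1}$ yields
\begin{equation}
f_{j+1}(\varphi) = -\frac{1}{|B'|}\log \int_{\R^n} e^{-|B'| f_j(\varphi+\zeta)}\gamma_j(d\zeta),
\qquad |B'|=L^{dj},\ |B'|\sigma_j^2 = \lambda_j \leq \vartheta_j^2 L^{2j},
\end{equation}
where $\gamma_j$ is the centred Gaussian on $\R^n$ with covariance $\sigma_j^2\id_n$. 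Two differentiations give the standard identity
\begin{equation}
\He f_{j+1}(\varphi) = \bbE_{\nu_\varphi}\!\bigl[\He f_j(\varphi+\zeta)\bigr] - |B'|\cov_{\nu_\varphi}\!\bigl[\nabla f_j(\varphi+\zeta)\bigr],
\qquad \nu_\varphi(d\zeta) \propto e^{-|B'| f_j(\varphi+\zeta)}\gamma_j(d\zeta).
\end{equation}

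\emph{Estimating the two terms.}
Combining the inductive hypothesis with Corollary \ref{cor:phi4-small} produces a global piecewise lower bound on $L^{2j}\He f_j$: the sharp value $\bar\epsilon_j\id_n$ on $\{|\psi|\geq h_j\}$, the explicit quartic bound $(L^{2j}g_j|\psi|^2 + L^{2j}\nu_j - O(\vartheta_j g_j^{5/4}))\id_n$ on $\{|\psi|\leq h_j\}$, and everywhere the weak bound $\geq -O(\vartheta_j g_j)\id_n$. Since $h_{j+1}\asymp L^{-1}h_j \gg \sigma_j$, for $|\varphi|\geq h_{j+1}$ the tilted measure $\nu_\varphi$ concentrates on $\{|\varphi+\zeta|\geq h_{j+1}/2\}$ by a Gaussian tail estimate. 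A case split on $|\varphi|\geq h_j$ (inductive convexity gives $L^{2j}\bbE[\He f_j] \geq \bar\epsilon_j$, with factor $L^2$ of slack) versus $h_{j+1}\leq |\varphi| < h_j$ (the quartic bound at $|\varphi|=h_{j+1}$ satisfies $L^{2(j+1)}g_j|\varphi|^2 \gtrsim g_j^{1/2}$, matching $\bar\epsilon_j$) yields
\begin{equation}
L^{2(j+1)}\bbE_{\nu_\varphi}[\He f_j(\varphi+\zeta)] \geq \bar\epsilon_j \id_n - \text{(exponentially small tail)}.
\end{equation}
For the variance term, log-concavity of $\nu_\varphi$ (guaranteed by the global lower bound since $|B'|O(\vartheta_j g_j/L^{2j}) \ll \sigma_j^{-2}$) and the Brascamp--Lieb inequality give
\begin{equation}
|B'|\cov_{\nu_\varphi}[\nabla f_j] \leq \bbE_{\nu_\varphi}\!\bigl[\He f_j(\He f_j + \lambda_j^{-1}\id_n)^{-1}\He f_j\bigr] \leq \lambda_j\,\bbE_{\nu_\varphi}[(\He f_j)^2].
\end{equation}
Using $L^{2j}\He f_j \lesssim \bar\epsilon_j$ on the bulk and $\lambda_j \leq \vartheta_j^2 L^{2j}$, this contribution is bounded after multiplication by $L^{2(j+1)}$ by $O(\vartheta_j^2 \bar\epsilon_j^2)\id_n$. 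Combining the two estimates yields the asserted recursion $\epsilon_{j+1} \geq \bar\epsilon_j - O(\vartheta_j^2 \bar\epsilon_j^2)$.

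\emph{Main obstacle.}
The principal technical difficulty is establishing, uniformly in $|\varphi|\geq h_{j+1}$, Gaussian concentration of $\nu_\varphi$ on $\{|\varphi+\zeta|\geq h_{j+1}/2\}$, so that contributions from the small-field regime at scale $j$ (where only the weak, possibly negative bound $\He f_j \geq -O(\vartheta_j g_j/L^{2j})$ holds) are exponentially suppressed. This requires showing that the quadratic penalty $|\zeta|^2/(2\sigma_j^2)$ dominates the non-convexity of $|B'| f_j(\varphi+\zeta)$ on the bad set — an estimate made possible by $|B'|\sigma_j^2 = \lambda_j \leq \vartheta_j^2 L^{2j}$ and the smallness of $g_j$ — and then threading this tail bound cleanly through both the expectation and variance terms without contaminating the clean recursion on $\epsilon_j$.
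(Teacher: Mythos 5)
The overall strategy is identical to the paper's: induction on $j$, reducing by the hierarchical product structure to a one-block identity $\He f_{j+1}(\varphi) = \E_{\nu_\varphi}[\He f_j] - |B'|\cov_{\nu_\varphi}[\nabla f_j]$, applying Brascamp--Lieb to the covariance term, and using a Gaussian concentration estimate to discard the bad (small-field) event. The base case and the concentration argument also match the paper closely.

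There is, however, a genuine gap in the way you estimate the covariance term. You relax the Brascamp--Lieb bound to
\begin{equation}
|B'|\cov_{\nu_\varphi}[\nabla f_j]
\leq \E_{\nu_\varphi}\bigl[\He f_j(\He f_j + \lambda_j^{-1}\id_n)^{-1}\He f_j\bigr]
\leq \lambda_j\,\E_{\nu_\varphi}\bigl[(\He f_j)^2\bigr]
\end{equation}
and then assert $L^{2j}\He f_j \lesssim \bar\epsilon_j$ ``on the bulk'' to conclude the correction is $O(\vartheta_j^2\bar\epsilon_j^2)$. But $\He f_j$ is \emph{not} bounded above by $O(\bar\epsilon_j L^{-2j})$: by the explicit form \eqref{e:HetildeV} it grows like $g_j|\psi|^2$ for large $|\psi|$, so $L^{2j}\He f_j(\psi) \sim g_j^{1/2}(|\psi|/h_j)^2$ with no upper bound. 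Since the tilted measure $\nu_\varphi$ concentrates near $\zeta\approx 0$, when $|\varphi| \gg h_j$ the integrand $(\He f_j)^2$ under $\E_{\nu_\varphi}$ is of order $(|\varphi|/h_j)^4 (\bar\epsilon_j L^{-2j})^2$, which is not $O(\bar\epsilon_j^2 L^{-4j})$ uniformly in $\varphi$. The resulting bound $L^{2(j+1)}\lambda_j\E[(\He f_j)^2]$ is therefore unbounded in $|\varphi|$ and does not yield a clean $O(\vartheta_j^2\bar\epsilon_j^2)$ correction. In addition, the second inequality above uses $(\He f_j+\lambda_j^{-1})^{-1} \leq \lambda_j$, which is only valid when $\He f_j \geq 0$; on the small-field event the Hessian is slightly negative and the inequality is reversed.

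The paper avoids both issues by \emph{not} splitting the expectation and covariance: one instead keeps the combined quantity $\He f_j - \He f_j(\He f_j + \lambda_j^{-1})^{-1}\He f_j = \He f_j(\id_n + \lambda_j\He f_j)^{-1}$ and estimates $\E_{\nu_\varphi}[\He f_j(\id_n + \lambda_j\He f_j)^{-1}]$ directly (this is the paper's \eqref{e:HeV+-bd} and the scalar reduction to $\Lambda/(1 + L^{-2}\vartheta^2\Lambda)$). The key point is that the map $t\mapsto t/(1+at)$ is increasing on $t>-1/a$ and \emph{saturates} at $1/a$ as $t\to\infty$: it is bounded below by $\epsilon - O(\vartheta^2\epsilon^2)$ once $t\geq\epsilon$ regardless of how large $t$ becomes, and it remains bounded below by a constant for $t \geq -\frac12$ in the rare bad event. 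Splitting into $t - at^2$ and bounding each term destroys this saturation and introduces a spurious divergence. To repair your argument you would need to feed the full Brascamp--Lieb numerator $\He f_j(\He f_j+\lambda_j^{-1})^{-1}\He f_j$ back into the expectation term before estimating, i.e., reproduce the paper's Lemma giving \eqref{e:HeV+-bd} and the monotonicity argument leading to \eqref{e:Lambdabd-pf}.
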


To prove Theorem~\ref{thm:gap-phi4},
we will only use the conclusion $\epsilon_j \geq 0$ from Theorem~\ref{thm:phi4-convex}.
However, in order to prove Theorem~\ref{thm:phi4-convex}, it is convenient that the $\epsilon_j$ do not become too small.
The elementary proof of the following estimate is given in Appendix~\ref{app:phi4-rg-pf}.

\begin{lemma}\label{lem:epsilon}
  The sequence $(\epsilon_j)$ defined 
  in Theorem~\ref{thm:phi4-convex}
  satisfies $\epsilon_j \geq c g_j$ for all $j\in\N$.
\end{lemma}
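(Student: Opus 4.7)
The plan is to control the reciprocal $f_j := 1/\epsilon_j$ and show $f_j \lesssim g_j^{-1}$. The main idea is to exploit the fact that the recursion has a natural upper ``ceiling'' $G_j := 5 g_j^{-1/2}$ that records the threshold between the two alternatives in $\bar\epsilon_j = \epsilon_j \wedge \tfrac15 g_j^{1/2}$.

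Introduce $\bar f_j := 1/\bar\epsilon_j = \max(f_j, G_j)$. Since $\bar\epsilon_j \leq \tfrac15 g_j^{1/2} \leq \tfrac15 g_0^{1/2}$, the quantity $\vartheta_j^2 \bar\epsilon_j$ is uniformly small, so the recursion $\epsilon_{j+1} = \bar\epsilon_j(1 - O(\vartheta_j^2 \bar\epsilon_j))$ inverts to
\begin{equation*}
  f_{j+1} \leq \bar f_j \bigl(1 + C\vartheta_j^2 \bar\epsilon_j\bigr) = \bar f_j + C\vartheta_j^2,
\end{equation*}
where I used the identity $\bar f_j \bar\epsilon_j = 1$. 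Taking the max with $G_{j+1}$ and using that $g_j$ is monotonically nonincreasing by \eqref{e:phi4-rg-gnubd} (so $G_{j+1} \geq G_j$), together with $\bar f_j \geq G_j$, gives the unified bound
\begin{equation*}
  \bar f_{j+1} \leq \max(\bar f_j, G_{j+1}) + C\vartheta_j^2 \leq \bar f_j + (G_{j+1} - G_j) + C\vartheta_j^2.
\end{equation*}

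I telescope this inequality starting from $\bar f_0 = G_0 = 5 g_0^{-1/2}$ (since $f_0 = G_0$ by the initial condition $\epsilon_0 = \tfrac15 g_0^{1/2}$), to obtain
\begin{equation*}
  f_j \leq \bar f_j \leq G_j + C \sum_{k<j} \vartheta_k^2 = 5 g_j^{-1/2} + C\sum_{k<j} \vartheta_k^2.
\end{equation*}
Since $\vartheta_k = 1$ for $k \leq j_m$ and $\vartheta_k = 2^{-(k-j_m)}$ afterward, the sum is at most $(j \wedge j_m) + O(1)$.

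It remains to compare to $g_j^{-1}$. Applying the same reciprocal trick to the coupling constant recursion \eqref{e:phi4-rg-gnubd} yields $g_{j+1}^{-1} = g_j^{-1} + \beta_j + O(\vartheta_j g_j)$ with $\beta_j = \beta_0^0(1+m^2 L^{2j})^{-2}$. Since $\beta_j \geq c > 0$ for $k \leq j_m$ and $\sum_k \vartheta_k g_k = O(\log j_m)$ is dominated by $\sum_{k \leq j_m} \beta_k \sim \beta_0^0 j_m$, summing gives $g_j^{-1} \geq c(j \wedge j_m) + g_0^{-1}/2$. Hence $(j \wedge j_m) \lesssim g_j^{-1}$, and from $g_j \leq 1$ also $g_j^{-1/2} \leq g_j^{-1}$. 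Combining, $f_j \leq C' g_j^{-1}$ with $C'$ independent of $j$ and $N$, which is equivalent to $\epsilon_j \geq g_j/C'$.

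The main obstacle is that the ceiling $G_j = 5 g_j^{-1/2}$ is itself rising as $g_j$ decreases: whenever $f_j$ lies strictly below $G_j$ the recursion can push $f_{j+1}$ up to nearly $G_{j+1}$, not just $f_j + O(\vartheta_j^2)$. Working with $\bar f_j$ rather than $f_j$ absorbs this upward pressure automatically, and the telescoping then succeeds because $\sum_k (G_{k+1} - G_k) = G_j - G_0$ contributes the expected $g_j^{-1/2}$, which is smaller than the dominant $g_j^{-1}$ coming from the sum of $\vartheta_k^2$.
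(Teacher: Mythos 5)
Your argument is essentially the same as the paper's: invert the recursion into a bound on the reciprocal $\epsilon_j^{-1}$, exploit $\sum_{k<j}\vartheta_k^2 \asymp j\wedge j_m$, and compare against the parallel identity for $g_j^{-1}$. The paper disposes of the ceiling/clamping in one sentence (``by decreasing $\epsilon_j$ to $\tfrac15 g_j^{1/2}$ if necessary'') and delegates the resulting asymptotics to \cite[Prop.~6.1.3]{rg-brief}; your version of this step, via $\bar f_j := \max(f_j, G_j)$, is more explicit and self-contained, which is a genuine improvement in readability. The final comparison $f_j \lesssim g_j^{-1}$ (via $g_j^{-1/2}\le g_j^{-1}$ and $j\wedge j_m\lesssim g_j^{-1}$) is exactly the paper's observation $g_0^{-1/2}\le g_0^{-1}$ in reciprocal form.

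One small technical slip: you claim $g_j$ is monotonically nonincreasing, and use this to bound $\max(\bar f_j, G_{j+1})\le \bar f_j + (G_{j+1}-G_j)$. But \eqref{e:phi4-rg-gnubd} alone does not give monotonicity for $j>j_m$: there $\beta_j = \beta_0^0(1+m^2L^{2j})^{-2}$ decays like $L^{-4(j-j_m)}$, which is much faster than $\vartheta_j = 2^{-(j-j_m)}$, so the sign of $g_{j+1}-g_j$ is not determined by the leading term. The fix is trivial and doesn't change the conclusion: either telescope the weaker bound $\bar f_{j+1}\le \max(\bar f_j, G_{j+1}) + C\vartheta_j^2$ into $\bar f_j\le \max_{k\le j} G_k + C\sum_{k<j}\vartheta_k^2$ and note that $\max_{k\le j} G_k = 5(\min_{k\le j} g_k)^{-1/2} \asymp 5 g_j^{-1/2}$; or observe that for $j>j_m$ the total drift in $g_j$ is $O(g_{j_m}^2)$, so $G_j$ is monotone up to a multiplicative constant $1+O(g_0)$. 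Likewise, the one-step error in $g_{j+1}^{-1}-g_j^{-1}-\beta_j$ is $O((\beta_j+\vartheta_j)g_j)$ rather than strictly $O(\vartheta_j g_j)$, but this changes nothing in the summation since $\beta_j$ also decays geometrically past $j_m$.
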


We will prove Theorem~\ref{thm:phi4-convex} by induction in $j$.
For $j=0$, the estimate \eqref{e:phi4-convex} 
can be checked directly from \eqref{e:HetildeV} and $\nu \geq \nu_c(g) = -O(g)$,
which imply that
\begin{equation}
  \frac{1}{|B|} \He (V_0(B) \circ i_B)
  \geq (g|\varphi|^2 + \nu)\id_n
  \geq g(|\varphi|^2 - O(1))\id_n
  \geq (g^{1/2} - O(g)) \id_n.
\end{equation}
From the inductive assumption and Corollary~\ref{cor:phi4-small}, we can get the following bounds.

\begin{lemma} 
\label{lem:ind-ass}
Assume that \eqref{e:phi4-convex} holds for some $j\in\N$ and that $\epsilon_j \leq \frac14 g_j^{1/2} - O(g_j)$. Then
\begin{align} \label{e:ind-ass1}
  L^{2(j+1)} \frac{1}{|B|} \He (V_j(B)\circ i_B) & \geq \epsilon_j \id_n \quad \text{for all $|\varphi| \geq \frac12 h_{j+1}$},
  \\
  \label{e:ind-ass2}
  L^{2j} \frac{1}{|B|} \He (V_j(B)\circ i_B) &\geq -O(g_j)\id_n \quad \text{for all $\varphi$}.
\end{align}
\end{lemma}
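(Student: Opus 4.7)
The plan is to prove both inequalities by splitting the range of $|\varphi|$ into the small-field region of Corollary~\ref{cor:phi4-small} and the large-field region covered by the inductive assumption, and then verifying that the two estimates match up at the crossover scale $|\varphi|\sim h_j$. The point is simply that $h_{j+1}$ is (morally) a factor $L$ smaller than $h_j$: since $h_j = L^{-j} g_j^{-1/4}$ and $g_{j+1} = g_j(1+o(1))$, we have $\tfrac{1}{2} h_{j+1} \leq h_j$ for $L$ sufficiently large. So the interval $|\varphi|\geq \tfrac12 h_{j+1}$ decomposes into $\tfrac12 h_{j+1}\leq |\varphi|\leq h_j$ (small field, handled by Corollary~\ref{cor:phi4-small}) and $|\varphi|\geq h_j$ (handled by the inductive assumption).

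For the first claim, in the region $|\varphi|\geq h_j$ the inductive assumption gives
\begin{equation}
L^{2(j+1)} \tfrac{1}{|B|}\He(V_j(B)\circ i_B) \geq L^2\epsilon_j \id_n \geq \epsilon_j \id_n,
\end{equation}
since $L\geq 1$. In the intermediate region $\tfrac12 h_{j+1}\leq |\varphi|\leq h_j$, I would use \eqref{e:phi4-small-HeV} together with the lower bound $|\varphi|^2 \geq \tfrac14 h_{j+1}^2 = \tfrac14 L^{-2(j+1)} g_{j+1}^{-1/2}$ and the fact that $g_{j+1}^{-1/2} = g_j^{-1/2}(1+o(1))$ from \eqref{e:phi4-rg-gnubd}. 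This gives
\begin{equation}
g_j|\varphi|^2(1-O(g_j^{1/2})) \geq \tfrac14 L^{-2(j+1)} g_j^{1/2}\bigl(1-O(g_j^{1/2})\bigr),
\end{equation}
while the remaining terms in \eqref{e:phi4-small-HeV} contribute $\nu_j - O(\vartheta_j L^{-2j} g_j^{5/4}) = -O(\vartheta_j L^{-2j}g_j)$ using $0\leq -\nu_j = O(\vartheta_j L^{-2j} g_j)$. After multiplying by $L^{2(j+1)}$, the dominant contribution is $\tfrac14 g_j^{1/2}$, the error becomes $O(L^2 \vartheta_j g_j) = O(g_j)$, so
\begin{equation}
L^{2(j+1)} \tfrac{1}{|B|}\He(V_j(B)\circ i_B) \geq \bigl(\tfrac14 g_j^{1/2} - O(g_j)\bigr)\id_n \geq \epsilon_j \id_n,
\end{equation}
by the hypothesis $\epsilon_j \leq \tfrac14 g_j^{1/2} - O(g_j)$.

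For the second claim, in the large field region $|\varphi|\geq h_j$ the inductive hypothesis gives a nonnegative bound, which is certainly $\geq -O(g_j)\id_n$. In the small field region $|\varphi|\leq h_j$, Corollary~\ref{cor:phi4-small} together with $g_j|\varphi|^2(1-O(g_j^{1/2}))\geq 0$ and $|\nu_j| = O(\vartheta_j L^{-2j}g_j)$ gives $\tfrac{1}{|B|}\He(V_j(B)\circ i_B) \geq -O(\vartheta_j L^{-2j} g_j)\id_n$, and multiplying by $L^{2j}$ yields the claim. No step is genuinely hard; the only thing to track carefully is the comparison between $h_j$ and $h_{j+1}$, which uses that $g_j/g_{j+1}=1+o(1)$ from \eqref{e:phi4-rg-gnubd}, and the bookkeeping to confirm that the positive $g_j|\varphi|^2$ term of \eqref{e:phi4-small-HeV} dominates the $O(\vartheta_j L^{-2j} g_j)$ error after the rescaling by $L^{2(j+1)}$.
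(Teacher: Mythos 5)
Your proof is correct and follows essentially the same decomposition and estimates as the paper's: split $|\varphi|\geq \tfrac12 h_{j+1}$ into the crossover range $\tfrac12 h_{j+1}\leq|\varphi|\leq h_j$ (handled by Corollary~\ref{cor:phi4-small} via the bound $|\varphi|^2\geq\tfrac14 h_{j+1}^2$, which yields $\tfrac14 g_j^{1/2}-O(g_j)$ after rescaling by $L^{2(j+1)}$) and $|\varphi|\geq h_j$ (handled by the inductive hypothesis and $L^2\epsilon_j\geq\epsilon_j$), then argue \eqref{e:ind-ass2} by combining the same two regimes. The only cosmetic quibble is that $\tfrac12 h_{j+1}\leq h_j$ holds for any $L\geq 1$ once $g$ is small (since $h_{j+1}=L^{-1}h_j(1+O(g_j))$), so invoking ``$L$ sufficiently large'' is unnecessary, though harmless since $L\geq L_0$ is assumed throughout.
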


\begin{proof} 
For $|\varphi| \geq h_j$,
the estimate \eqref{e:ind-ass1} follows directly from the assumption \eqref{e:phi4-convex}
and the trivial bound $L^2\epsilon_j \geq \epsilon_j$.
Next we consider the case $\frac12 h_{j+1} \leq |\varphi| \leq h_j$. By definition,
\begin{equation}
    h_{j+1} = L^{-(j+1)} g_{j+1}^{-1/4} = L^{-(j+1)} g_j^{-1/4}(1+O(g_j)) = L^{-1}h_j (1+O(g_j)).
\end{equation}
Therefore \eqref{e:phi4-small-HeV} implies
\begin{equation}
  L^{2(j+1)} \frac{1}{|B|} \He (V(B) \circ i_B)
  \geq 
 (g_j(\frac12 L^{j+1} h_{j+1})^2  + \nu_j L^{2(j+1)} - O(g_j))
 \geq (\frac14 g_j^{1/2} - O(L^ 2 g_j))
  \geq \epsilon_j
  .
\end{equation}
Similarly, using Corollary~\ref{cor:phi4-small} for the small fields and the inductive
assumption for the large fields, we  have for all $\varphi$ that
\begin{equation}
  L^{2j} \frac{1}{|B|} \He (V_j(B)\circ i_B)
  \geq - O(g_j) \id_n,
\end{equation}
which implies \eqref{e:ind-ass2}.
This completes the proof of Lemma~\ref{lem:ind-ass}.
\end{proof}

The following proposition now advances the induction and thus proves Theorem~\ref{thm:phi4-convex}.

\begin{proposition} \label{prop:advance}
  Assume \eqref{e:ind-ass1}--\eqref{e:ind-ass2} with $j<N$.
  For $\varphi \in \R^n$ with $|\varphi| \geq h_{j+1}$ and $B_+ \in \mathcal{B}_{j+1}$,
  \begin{align}
    L^{2(j+1)} \frac{1}{|B_+|} \He (V_{j+1}(B_+) \circ i_{B_+})(\varphi) \geq (\epsilon_j-O(\vartheta_j^2\epsilon_j^2))\id_{n}
    .
  \end{align}
\end{proposition}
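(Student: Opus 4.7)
The plan is to reduce the desired bound on $V_{j+1}(B_+)$ to a pointwise Hessian estimate for the single-block Gaussian convolution arising in one renormalisation step. Because $\zeta$ is constant on each sub-block $B\subset B_+$ and independent across sub-blocks, with per-component variance $\sigma^2 = L^{-dj}\lambda_j \leq L^{-2j}\vartheta_j^2$, for $\varphi\in\R^n$ viewed as constant on $B_+$ we have
\begin{equation*}
  V_{j+1}(B_+)\circ i_{B_+}(\varphi) = L^{d}\, U(\varphi), \qquad
  e^{-U(\varphi)} = \bbE_{\gamma_\sigma}\bigl[e^{-\bar V_j(\varphi+\zeta)}\bigr],
\end{equation*}
where $\gamma_\sigma$ is the centred Gaussian on $\R^n$ with covariance $\sigma^2\id_n$ and $\bar V_j = V_j(B)\circ i_B$ (independent of $B\subset B_+$ by symmetry). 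Hence $\He(V_{j+1}(B_+)\circ i_{B_+})/|B_+| = \He U/|B|$, and the claim reduces to showing $L^{2(j+1)}\He U(\varphi)/|B|\geq (\epsilon_j-O(\vartheta_j^2\epsilon_j^2))\id_n$ for $|\varphi|\geq h_{j+1}$.

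Differentiating the log-Laplace formula twice gives the standard identity
\begin{equation*}
  \He U(\varphi) = \bbE_{\mu_\varphi}\bigl[\He\bar V_j(\varphi+\zeta)\bigr] - \cov_{\mu_\varphi}\bigl(\nabla\bar V_j(\varphi+\zeta)\bigr),
\end{equation*}
where $\mu_\varphi(d\zeta) \propto e^{-\bar V_j(\varphi+\zeta) - |\zeta|^2/(2\sigma^2)}\,d\zeta$. By \eqref{e:ind-ass2} and $\sigma^{-2}\geq L^{2j}$, the Hessian of the negative log-density of $\mu_\varphi$ is at least $\sigma^{-2}(1-Cg_j)\id_n > 0$, so $\mu_\varphi$ is uniformly log-concave. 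Applying the Brascamp--Lieb inequality to $v\cdot\nabla\bar V_j$ for each direction $v\in\R^n$ gives the matrix-valued bound
\begin{equation*}
  \cov_{\mu_\varphi}(\nabla\bar V_j) \leq \bbE_{\mu_\varphi}\bigl[\He\bar V_j\,(\He\bar V_j+\sigma^{-2}\id_n)^{-1}\,\He\bar V_j\bigr].
\end{equation*}
Since $\sigma^{-2}\id_n$ commutes with $\He\bar V_j$ at every point, the algebraic identity $A-A(A+\sigma^{-2}\id_n)^{-1}A = A(\id_n+\sigma^2 A)^{-1}$ applies pointwise inside the expectation and yields
\begin{equation*}
  \He U(\varphi) \geq \bbE_{\mu_\varphi}\bigl[\He\bar V_j(\varphi+\zeta)\,(\id_n+\sigma^2\He\bar V_j(\varphi+\zeta))^{-1}\bigr].
\end{equation*}

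To extract the quantitative bound I split the expectation over $\Omega = \{|\zeta|\leq h_{j+1}/2\}$ and $\Omega^c$. Since $|\varphi|\geq h_{j+1}$, on $\Omega$ we have $|\varphi+\zeta|\geq h_{j+1}/2$, so \eqref{e:ind-ass1} gives $\He\bar V_j\geq L^{2j-2}\epsilon_j\id_n$; by the operator monotonicity of $M\mapsto M(\id_n+\sigma^2 M)^{-1}$ on $\{M>-\sigma^{-2}\id_n\}$, the integrand on $\Omega$ is at least $L^{2j-2}\epsilon_j(1+L^{-2}\vartheta_j^2\epsilon_j)^{-1}\id_n$. On $\Omega^c$, \eqref{e:ind-ass2} and operator monotonicity give the trivial lower bound $-CL^{2j}g_j\id_n$, while log-concavity of $\mu_\varphi$ yields the sub-Gaussian tail $\mu_\varphi(\Omega^c)\leq \exp(-c\,h_{j+1}^2/\sigma^2) \leq \exp(-c\,g_j^{-1/2}\vartheta_j^{-2})$, which is super-polynomial in $g_j^{-1}$. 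After dividing by $|B|$ and multiplying by $L^{2(j+1)}$, the $\Omega$-contribution gives $\epsilon_j - O(\vartheta_j^2\epsilon_j^2)$ and the $\Omega^c$-contribution is negligible, yielding the claim.

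The main obstacle will be establishing the matrix (rather than scalar or trace) Brascamp--Lieb bound on $\cov_{\mu_\varphi}(\nabla\bar V_j)$, since operator monotonicity is essential to propagate the pointwise lower bounds on $\He\bar V_j$ through the expectation; this relies on the commutativity of $\He\bar V_j$ with $\sigma^{-2}\id_n$, which is secured by the $O(n)$-invariance of $\bar V_j$. A secondary technicality is locating the centre of $\mu_\varphi$ close enough to the origin that the tail bound on $\Omega^c$ is truly super-polynomial; this will follow from Corollary~\ref{cor:phi4-small} once one verifies that $\sigma^2|\nabla\bar V_j(\varphi)|\ll h_{j+1}$ in the relevant regime.
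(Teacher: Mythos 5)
Your argument is correct and follows the same essential route as the paper: compute $\He V_{+}$ via the second derivative of the log-Laplace transform, control the variance term with the Brascamp--Lieb inequality, massage it into $\He\bar V\,(\id+\sigma^2\He\bar V)^{-1}$ and push lower bounds through by operator monotonicity, then split the expectation into a good region (using the inductive assumption~\eqref{e:ind-ass1}) and a super-polynomially small bad region (using the crude bound~\eqref{e:ind-ass2} and a sub-Gaussian tail). Where you deviate from the paper is in the nice initial observation that, because the scale-$j$ fluctuation fields $\zeta_B$ are i.i.d.\ across the $L^d$ sub-blocks of $B_+$ and $V_j(B_+) = \sum_B V_j(B)$ with identical summands, the renormalisation step factorises: $V_{j+1}(B_+)\circ i_{B_+} = L^d\,U$ with $U$ a single $n$-dimensional Gaussian convolution. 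The paper instead works with the full $L^{dn}$-dimensional block field, defines the auxiliary constant $\Lambda(\varphi)$ for the block Hessian, and needs a union bound over the $L^d$ sub-blocks in its concentration Lemma~\ref{lem:zetabd}; your factorisation trades all of this for a single scalar Gaussian tail, which is a genuine streamlining (though one special to the hierarchical structure). The one place I'd push back is the last paragraph: it is not quite enough to verify that $\sigma^2|\nabla\bar V_j(\varphi)|\ll h_{j+1}$, because the minimiser $\zeta^0$ of the log-density of $\mu_\varphi$ solves the implicit equation $\zeta^0 = -\sigma^2\nabla\bar V_j(\varphi+\zeta^0)$; what is actually needed is the fixed-point/dichotomy argument of the paper's Lemma~\ref{lem:zetamean} (either $|\varphi+\zeta^0|>h_{+}$ trivially, or $|\varphi+\zeta^0|\leq h_{+}$ and then Corollary~\ref{cor:phi4-small} bounds $|\zeta^0|\leq\sup_{|\varphi'|\leq h_+}\sigma^2|\nabla\bar V_j(\varphi')| = O(\vartheta_j^2 g_j^{1/2}h_{+})$). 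You flag this as a technicality to verify, so the gap is one of execution rather than approach, but the sup over a neighbourhood (not the pointwise evaluation at $\varphi$) is the thing that has to be controlled.
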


The proposition will be proved in the remainder of this section.
Since the scale $j$ will be fixed we usually drop the $j$ and write $+$ instead of $j+1$.
To set-up notation,
we fix a block $B_+ \in \mathcal{B}_{+}$ and write $V(B_+) = \sum_{B \in \cB_j(B_+)} V(B)$.
By the hierarchical structure, $\He V(B_+)$ is a block diagonal 
matrix indexed by the blocks $B \in \mathcal{B}(B_+)$,
and we will always restrict the domain to $X_j(B_+)$, the space of fields constant inside the small blocks $B$.
On this domain, $V(B_+)$ can be identified with a function of $L^d$ vector-valued variables while $V_+(B_+)$
has domain $X_{+}(B_+)$ and can be identified with a function of a single vector-valued variable.
The covariance operator $C$ and the projection $Q$ operate naturally on $X(B_+)=X_j(B_+)$
and can be identified with diagonal matrices indexed by blocks $B \in \cB(B_+)$;
in particular, they are invertible on $X(B_+)$.
By the definition of $V_+$ in \eqref{e:phi4-Vdef}, together with the hierarchical structure of $C$,
it follows that
\begin{align} \label{e:Wplus}
  V_+(B_+, \varphi) = -\log \E_{C}(e^{-V(B_+,\varphi+\zeta)})
  = -\log \int_{X(B_+)} e^{-H_\varphi(\zeta)} \, d\zeta + \text{constant},
\end{align}
where (recall that here $C$ denotes the restriction of $C$ to $X(B_+)$)
\begin{equation}
  H_\varphi(\zeta)
  = \frac12 (\zeta,C^{-1}\zeta) + V(B_+,\varphi+\zeta)
  .
\end{equation}
By differentiating \eqref{e:Wplus} we obtain, for $\dot\varphi \in X_+(B_+)$,
\begin{equation} \label{e:HeV+-def}
  \dot\varphi\He V_+(B_+,\varphi)\dot\varphi
  = \avg{\dot\varphi\He V(B_+,\varphi+\zeta)\dot\varphi}_{H_\varphi}
  - \var_{H_\varphi}(\nabla V(B_+,\varphi+\zeta) \cdot \dot\varphi)
\end{equation}
where $\avg{\cdot}_{H_\varphi}$ denotes the expectation of the probability measure with density $e^{-H_\varphi}$ on $X(B_+)$,
and $\nabla$ is the gradient in $X(B_+)$, i.e., with respect to fields that are constants on scale-$j$ blocks in $B_+$.

To estimate the right-hand side of the last equation,
we need some information on the typical value of the fluctuation field $\zeta$ under the expectation $\avg{\cdot}_{H_\varphi}$.
By assumption of the proposition, the bound \eqref{e:ind-ass2} holds, and together with the definition of $C = C_j$ in particular,
\begin{equation}
  \label{e:ind-ass2-bis}
  C^{1/2} \He V(B_+,\zeta)C^{1/2} \geq -\frac12 Q
  \quad \text{for all $\zeta \in X(B_+)$},
\end{equation}
as an operator on $X(B_+)$, i.e., $\zeta$ is a constant on every $B \in \mathcal{B}(B_+)$.
Therefore, uniformly in $\zeta$,
\begin{equation} \label{e:CHeHvarphi}
  C^{1/2}\He H_\varphi(B_+,\zeta)C^{1/2} = Q + C^{1/2} \He V(B_+,\varphi+\zeta) C^{1/2} \geq \frac12 Q.
\end{equation}
For any $\varphi$, the action $H_\varphi$ is therefore strictly convex  on $X(B_+)$ and, in particular, it has a unique minimiser in this space.
We denote this minimiser by $\zeta^0$. It satisfies the Euler--Lagrange equation
\begin{equation} \label{e:zeta-EL}
  \zeta^0 = - C \nabla V(B_+,\varphi+\zeta^0).
\end{equation}
Here recall the definition $V(B_+) = \sum_{B\in\mathcal{B}(B_+)} V(B)$,
and hence that $\nabla V(B_+)$ is a vector of blocks indexed by $B \in \mathcal{B}(B_+)$,
on which the covariance operator $C$ acts diagonally.

Further recall that $\varphi$ is constant on $B_+$.
By symmetry and uniqueness of the minimiser, we see that $\zeta^0$ has to be constant not only in each small block $B$,
but in each $B_+$, i.e., $\zeta^0 \in X_{+}(B_+)$.
In the following lemma, the block $B_+$ is fixed and $\varphi$ and $\zeta^0$ are both in $X_{+}(B_+)$
so that we may identify them with variables in $\R^n$.

\begin{lemma} \label{lem:zetamean}
  Let $|\varphi|\geq h_+$. Then $|\varphi+\zeta^0| \geq h_+(1 - O(g^{1/2}))$.
\end{lemma}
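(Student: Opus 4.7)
My plan is to exploit $O(n)$-symmetry to reduce the Euler--Lagrange equation~\eqref{e:zeta-EL} to a one-dimensional scalar equation, and then to bound the magnitude $|\varphi+\zeta^0|$ via Corollary~\ref{cor:phi4-small} (small-field regime) and~\eqref{e:phi4-convex} (large-field regime).

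For the reduction, I first observe that because $V_j(B)\circ i_B$ is $O(n)$-invariant and the quadratic form $(\zeta,C^{-1}\zeta)$ depends only on $|\zeta|$, the restriction of $H_\varphi$ to $X_+(B_+)\cong\R^n$ is invariant under the stabilizer of $\varphi$ in $O(n)$. Strict convexity of $H_\varphi$, inherited from~\eqref{e:CHeHvarphi}, forces the minimizer $\zeta^0$ to be unique, hence fixed by this stabilizer, hence proportional to $\varphi$. Writing $\zeta^0=\alpha\varphi$ with $\alpha\in\R$, setting $w:=(1+\alpha)|\varphi|$, using that $C$ acts as multiplication by $\lambda_j$ on the subspace $X_+(B_+)\subset X_j$, and representing $V_j(B)\circ i_B(\psi)=v(|\psi|)$ for a smooth even function $v$, the Euler--Lagrange equation collapses to the scalar identity $w=|\varphi|-(\lambda_j/|B|)\,v'(w)$.

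Next I would establish that $w\in(0,|\varphi|)$, so in particular $|\varphi+\zeta^0|=w$. The map $\alpha\mapsto H_\varphi(\alpha\varphi)$ is strictly convex, with derivative proportional to $-|\varphi|^2/\lambda_j<0$ at $\alpha=-1$ (using $v'(0)=0$, since $v$ is even) and proportional to $v'(|\varphi|)\cdot|\varphi|>0$ at $\alpha=0$. Positivity of $v'(|\varphi|)$ comes from~\eqref{e:phi4-convex} when $|\varphi|\geq h_j$ (the tangential eigenvalue of the Hessian of $V$ gives $v'(r)\geq\epsilon_jL^{-2j}|B|\,r>0$), and from Corollary~\ref{cor:phi4-small} when $h_+\leq|\varphi|<h_j$ (then $g_j|\varphi|^3$ dominates both $|\nu_j||\varphi|$ and the $O(\vartheta_jL^{-3j}g_j)$ remainder).

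The conclusion then follows from a dichotomy on $w$. If $w\geq h_j$ the estimate is immediate, since $h_j/h_+=L\,(g_{j+1}/g_j)^{1/4}\geq L(1-O(g_j))\geq 1$ by~\eqref{e:phi4-rg-gnubd}. If $w<h_j$, I apply Corollary~\ref{cor:phi4-small} to bound $v'(w)/|B|\leq g_jw^3(1+O(g_j^{1/2}))+|\nu_j|w+O(\vartheta_jL^{-3j}g_j)$; combining with $\lambda_j\leq\vartheta_j^2L^{2j}$, $|\nu_j|=O(\vartheta_jL^{-2j}g_j)$ and $w\leq h_j=L^{-j}g_j^{-1/4}$, the three contributions to $(\lambda_j/|B|)\,v'(w)$ are dominated by $O(g^{1/2})\,w$, $O(g)\,w$ and $O(L^{-j}g)$ respectively, so $|\varphi|-w\leq O(g^{1/2})\,w+O(L^{-j}g)$. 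Rearranging and using $L^{-j}g=O(g^{5/4})\,h_+=O(g^{1/2})\,h_+$ for $g$ small yields $w\geq h_+(1-O(g^{1/2}))$. The main technical obstacle will be the bookkeeping in this last step, checking that each scale-dependent factor $\vartheta_j,\lambda_j,\nu_j,h_j,g_j$ collapses into a uniform $O(g^{1/2})$ error and that the additive $O(L^{-j}g)$ is absorbed into $g^{1/2}h_+$; by contrast the symmetry reduction and the positivity argument are conceptually clean.
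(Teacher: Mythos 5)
Your proposal is correct and reaches the right conclusion, but it follows a genuinely different and considerably more elaborate route than the paper.

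Both arguments are anchored on the Euler--Lagrange equation~\eqref{e:zeta-EL} and the small-field gradient estimate~\eqref{e:phi4-small-dV}, and both proceed via a dichotomy (you split at $h_j$; the paper splits at $h_+$). Where you diverge is in the reduction. You exploit $O(n)$-invariance and uniqueness of the minimiser to show $\zeta^0 = \alpha\varphi$, turn the Euler--Lagrange equation into the scalar fixed-point identity $w = |\varphi| - (\lambda_j/|B|)\,v'(w)$, and then separately establish $w\in(0,|\varphi|)$ by examining the sign of $\partial_\alpha H_\varphi(\alpha\varphi)$ at $\alpha=-1$ and $\alpha=0$. The paper instead bounds $M = \sup_{|\varphi'|\le h_+}|C\nabla V(B_+,\varphi')|_\infty = O(\vartheta^2 g^{1/2}h_+)$ from~\eqref{e:phi4-small-dV}, observes that if $|\varphi+\zeta^0|\le h_+$ then the Euler--Lagrange equation forces $|\zeta^0|\le M$, and concludes by the triangle inequality $|\varphi+\zeta^0|\ge |\varphi|-|\zeta^0|$. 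This avoids the symmetry reduction entirely, and more significantly avoids your step establishing $w>0$: the triangle inequality never needs the sign of $w$.

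Two remarks on your version. First, the intermediate positivity step is in fact superfluous: once the dichotomy at $h_j$ is set up, the case $|w|<h_j$ already yields $w \ge |\varphi| - O(g^{1/2}h_+) > 0$, and the case $|w|\ge h_j$ is trivial, so $w>0$ falls out automatically. Second, the tangential-eigenvalue argument you invoke to get $v'(r)\ge\epsilon_j L^{-2j}|B|\,r$ from~\eqref{e:phi4-convex} only makes sense for $n\ge 2$; for $n=1$ the Hessian is the single number $v''(r)$ and carries no direct sign information about $v'$. For $n=1$ one instead needs to argue that $v'(h_j)>0$ from the small-field formula and then propagate this forward using $v''>0$ on $\{|\psi|\ge h_j\}$. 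Since you do not actually need the positivity step, this is not a fatal gap, but as written your proof does not cover $n=1$ whereas the paper's does uniformly. The final bookkeeping step you outline (the three contributions being $O(g^{1/2})w$, $O(g)w$, $O(L^{-j}g)$, with $L^{-j}g = O(g^{5/4}h_+)$) checks out.
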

\begin{proof}
As discussed above, we regard $\nabla V$ and $C\nabla V$ both as block vectors indexed by $B \in\mathcal{B}(B_+)$.
For $\varphi'$ constant on $B_+$, the blocks of $\nabla V(B_+,\varphi')$ are equal
and $C$ acts by multiplying each of these blocks by the same constant $O(\vartheta^2 L^{2j})$.
Hence $C\nabla V(B_+,\varphi')$ is a block vector with all blocks equal to $O(\vartheta^2 L^{2j})\nabla V(B,\varphi')$
where $B$ is any of the block in $\mathcal{B}(B_+)$.
We denote by $|C\nabla V(B_+,\varphi')|_\infty$ the value in any of these blocks.
Now \eqref{e:phi4-small-dV} implies that, for $\varphi'$ constant on $B_+$ with $|\varphi'|\leq h_+$,
\begin{align} \label{e:Mbd}
  M &:= \sup_{|\varphi'|\leq h_+} |C \nabla V(B_+,\varphi')|_\infty
      \nnb
  &\leq
    \vartheta^2 L^{2j} \pB{ g h_+^3(1+O(g^{1/2})) + \nu h_+ + O(L^{-dj} h_+^{-1} g^{3/4})}
    \nnb
  &\leq
    \vartheta^2 h_+ \pa{g L^{2j} h_+^2(1+O(g^{1/2})) + L^{2j} \nu + O(L^{-2j} h_+^{-2} g^{3/4})}
   \leq
    O(\vartheta^2 g^{1/2}h_+)
    .
\end{align}
To prove the claim, we may
assume that $|\varphi+\zeta^0| \leq h_+$ since otherwise the claim holds trivially.
Then $|\zeta^0| \leq M = O(\vartheta^2 g^{1/2} h_+)$ by \eqref{e:zeta-EL} and \eqref{e:Mbd}.
We conclude from this that $|\varphi+\zeta^0| > h_+$ or $|\zeta^0| = O(\vartheta^2 g^{1/2}h_+)$.
Thus $|\varphi+\zeta^0| \geq h_+ \wedge (|\varphi|-O(\vartheta^2 g^{1/2}h_+)) \geq h_+(1- O(\vartheta^2 g^{1/2}))$.
\end{proof}

In the following lemma, $\zeta \in X(B_+)$ is the fluctuation field under the
measure with expectation $\avg{\cdot}_{H_\varphi}$. Thus $\zeta$ is constant in any
small block $B$, but unlike the minimiser $\zeta^0$ the field $\zeta$ is not constant in $B_+$.

\begin{lemma} \label{lem:zetabd}
  For any $t \geq 1$, with $\ell = L^{-j}$ as in \eqref{eq: scales},

  \begin{equation}
    \forall x \in B_+, \qquad      \P_{H_\varphi}(|\zeta_x-\zeta^0| \geq 3\vartheta\ell t) \leq 2e^{-t^2/4}.
  \end{equation}
\end{lemma}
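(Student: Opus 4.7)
The plan is to exploit strict log-concavity of the conditional measure with density proportional to $e^{-H_\varphi(\zeta)}$ on $X(B_+)$. By \eqref{e:CHeHvarphi}, its Hessian obeys $\He H_\varphi \geq \tfrac{1}{2}C^{-1}$ on $X(B_+)$, and since the restriction of $C$ to $X(B_+)$ acts as multiplication by $\lambda_j \leq \vartheta^2 L^{2j}$ (with respect to the inherited inner product $(\cdot,\cdot)_0$), this yields uniform strong convexity with constant $\kappa := (2\lambda_j)^{-1} \geq (2\vartheta^2 L^{2j})^{-1}$.

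I would then invoke the Bakry--Emery Gaussian concentration inequality for strongly log-concave measures: for any function $F$ that is $1$-Lipschitz with respect to $(\cdot,\cdot)_0$ and any $s \geq 0$,
\begin{equation*}
\P_{H_\varphi}(F \geq \avg{F}_{H_\varphi} + s) \leq \exp\pa{-\tfrac{1}{2}\kappa s^2}.
\end{equation*}
Applying this to $F(\zeta) = |\zeta_x - \zeta^0|$, the block structure of $X(B_+)$ together with $(\zeta,\zeta)_0 = \sum_{B \in \cB_j(B_+)} |B|\,|\zeta_B|^2$ implies that $F$ depends on $\zeta$ only through $\zeta_{B(x)}$ and has Lipschitz constant $|B|^{-1/2}$. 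The concentration bound therefore becomes $\exp(-\tfrac{1}{2}\kappa|B| s^2)$, and using $\kappa|B| \geq L^{4j}/(2\vartheta^2 L^{2j}) = L^{2j}/(2\vartheta^2)$, the substitution $s = \vartheta\ell t$ with $\ell = L^{-j}$ produces exactly the Gaussian tail $e^{-t^2/4}$ at deviations of size $\vartheta\ell t$ above the mean of $F$.

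To conclude the stated inequality, it would then suffice to verify that $\avg{F}_{H_\varphi} \leq 2\vartheta\ell$: then for $t \geq 1$ the threshold $3\vartheta\ell t$ exceeds $\avg{F} + \vartheta\ell t$, and the prefactor $2$ accounts for the symmetric lower tail. The Brascamp--Lieb inequality gives $\var_{H_\varphi}(\zeta_x^i) \leq 2 C_{xx,ii} = 2\lambda_j/|B| = O(\vartheta^2\ell^2)$ componentwise, reducing matters to controlling the mode--mean difference $|\avg{\zeta_x}_{H_\varphi} - \zeta^0|$. By the block-permutation symmetry of the conditional measure, $\avg{\zeta_B}$ is independent of $B \in \cB_j(B_+)$, so integration by parts yields the mean-field identity $\avg{\zeta_B} = -\lambda_j\avg{\nabla V(B,\varphi+\zeta_B)}$, parallel to the Euler--Lagrange equation $\zeta^0 = -\lambda_j \nabla V(B,\varphi+\zeta^0)$ from \eqref{e:zeta-EL}. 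Subtracting these and using Lipschitz continuity of $\nabla V$ on the relevant scale, controlled by the Hessian bound \eqref{e:ind-ass2}, together with the variance estimate, closes the bound on the mode--mean distance.

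The main obstacle will be this last step: controlling the mode--mean distance cleanly. It is here that the block-permutation symmetry of the conditional measure, which forces $\avg{\zeta_B}$ to be constant across $B \in \cB_j(B_+)$ and so reduces a potentially dimension-dependent estimate to a scalar mean-field comparison, is essential.
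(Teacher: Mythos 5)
Your concentration step (Bakry--Emery applied to the $|B|^{-1/2}$-Lipschitz function $|\zeta_x-\zeta^0|$, using $\He H_\varphi \geq \tfrac12 C^{-1}$ and $\lambda_j \leq \vartheta^2 L^{2j}$) is correct and is essentially a restatement of the paper's Brascamp--Lieb moment bound plus Markov; both give $\P(|\zeta_x-\avg{\zeta_x}| > \vartheta\ell t) \leq 2e^{-t^2/4}$, so that part is fine.

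The gap is in your control of $\avg{|\zeta_x-\zeta^0|}$. You propose to bound the mode--mean difference by subtracting the identity $\avg{\zeta_B} = -C\,\avg{\nabla V(B,\varphi+\zeta_B)}$ from the Euler--Lagrange equation $\zeta^0_B = -C\,\nabla V(B,\varphi+\zeta^0_B)$ and then invoking ``Lipschitz continuity of $\nabla V$ controlled by the Hessian bound \eqref{e:ind-ass2}.'' But \eqref{e:ind-ass2} is a \emph{lower} bound on $\He V_j$ (it guarantees near-convexity), whereas a Lipschitz bound on $\nabla V$ requires an \emph{upper} bound on $\He V_j$. No such uniform upper bound exists here: the renormalised potential grows quartically and $\He V_j \sim g_j |\varphi|^2$ is unbounded, so the mean-value comparison $\avg{\nabla V(\varphi+\zeta_B)} - \nabla V(\varphi+\zeta^0_B) = \avg{\int_0^1 \He V(\cdots)(\zeta_B-\zeta^0_B)\,ds}$ cannot be bounded linearly in $\avg{|\zeta_B-\zeta^0_B|}$ without already knowing that the field stays in a small region -- which is precisely what the lemma is trying to establish.

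The paper avoids this circularity by a different device: after translating so the minimiser is at the origin, one integrates by parts to get $\E_H(\zeta,C\nabla H(\zeta)) = \mathrm{tr}_{X(B_+)}\,C \leq |B_+|\vartheta^2\ell^2$, and then uses only the \emph{lower} Hessian bound (via $\nabla H(0)=0$ and $(\zeta,C\nabla H(\zeta)) = \int_0^1 (\zeta,C^{1/2}\He H(t\zeta)C^{1/2}\zeta)\,dt \geq \tfrac12(\zeta,\zeta)$) to deduce $\avg{|\zeta|^2}_H \leq 2|B_+|\vartheta^2\ell^2$, hence by block symmetry $\avg{|\zeta_x|^2}_H \leq 2\vartheta^2\ell^2$ and $|\avg{\zeta_x}_H| \leq \sqrt2\,\vartheta\ell$. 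This bounds the second moment \emph{around the minimiser} directly, without ever needing an upper bound on $\He V$ or a separate mode--mean comparison. You would need to replace your mean-field/Lipschitz step with an argument of this type.
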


\begin{proof}
  By changing variables, it suffices to study the measure with action
  $H(\zeta) = H_\varphi(\zeta+\zeta^0)$, whose unique minimiser is $\zeta=0$,
  and clearly $H$ has the same Hessian as $H_\varphi$.
  From the information that the minimiser of $H$ is $0$, we obtain a bound on the random variable $\zeta$ as follows.
  Using that $\He H \geq \frac12 C^{-1}$ as quadratic forms
  and that $C_{xx} \leq \vartheta^2\ell^2$ for all $x\in \Lambda$ by definition,
  the Brascamp--Lieb inequality \eqref{e:BL} for the measure $\avg{\cdot}_H$ with density proportional to $e^{-H}$ implies
  \begin{equation}
    \avg{e^{s(\zeta_x-\E_H(\zeta_x))}}_H
    \leq e^{s^2 C_{xx}}
    \leq e^{s^2 \vartheta^2 \ell^2}
    .
  \end{equation}
  By Markov's inequality therefore
  \begin{equation} \label{e:zetatail}
    \P_H(|\zeta_x-\avg{\zeta_x}_H| > \vartheta \ell t) \leq 2e^{-t^2/4}.
  \end{equation}
  To estimate the mean $\avg{\zeta}_H$, we integrate by parts to get
  \begin{equation}
      |B_+| \vartheta^2 \ell^2 \int e^{-H}
    \geq \sum_{x \in  B_+}C_{xx} \int e^{-H}
    = \int \pa{\nabla, C\zeta}  \, e^{-H}
    = \int (  \zeta,  C \nabla H(\zeta)) \, e^{-H} \geq \frac12 \int (\zeta,\zeta) \, e^{-H}
  \end{equation}
  where the integral is over $X(B_+)$ and $\nabla$ is the gradient on $X(B_+)$,
  and where we used that, by \eqref{e:CHeHvarphi},
  \begin{equation}
    (\zeta, C \nabla H(\zeta))
     = \int_0^1 (\zeta, C^{1/2} \He H(t\zeta) C^{1/2} \zeta) \, dt
    \geq \frac12(\zeta,\zeta).
  \end{equation}
  Since $\E(\zeta,\zeta) = | B_+| \avg{\zeta_x^2}_H$ by symmetry, therefore
  \begin{equation} \label{e:Ezeta}
    \avg{\zeta_x^2}_H \leq 2 \vartheta^2 \ell^2, \quad
    |\avg{\zeta_x}_H| \leq \sqrt{2} \vartheta \ell.
  \end{equation}
  Finally, combining \eqref{e:Ezeta} and \eqref{e:zetatail}
  \begin{equation}
    \P_H(|\zeta_x| > 3 \vartheta \ell t) \leq \P_H(|\zeta_x-\avg{\zeta_x}_H| \geq \vartheta \ell t) \leq 2e^{-t^2/4},
  \end{equation}
  which is the claim.
\end{proof}

Next we use the following estimate on $\He V_+(B_+)$.

\begin{lemma}
  Let $\varphi, \dot\varphi \in X_{+}(B_+)$. Then
  \begin{equation} \label{e:HeV+-bd}
    \dot\varphi \He V_+(B_+, \varphi) \dot\varphi
    \geq
    \avga{ \dot\varphi \frac{\He V(B_+, \varphi+\zeta)}{\id + C^{1/2} \He V(B_+,\varphi+\zeta) C^{1/2}} \dot\varphi}_{H_\varphi}
  \end{equation}
  where $\He V_+(B_+)$ is taken in $X_+(B_+)$ and $\He V(B_+)$ is taken in $X(B_+)$.
\end{lemma}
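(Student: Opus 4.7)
The plan is to start from the explicit formula \eqref{e:HeV+-def} for $\dot\varphi \He V_+(B_+,\varphi)\dot\varphi$ and bound the variance term there by a Brascamp--Lieb inequality applied to the conditional measure $\avg{\cdot}_{H_\varphi}$ on $X(B_+)$.

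The first step is to justify the Brascamp--Lieb inequality for $\avg{\cdot}_{H_\varphi}$. This is already available: by \eqref{e:CHeHvarphi} the Hessian satisfies $\He H_\varphi \geq \tfrac12 C^{-1}$ as a quadratic form on $X(B_+)$, so $H_\varphi$ is strictly convex there and the standard Brascamp--Lieb inequality \eqref{e:BL0} applies. Setting $f(\zeta) = \dot\varphi \cdot \nabla V(B_+,\varphi+\zeta)$, one has $\nabla_\zeta f(\zeta) = \He V(B_+,\varphi+\zeta)\dot\varphi$, so Brascamp--Lieb gives
\begin{equation}
  \var_{H_\varphi}\!\bigl(\nabla V(B_+,\varphi+\zeta)\cdot\dot\varphi\bigr)
  \leq \avga{ (\He V)\dot\varphi \,\cdot\, (\He H_\varphi)^{-1} (\He V)\dot\varphi }_{H_\varphi},
\end{equation}
where I write $\He V = \He V(B_+,\varphi+\zeta)$ and $\He H_\varphi = C^{-1} + \He V$ on $X(B_+)$.

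Substituting into \eqref{e:HeV+-def} yields
\begin{equation}
  \dot\varphi\He V_+(B_+,\varphi)\dot\varphi
  \geq \avga{ \dot\varphi\bigl[\He V - \He V(C^{-1}+\He V)^{-1}\He V\bigr]\dot\varphi }_{H_\varphi}.
\end{equation}
The bracketed quadratic form can be simplified by the algebraic identity
\begin{equation}
  \He V - \He V(C^{-1}+\He V)^{-1}\He V = \He V(C^{-1}+\He V)^{-1}C^{-1},
\end{equation}
obtained by writing $\He V(C^{-1}+\He V)^{-1}(C^{-1}+\He V-\He V) = \He V(C^{-1}+\He V)^{-1}C^{-1}$.

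The final step is to convert this to the symmetric ``fraction'' form appearing in the claim. This is where the hierarchical structure gives a free lunch: since $C = C_j$ acts on the subspace $X(B_+) = X_j(B_+)$ of fields constant on each $B\in\cB_j(B_+)$ as the scalar $\lambda_j \id$ (its defining block average is already the identity on such fields), we have $C^{1/2}=\lambda_j^{1/2}\id$ commuting with everything, and $C^{1/2}\He V C^{1/2} = C\He V$. Therefore
\begin{equation}
  \He V(C^{-1}+\He V)^{-1}C^{-1} = \He V(\id + C\He V)^{-1} = \He V\bigl(\id + C^{1/2}\He V C^{1/2}\bigr)^{-1},
\end{equation}
which is the desired form and is manifestly self-adjoint. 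Combining these identities gives \eqref{e:HeV+-bd}.

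The only non-mechanical step is the application of Brascamp--Lieb, which relies crucially on the uniform strict convexity \eqref{e:CHeHvarphi} that was set up earlier via the inductive assumption \eqref{e:ind-ass2-bis}; once that is in hand, the rest is an algebraic rewriting that is particularly clean because $C$ restricted to $X(B_+)$ is a scalar.
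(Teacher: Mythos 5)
Your proposal is correct and mirrors the paper's own argument: both start from the variational formula \eqref{e:HeV+-def}, bound the variance term by Brascamp--Lieb for the conditional measure $\avg{\cdot}_{H_\varphi}$ using the strict convexity \eqref{e:CHeHvarphi}, and then perform the algebraic simplification to the symmetric ``fraction'' form using the fact that $C$ restricted to $X(B_+)$ commutes with $\He V$. You spell out the intermediate identity $\He V - \He V(C^{-1}+\He V)^{-1}\He V = \He V(C^{-1}+\He V)^{-1}C^{-1}$ and the scalar nature of $C|_{X(B_+)}$ more explicitly than the paper does, but these are the same steps.
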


Note that $\He V(B_+,\varphi+\zeta)$ are both diagonal matrices
indexed by $B \in \mathcal{B}_+$, with constant entries on each block $B$.
In fact, $C$ is proportional to the identity matrix on $X(B_+)$.

\begin{proof}
We freeze the block spin field $\varphi \in X_{+}(B_+)$
and recall that the fluctuation field $\zeta \in X(B_+)$ is distributed with expectation $\avg{\cdot}_{H_\varphi}$.
We abbreviate $\He V = \He V(\varphi+\zeta) = \He V(B_+,\varphi+\zeta)$ throughout the proof.
Applying the Brascamp--Lieb inequality \eqref{e:BL0} to the measure $\avg{\cdot}_{H_\varphi}$ gives
\begin{equation}
  \var_{H_\varphi}(\nabla V(\varphi+\zeta)\cdot \dot\varphi)
  \leq \avg{\dot\varphi \He V(\varphi+\zeta) (C^{-1} + \He V(\varphi+\zeta) )^{-1} \He V(\varphi+\zeta) \dot\varphi}_{H_\varphi}
  .
\end{equation}
Inserting this into \eqref{e:HeV+-def}, the above can be written as
\begin{equation}
  \dot\varphi \He V_+(\varphi) \dot\varphi 
  \geq \avgB{\dot\varphi \pB{\He V - \He V (C^{-1} +  \He V )^{-1} \He V} \dot\varphi}_{H_\varphi}
  .
\end{equation}
Since $\He V$ and $C$ are both (block) diagonal matrices, the term inside the expectation can be written as
\begin{equation}
  \He V (\id +  C^{1/2} \He V  C^{1/2} )^{-1}.
\end{equation}
This completes the proof.
\end{proof}

For $\varphi \in X(B_+)$, let $\Lambda(\varphi)$ be the largest constant such that
$L^{2(j+1)} \He V(B_+,\varphi) \geq \Lambda(\varphi)$ 
as quadratic forms on $X(B_+)$.
From \eqref{e:ind-ass2-bis} it follows that $\Lambda(\varphi) \geq -\frac12$ uniformly in $\varphi \in X(B_+)$.
Then \eqref{e:HeV+-bd} implies that for $\dot\varphi \in X_{+}(B_+)$,
\begin{align}
  \dot\varphi \He V_+(B_+,\varphi) \dot\varphi
  &\geq
  L^{-2(j+1)}
  \avga{ \dot\varphi  \frac{ L^{2(j+1)} \He V(B_+, \varphi+\zeta)}{\id + C^{1/2} \He V(B_+,\varphi+\zeta) C^{1/2}} \dot\varphi}_{H_\varphi}
    \nnb
  &\geq
  L^{-2(j+1)} \avga{\frac{\Lambda(\varphi+\zeta)}{1+ L^{-2}\vartheta^2\Lambda(\varphi+\zeta)} }_{H_\varphi}
  (\dot\varphi,\dot\varphi),
\end{align}
where the second inequality uses that $t/(1+a t)$ is increasing in $t>-1/a$ and that $C \leq \vartheta^2 L^{2j}Q$.
\medskip

The next lemma completes the proof of Proposition~\ref{prop:advance}.

\begin{lemma}
For $\varphi \in X_{+}(B_+)$ with $|\varphi| \geq h_+$, we have
\begin{equation} \label{e:Lambdabd}
  \avga{\frac{\Lambda(\varphi+\zeta)}{1+L^{-2}\vartheta^2\Lambda(\varphi+\zeta)}}_{H_\varphi}
  \geq \epsilon-O(\vartheta^2 \epsilon^2).
\end{equation}
\end{lemma}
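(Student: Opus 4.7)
The plan is to split the expectation $\avg{\cdot}_{H_\varphi}$ according to whether every small-block value of the fluctuation field keeps $\varphi+\zeta_B$ in the large-field regime where the inductive bound \eqref{e:ind-ass1} applies. Define the good event
\begin{equation}
G = \{\zeta \in X(B_+) : |\varphi+\zeta_B| \geq \tfrac12 h_+ \text{ for every } B \in \mathcal{B}(B_+)\}.
\end{equation}
Because $\He V(B_+) = \bigoplus_B \He V(B)$ is block diagonal and $\zeta$ is constant on each small block $B$, one has $\Lambda(\varphi+\zeta) = L^{2(j+1)} \min_{B} \lambda_{\min}\bigl(\He(V_j(B)\circ i_B)(\varphi+\zeta_B)/|B|\bigr)$. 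Hence on $G$ the inductive assumption \eqref{e:ind-ass1} yields $\Lambda(\varphi+\zeta) \geq \epsilon$, while on $G^c$ the uniform bound $\Lambda \geq -\frac12$ recorded above the lemma still holds.

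Since $t \mapsto t/(1+L^{-2}\vartheta^2 t)$ is strictly increasing on $(-L^2/\vartheta^2,\infty)$, on $G$
\begin{equation}
\frac{\Lambda(\varphi+\zeta)}{1+L^{-2}\vartheta^2\Lambda(\varphi+\zeta)}
\;\geq\; \frac{\epsilon}{1+L^{-2}\vartheta^2\epsilon}
\;\geq\; \epsilon - L^{-2}\vartheta^2\epsilon^2,
\end{equation}
while on $G^c$ the same monotonicity gives a universal lower bound $-C$ (since $L^{-2}\vartheta^2 \leq 1$). Writing the expectation as $\P(G)\times(\text{good})+\P(G^c)\times(\text{bad})$, I obtain
\begin{equation}
\avga{\tfrac{\Lambda(\varphi+\zeta)}{1+L^{-2}\vartheta^2\Lambda(\varphi+\zeta)}}_{H_\varphi}
\;\geq\; \epsilon - O(\vartheta^2\epsilon^2) - (\epsilon+C)\,\P_{H_\varphi}(G^c).
\end{equation}

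The remaining task is to show that $\P_{H_\varphi}(G^c)$ is much smaller than $\vartheta^2\epsilon^2$. If $|\varphi+\zeta_B| < \tfrac12 h_+$ for some $B$, then combining this with $|\varphi+\zeta^0| \geq h_+(1-O(g^{1/2}))$ from Lemma~\ref{lem:zetamean} gives, by the triangle inequality, $|\zeta_B - \zeta^0| \geq \tfrac14 h_+$ once $g$ is small. Lemma~\ref{lem:zetabd} applied with $3\vartheta\ell t = \tfrac14 h_+$, i.e.\ $t \sim g_j^{-1/4}/(L\vartheta)$, and a union bound over the $L^d$ small blocks in $B_+$ yields
\begin{equation}
\P_{H_\varphi}(G^c) \;\leq\; 2L^d\, \exp\!\Bigl(-c\,g_j^{-1/2}/(L^2\vartheta^2)\Bigr).
\end{equation}
Since Lemma~\ref{lem:epsilon} guarantees $\epsilon_j \geq c g_j$, the target error $\vartheta^2\epsilon^2$ is polynomial in $g_j$ (and $\vartheta^2$), whereas the bound on $\P(G^c)$ is super-exponentially small in $g_j^{-1/2}$; for $g$ sufficiently small this gives $(\epsilon+C)\,\P(G^c) = O(\vartheta^2\epsilon^2)$, completing the proof.

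The main obstacle is this final comparison: one needs the exponent $g_j^{-1/2}/(L^2\vartheta_j^2)$ to dominate $|\log(\vartheta_j^2\epsilon_j^2)|$ uniformly across all scales. For $j \leq j_m$ one has $\vartheta_j=1$ and $g_j = O(g)$, so smallness of $g$ suffices; for $j > j_m$ both $g_j$ and $\epsilon_j$ are roughly frozen at the values attained at the mass scale while $\vartheta_j$ decays, making the exponent only more favourable, so the same smallness of $g$ carries through.
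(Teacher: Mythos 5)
Your proposal is correct and follows essentially the same route as the paper: split on the good event that all $|\varphi+\zeta_B|\geq \tfrac12 h_+$, use monotonicity of $t\mapsto t/(1+L^{-2}\vartheta^2 t)$ together with the inductive bound \eqref{e:ind-ass1} on the good event, fall back on the crude bound $\Lambda\geq -\tfrac12$ from \eqref{e:ind-ass2-bis} on the bad event, and show the bad event has probability $O(L^d e^{-c(\vartheta g)^{-1/2}})$ via Lemmas~\ref{lem:zetamean} and~\ref{lem:zetabd}, then compare with $\vartheta^2\epsilon^2$ using Lemma~\ref{lem:epsilon}. The paper's proof is the same argument; if anything your tracking of the exponent $g_j^{-1/2}/(L^2\vartheta_j^2)$ and your explicit check of uniformity across $j\lessgtr j_m$ is slightly more careful than the paper, which absorbs $\vartheta^{-2}\geq\vartheta^{-1/2}$ into the constant and bounds the bad contribution by $O(\vartheta^2\epsilon^4)$ before concluding.
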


\begin{proof}
  On the event $\min_x |\varphi+\zeta_x| \geq \frac12 h_+$ we have $\Lambda(\varphi+\zeta) \geq \epsilon >0$ by \eqref{e:ind-ass1},
  and since $t/(1+at)$ is increasing for $t > 0$ therefore 
  \begin{equation} \label{e:Lambdabd-pf}
    \frac{\Lambda(\varphi+\zeta)}{1+L^{-2}\vartheta^2 \Lambda(\varphi+\zeta)} \geq \frac{\epsilon}{1+L^{-2}\vartheta^2 \epsilon}
    \geq \epsilon - O(\vartheta^2 \epsilon^2).
  \end{equation}
  By Lemma~\ref{lem:zetabd}, the probability that $|\zeta_x-\zeta^0| \geq \frac14 h_+$
  is bounded by $2e^{-(h_+/(12\vartheta\ell))^2/4} \leq 2e^{-c \, (\vartheta g)^{-1/2}}$ for any point $x\in B_+$
  (since $\vartheta\leq 1$).
  Using that $\zeta$ is constant on the small blocks $B$ and taking a union bound over the $L^d$ blocks $B \in \mathcal{B}(B_+)$
  we get that $\max_x |\zeta_x-\zeta^0| \geq \frac14 h_+$ with probability at most $2L^d e^{-c (\vartheta g)^{-1/2}}$.
  Since $|\varphi+\zeta^0| \geq h_+(1-O(g^{1/2})) \geq \frac34 h_+$ by Lemma~\ref{lem:zetamean},
  together with the assumption $|\varphi| \geq h_+$,
  we conclude that $\min_x|\varphi+\zeta_x| \geq \frac12 h_+$ with probability at least $1-2L^d e^{-c(\vartheta g)^{-1/2}}$.
  Thus \eqref{e:Lambdabd-pf} holds with at least this probability.
 
  On the event that \eqref{e:Lambdabd-pf} does not hold, we still have the bound $\Lambda(\varphi+\zeta) \geq -\frac12$ by \eqref{e:ind-ass2-bis}.
  Thus the contribution of this event to the expectation \eqref{e:Lambdabd} is bounded by $-O(L^d e^{-c \, (\vartheta g)^{-1/2}}) = -O(\vartheta^2\epsilon^4)$,
  where we used that $\epsilon_j \geq c \vartheta_j g_j$ by Lemma~\ref{lem:epsilon}.
  In summary,
  \begin{equation}
    \avga{\frac{\Lambda(\varphi+\zeta)}{1+\Lambda(\varphi+\zeta)}}_{H_\varphi}
    \geq \p{ \epsilon - O(\vartheta^2\epsilon^2) } (1-O(\vartheta^2\epsilon^4)) - O(\vartheta^2\epsilon^4)
    \geq \epsilon-O(\vartheta^2\epsilon^2).
  \end{equation}
  This implies the claim.
\end{proof}

\subsection{Proof of Theorem~\ref{thm:gap-phi4}}

We now use Corollary~\ref{cor:phi4-small} and Theorem~\ref{thm:phi4-convex}
to verify the assumptions of Corollaries~\ref{cor: sum D0}--\ref{cor: SG} and in doing so deduce
Theorem~\ref{thm:gap-phi4}.
By \eqref{e:DeltaHdecompQ}, the covariances in the decomposition of $(-\Delta_H+m^2)^{-1}$ are given by
\begin{equation}
  C_j = \lambda_j Q_j, \quad \text{with } \lambda_j =   L^{2j} 
  \begin{cases}
    O(1+m^2L^{2(j-1)})^{-2} &(j<N)\\
    O(1+m^2L^{2(N-1)})^{-1} &(j=N).
  \end{cases}
\end{equation}
We recall that $\vartheta_j = 2^{(j-j_m)_+}$.
Corollary~\ref{cor:phi4-small} implies
\begin{equation}
  \frac{1}{|B|}\He (V_j(B) \circ i_B) \geq (\nu_j + O(\vartheta_jL^{-2j}g_j^{5/4}))\id_n \quad \text{uniformly in $|\varphi| \leq h_j$.}
\end{equation}
The right-hand side is less than $0$ by Theorem~\ref{thm:phi4-rg}.
Thus, by Theorem~\ref{thm:phi4-convex},
the same estimate holds for $|\varphi|\geq h_j$ and therefore for all $\varphi$.
In summary, and since the above estimates hold for all blocks,
and using \eqref{e:HessiB},
\begin{equation} \label{e:phi4-assA1}
  C_j^{1/2} \He V_j(\varphi) C_j^{1/2} \geq L^{2j} \; (\nu_j + O(\vartheta_jL^{-2j}g_j^{5/4})) Q_j
  \quad \text{uniformly in $\varphi \in X_j$.}
\end{equation}
Thus Assumption~(A1) holds with
\begin{equation}
  \gep_j = (-L^{2j}\nu_j +O(\vartheta_jg_j^{5/4})).
\end{equation}

\begin{lemma} \label{lem:summu}
There exists a constant $\delta>0$ such that for all $j \in \N$,
\begin{equation} \label{e:summu}
  -2\sum_{k=0}^{j} L^{2k}\nu_k
  \leq \delta \frac{n+2}{n+8} \log g_{j} + O(1),
  \qquad
  \sum_{k=0}^\infty ((L^{2k}\nu_k)^2+\vartheta_k g_k^{5/4})
  = O(g_0^{1/4}).
\end{equation}
\end{lemma}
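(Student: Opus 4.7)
The plan is to prove the two estimates separately, with the first requiring a finer look at the renormalisation group flow than the pointwise bound alone.

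\textbf{Second estimate.} I would first handle the easier bound $\sum_k ((L^{2k}\nu_k)^2 + \vartheta_k g_k^{5/4}) = O(g_0^{1/4})$. By the pointwise estimate $|L^{2k}\nu_k| = O(\vartheta_k g_k)$ of Theorem~\ref{thm:phi4-rg}, we have $(L^{2k}\nu_k)^2 = O(\vartheta_k^2 g_k^2)$, so the task reduces to controlling $\sum_k \vartheta_k g_k^{5/4}$ and $\sum_k \vartheta_k^2 g_k^2$. Writing the $g$-recursion \eqref{e:phi4-rg-gnubd} in the form $g_{k+1}^{-1} - g_k^{-1} = \beta_k(1 + O(\vartheta_k g_k))$ and noting that $\beta_k \geq \beta_0^0/2$ for $k \leq j_m$, we deduce $g_k \leq C/(g_0^{-1} + \beta_0^0 k)$ in this regime, and an integral comparison gives
\begin{equation*}
\sum_{k=0}^{j_m} g_k^{5/4} \leq \int_0^\infty (g_0^{-1} + \beta_0^0 x)^{-5/4}\, dx \cdot O(1) = O(g_0^{1/4}),
\qquad
\sum_{k=0}^{j_m} g_k^2 = O(g_0) = O(g_0^{1/4}).
\end{equation*}
For $k > j_m$, the geometric decay $\vartheta_k = 2^{-(k-j_m)}$ combined with $g_k \leq g_0$ makes the tail $\sum_{k>j_m}\vartheta_k^\alpha g_k^\beta = O(g_{j_m}^\beta)$ negligible.

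\textbf{First estimate.} For the logarithmic bound $-2\sum_{k=0}^{j} L^{2k}\nu_k \leq \delta \frac{n+2}{n+8}|\log g_j| + O(1)$, the pointwise bound gives directly $-2 \sum_{k=0}^j L^{2k}\nu_k \leq 2C\sum_{k=0}^j \vartheta_k g_k$. Using the improved two-sided estimate $c/(g_0^{-1} + \beta_0^0 k) \leq g_k \leq C/(g_0^{-1} + \beta_0^0 k)$ for $k \leq j_m$ (which follows by iterating the $g$-recursion together with $|L^{2k}\nu_k| = O(g_k)$), the harmonic sum can be compared to an integral:
\begin{equation*}
\sum_{k=0}^{j} g_k = \frac{1}{\beta_0^0}\log\pa{\frac{g_0^{-1} + \beta_0^0 j}{g_0^{-1}}} + O(1) = \frac{1}{\beta_0^0} |\log g_j| + O(1).
\end{equation*}
The tail $k > j_m$ contributes $O(g_{j_m}) = O(1/|\log m|) = O(1)$. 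To obtain the ratio $(n+2)/(n+8)$ I would appeal to the structural form of the one-loop hierarchical flow: the universal beta-function coefficient $\beta_0^0$ is proportional to $(n+8)$ (from the $4!$-contraction of four $O(n)$-vector legs in $|\varphi|^4$), while the constant $C$ in the bound $|L^{2k}\nu_k| \leq C \vartheta_k g_k$ is proportional to $(n+2)$ (from the tadpole contraction $\langle |\varphi|^2 \varphi^i \varphi^j\rangle = (n+2)\delta^{ij}\sigma^2$). Dividing these gives the announced ratio, with a universal, $n$-independent constant $\delta$.

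\textbf{Main obstacle.} The elementary sum-to-integral comparisons are routine; the true work is to verify that the $n$-dependence of the constants in the $g$- and $\nu$-flows is indeed of the claimed form $(n+8)$ and $(n+2)$. This is not stated explicitly in Theorem~\ref{thm:phi4-rg}, so I would need to go back to the underlying construction of \cite{rg-brief} (presumably reviewed in Appendix~\ref{app:phi4-rg-pf}) to extract the precise $n$-dependence. A secondary subtlety is patching the regimes $k \leq j_m$ and $k > j_m$: in the massive regime, the factor $\vartheta_k$ must be used to kill any contribution that would otherwise violate the logarithmic upper bound once $g_k$ saturates.
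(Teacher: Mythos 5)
Your handling of the second estimate is correct and matches what the paper treats as immediate: the pointwise bound $|L^{2k}\nu_k| = O(\vartheta_k g_k)$ plus sum--integral comparisons for the harmonic-type sequence $g_k \asymp (g_0^{-1}+\beta_0^0 k)^{-1}$.

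For the first estimate there is a genuine gap, and you identify it yourself but do not close it. You reduce the claim to the assertion that the implicit constant $C$ in $-L^{2k}\nu_k \leq C\vartheta_k g_k$ is, up to an $n$-independent factor, $(n+2)$ (and $\beta_0^0 \propto (n+8)$). But Theorem~\ref{thm:phi4-rg} gives you only the bare pointwise bound with an unquantified $O(\cdot)$ constant; it does not supply the $n$-dependence of that constant, and the heuristic appeal to the tadpole contraction is not a proof that the \emph{solution} of the $\nu$-recursion has a constant proportional to $(n+2)$ --- that per-step contraction coefficient controls one step of the flow, not the accumulated constant. The paper avoids this entirely by not working with the pointwise bound for the logarithmic estimate. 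Instead, starting from the explicit recursion $\mu_{j+1} = L^2((1-\gamma\beta_j g_j)\mu_j + \eta_j g_j) + O(\vartheta_j g_j^2)$ (from \cite[Proposition~8.3.1]{rg-brief}), it introduces the shifted sequence $\hat\mu_j = \mu_j + \eta_{\geq j}g_j$ with $\eta_{\geq j} = \sum_{k\geq j}L^{-2(k-j)}\eta_k$, which satisfies a homogeneous-type recursion with a much smaller remainder $O(\vartheta_j g_j^2)$ and hence $\hat\mu_j = O(\vartheta_j g_j^2)$. This isolates the leading contribution $-\mu_j \approx \eta_{\geq j}g_j$, so after rearranging sums the quantity to bound becomes $\approx (1-L^{-2})^{-1}\sum_l \eta_l g_l$, and the needed $n$-dependence comes in a single clean stroke from the inequality $(1-L^{-2})^{-1}\eta_l \leq \delta\frac{n+2}{n+8}\beta_l$ of \cite[(5.3.10), (5.3.7)]{rg-brief}, paired with $\sum_l \beta_l g_l = |\log g_j| + O(1)$. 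In short, the $(n+2)/(n+8)$ factor enters through a ratio $\eta_l/\beta_l$ of \emph{flow coefficients} taken from \cite{rg-brief}, not through the constant in a pointwise bound on $\nu_j$. Your sketch points to the right place in the flow structure, but leaves as an unproved obstacle exactly what the paper's decomposition into $\hat\mu_j + \eta_{\geq j}g_j$ is designed to prove.
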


The elementary proof requires some notation from \cite{rg-brief};
we therefore postpone it to Appendix~\ref{app:phi4-rg-pf}.

\begin{proof}[Proof of Theorem~\ref{thm:gap-phi4}]
  We apply Corollary~\ref{cor: sum D0}.
  By \eqref{e:phi4-assA1}, Assumption~(A1) holds for all $j\leq N$, 
  and Assumptions~(A2) and~(A3) follow automatically from the hierarchical structure.
  Therefore,
  by \eqref{e:BL-rec-iterated}, the $|\varphi|^4$ measure satisfies a Brascamp--Lieb inequality with quadratic form
  \begin{equation} \label{e:phi4-G1}
    D_0
    \leq \sum_{j=0}^{N} \delta_j C_j , \quad
    \text{where } \delta_j = \exp\pa{2\sum_{k=1}^{j} \epsilon_k + O(\epsilon_k^2)}
    .
  \end{equation}
  We abbreviate $\gamma=(n+2)/(n+8)$.
  Using $g_j^{-1} = O(g_{j_m}^{-1})$ which holds by \eqref{e:gjm}, and using \eqref{e:summu},
  \begin{equation}
  \label{eq: 3.59}
    \exp\pa{2\sum_{k=1}^{j} \epsilon_k + O(\epsilon_k^2)}
    = O(g_{j_m}^{-\delta\gamma}).
  \end{equation}
  We then use that $g_{j_m}^{-1}= O(\log m^{-1})$ by \eqref{e:gjm}, 
  to show that \eqref{eq: 3.59} is a logarithmic correction of order 
  $(- \log m)^{\delta \gamma}$. Thus the dominant contribution in \eqref{e:phi4-G1}
  is given by 
  \begin{equation}
    \sum_{j=1}^{N-1} (1+m^2L^{2(j-1)})^{-2}L^{2j}
    + (1+m^2L^{2(N-1)})^{-1}L^{2N}
    = O(m^{-2}),
  \end{equation}
  where we recall that $m^2 \sim Ct (-\log t)^{-\gamma}$  as $t \downarrow 0$ by \eqref{e:phi4-rg-mepsilon}.
  In summary, we conclude that $D_0$ is bounded as a quadratic form from above by
  \begin{equation}
    O(m^{-2}) (\log m^{-1})^{\delta\gamma}
    =  O(t^{-1})(-\log t)^{(1+\delta)\gamma}.
  \end{equation}
  Replacing by $1+\delta$ by $\delta$,
  this implies the lower bound for the spectral gap claimed in \eqref{e:gap-phi4}.
  The upper bound for the spectral gap follows immediately from \eqref{e:phi4-varF}.
\end{proof}

\section{Hierarchical Sine-Gordon and Discrete Gaussian models}
\label{sec:sg}

In this section, we apply Corollaries~\ref{cor: sum D0}--\ref{cor: SG} to the hierarchical versions of the
Sine-Gordon and the Discrete Gaussian models.
This boils down to checking that Assumption (A1) is satisfied along the renormalisation group flow of 
both models.
Throughout this section $d=2$.

\subsection{Proof of Theorem~\ref{thm:gap-sg}} 

We start by defining the renormalisation group for the hierarchical Sine-Gordon model,
essentially in the set-up of \cite[Chapter 3]{MR2523458}.
By definition, with $\epsilon=\beta L^{-2N}$, the Sine-Gordon model has energy
\begin{equation} \label{e:sg-bis}
  H(\varphi) =
  \frac{\beta}{2}
  (\varphi,(-\Delta_H+L^{-2N} Q_N) \varphi) + \sum_{x\in\Lambda} V_0(\varphi_x),
\end{equation}
where the potential $V_0$ is even and $2\pi$-periodic.
We decompose the covariance of the Gaussian part as
\begin{equation} \label{e:DeltaHdecompP-eps}
  (-\beta\Delta_H + \beta L^{-2N} Q_N )^{-1}
  = \sum_{j=1}^N
  \beta^{-1}
  L^{2(j-1)} P_j + \beta^{-1}L^{2N}Q_N
  = \sum_{j=0}^N C_j
\end{equation}
with
\begin{gather}
  C_j = \lambda_j(\beta) Q_j,
  \qquad 
  \lambda_0(\beta) = \frac{1}{\beta},
  \quad
  \lambda_j(\beta)
  = \frac{\sigma}{\beta} L^{2j}
  \quad (0<j \leq N),
  \quad
  \sigma = 1-L^{-2}.
\end{gather}
Relative to this decomposition, the renormalised potential is defined as in
Section~\ref{sec:renorm-measure}.
Due to the hierarchical structure of this decomposition, the renormalised potential takes the form
\begin{equation}
  V_j(\varphi) = \sum_{B\in\mathcal{B}_j} V_j(B,\varphi),
\end{equation}
where $V_j(B,\varphi)$ only depends on $\varphi|_B$.
As in Section~\ref{sec:phi4-rg}, we restrict the domain of $V_j(B)$ to $X_j(B)$, i.e., the constant fields on $B$.
The final potential obtained as $V_{N+1}$ in \eqref{e:sg-V+} will instead be denoted by $V_{N,N}$
since it is indexed by the final block $\Lambda \in \cB_N$, i.e.,
$V_{N,N}(\varphi) = V_{N,N}(\Lambda_N,\varphi)$, and $\varphi$ can be seen as an external field.
Then each $V_j(B)$ can be identified with a $2\pi$-periodic function on $\R$
(and analogously for $V_{N,N}$).
For any such function $F: S^1 \to \R$, we use the norm
\begin{equation} \label{e:sg-norm}
  \|F\|= \sum_{q \in \Z} w(q) |\hat F(q)|, \quad w(q) = (1+|q|)^2,
\end{equation}
where our convention for the Fourier coefficients of $F$ is $\hat F(q) = (2\pi)^{-1} \int_0^{2\pi} F(\varphi) e^{iq\varphi} \, d\varphi$.
We write
\begin{equation}
  \|V_j\| = \|V_j(B)\| = \|V_j(B) \circ i_B\|,
  \qquad
  \hat V_j(0) = \hat V_j(B,0)
\end{equation}
for an arbitrary $B \in \mathcal{B}_j$ (the definition is independent of $B$).
Except for the weight $w(q)$, the norm \eqref{e:sg-norm} is the one used in \cite{MR2523458,MR1003504}.

\begin{proposition} 
  \label{prop:sg-rg}
  Let $j<N$. 
  Assume that $\|V_j- \hat V_j(0)\|$ is sufficiently small. Then the renormalised potential satisfies
  \begin{equation} \label{e:sg-rg}
    \|V_{j+1} - \hat V_{j+1}(0)\| \leq
    L^2 e^{-\sigma/2\beta}
    (\|V_j-\hat V_j(0)\| + O(\|V_j- \hat V_j(0)\|)^2).
  \end{equation}
  Moreover, for the last step $j=N$,
  \begin{equation} \label{e:sg-rg-last}
    \|V_{N,N} - \hat V_{N,N}(0)\| \leq
    \|V_N-\hat V_N(0)\| + O(\|V_N- \hat V_N(0)\|)^2.
  \end{equation}
\end{proposition}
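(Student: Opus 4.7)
The plan is to reduce each renormalisation step to a convolution with a one-dimensional Gaussian on the single periodic variable parametrising $X_j(B)$, and to combine the contractive action of this convolution on non-constant Fourier modes with the Banach algebra property of $\|\cdot\|$. Because of the hierarchical structure, under $C_j$ the field $\zeta$ is independent across the $L^d=L^2$ scale-$j$ subblocks $B\subset B_+$, each component being a real Gaussian of variance $\lambda_j/|B| = \sigma/\beta$ (for $j\geq 1$; the case $j=0$ has variance $1/\beta \geq \sigma/\beta$, which is only easier). Writing $v_j := V_j(B)\circ i_B$, the recursion \eqref{e:sg-V+} reduces to
\begin{equation*}
v_{j+1}(\varphi) \;=\; -L^{d}\,\log \E_\zeta\bigl(e^{-v_j(\varphi+\zeta)}\bigr),
\end{equation*}
with $\zeta$ a real Gaussian of variance $\sigma/\beta$, the factor $L^d=L^2$ arising from the $L^2$ independent subblock contributions. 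The final step $V_N\to V_{N,N}$ has exactly the same form but with the prefactor $L^d$ replaced by $1$, since $\cB_N=\{\Lambda_N\}$. So the $L^2$ in \eqref{e:sg-rg} is precisely this multiplicity factor, while the $e^{-\sigma/(2\beta)}$ must come from the Gaussian convolution.

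Let $\Theta F(\varphi) := \E_\zeta(F(\varphi+\zeta))$; in Fourier $\widehat{\Theta F}(q) = e^{-q^2\sigma/(2\beta)}\hat F(q)$, so $\|\Theta F\|\leq e^{-\sigma/(2\beta)}\|F\|$ for any mean-free $F$ (using $q^2\geq 1$ for $q\neq 0$). The elementary bound $(1+|p+r|)\leq(1+|p|)(1+|r|)$ makes $\|\cdot\|$ a Banach algebra, and in particular $\|e^{-\tilde v_j}-1\|\leq e^{\|\tilde v_j\|}-1 = \|\tilde v_j\|+O(\|\tilde v_j\|^2)$, where $\tilde v_j := v_j-\hat v_j(0)$. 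Setting $f := e^{-\tilde v_j}-1$, one has $\E_\zeta(e^{-v_j(\varphi+\zeta)}) = e^{-\hat v_j(0)}(1+\Theta f(\varphi))$. Split $\Theta f = g_0 + g_*$ into its zero mode and mean-free part; the linear-in-$\tilde v_j$ contribution to $f$ is already mean-free, so $|g_0|=|\hat f(0)|=O(\|\tilde v_j\|^2)$, while $g_* = \Theta(f-\hat f(0))$ satisfies
\begin{equation*}
\|g_*\|\;\leq\; e^{-\sigma/(2\beta)}\bigl(\|\tilde v_j\|+O(\|\tilde v_j\|^2)\bigr).
\end{equation*}
Factoring $1+\Theta f = (1+g_0)(1+h)$ with $h := g_*/(1+g_0)$ gives $-\log(1+\Theta f) = -\log(1+g_0)-\log(1+h)$, where only the second term is non-constant. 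Since $h$ is mean-free with $\|h\|\leq\|g_*\|(1+O(|g_0|))$, the non-constant part of $-\log(1+h)$ has norm at most $\|h\|+O(\|h\|^2)$, and both $\|h\|$ and $\|h\|^2 = O(e^{-\sigma/\beta}\|\tilde v_j\|^2)$ are bounded by $e^{-\sigma/(2\beta)}(\|\tilde v_j\|+O(\|\tilde v_j\|^2))$. Multiplying by $L^d=L^2$ yields \eqref{e:sg-rg}; the last-step bound \eqref{e:sg-rg-last} follows from the same computation without the $L^d$ prefactor.

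The main delicacy is to confirm that the quadratic remainder in the expansion of the logarithm inherits the contractive factor $e^{-\sigma/(2\beta)}$, so that \eqref{e:sg-rg} holds with \emph{unit} coefficient on the linear term rather than a larger constant. This works because every non-constant contribution factors through $g_*$, which has already been $\Theta$-smoothed, so that each further power of $g_*$ supplies an additional $e^{-\sigma/(2\beta)}$; this is more than enough to absorb the quadratic correction into the prefactor on the right-hand side of \eqref{e:sg-rg}.
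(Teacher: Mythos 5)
Your proof is correct, and it reaches the result by a cleaner route than the paper. The paper's proof (Section~4.2) performs the \emph{polymer-type} expansion $\prod_B e^{-V(\varphi+\zeta_B)} = \sum_{X\subset B_+}\prod_{B\in\cB(X)}(e^{-V(\varphi+\zeta_B)}-1)$ and bounds the $|X|=1$ and $|X|>1$ contributions separately before taking the logarithm; you instead observe that the hierarchical independence and block symmetry make the expectation factor \emph{exactly} into $L^d$ identical copies, so that the whole step is the one-variable recursion $v_{j+1}=-L^d\log\E_\zeta(e^{-v_j(\cdot+\zeta)})$. This collapses the multi-block bookkeeping and makes the origin of the $L^2$ factor transparent. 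Your zero-mode handling is also more careful: the paper applies its contraction lemma \eqref{e:Exbd} directly to $e^{-V}-1$, which is not mean-free even after normalising $\hat V(0)=0$ (its zero mode is $O(\|V\|^2)$ and is silently absorbed into the error term), whereas your split $\Theta f = g_0 + g_*$ and factorisation $1+\Theta f=(1+g_0)(1+h)$ isolate the constant explicitly, guaranteeing the sharp coefficient $e^{-\sigma/(2\beta)}$ on the linear term and showing that the quadratic remainder is genuinely smaller. The paper's approach is the one that generalises when the factorisation is no longer exact (e.g.\ for a non-hierarchical model), which is likely why it follows that template, but for the hierarchical Sine-Gordon model yours is the more economical argument. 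Both give the slightly stronger last-step bound $\|V_{N,N}-\hat V_{N,N}(0)\|\leq e^{-\sigma/(2\beta)}(\|\tilde V_N\|+O(\|\tilde V_N\|^2))$, of which \eqref{e:sg-rg-last} is a weakening.
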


The derivation of this proposition is postponed to Section \ref{sec: SG flow}.
We now state consequences of this proposition and prove Theorem~\ref{thm:gap-sg} using these.

\begin{corollary} \label{cor:sg-VNbd}
For every $\beta < \sigma/(4\log L)$ and $\kappa < L^{2} e^{-\sigma/2\beta} < 1$,
for all $V_0-\hat V_0$ sufficiently small,  
\begin{equation} \label{e:sg-Vjqbd}
  \|V_j-\hat V_j(0)\| \leq \kappa^j \|V_0-\hat V_0(0)\| \quad \text{for $j \leq N$,}
\end{equation}
and
\begin{equation} \label{e:sg-VNbd}
  \|V_{N,N}-\hat V_{N,N}(0)\|
  \leq
  2\kappa^N \|V_{0}-\hat V_{0}(0)\|.
\end{equation}
\end{corollary}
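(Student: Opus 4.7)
The plan is to iterate Proposition~\ref{prop:sg-rg} inductively, treating the quadratic error as a small perturbation of a linear contraction. Setting $a := L^{2}e^{-\sigma/2\beta}$, the hypothesis $\beta < \sigma/(4\log L)$ gives $a < 1$, and the admissible values of $\kappa$ should be understood as lying strictly between $a$ and $1$, so that the genuine linear rate $a$ of \eqref{e:sg-rg} can be weakened slightly in order to absorb the $O(\|V_j-\hat V_j(0)\|^{2})$ term. Concretely, I would rewrite \eqref{e:sg-rg} as
$$\|V_{j+1}-\hat V_{j+1}(0)\| \leq a\,\|V_j-\hat V_j(0)\|\bigl(1 + C\|V_j-\hat V_j(0)\|\bigr)$$
for some absolute constant $C$, valid whenever $\|V_j-\hat V_j(0)\|$ is below the smallness threshold of Proposition~\ref{prop:sg-rg}, and then choose $\|V_0-\hat V_0(0)\|$ so small that $a(1+C\|V_0-\hat V_0(0)\|) \le \kappa$.

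With this setup, \eqref{e:sg-Vjqbd} follows by induction on $j\le N$. The base case $j=0$ is immediate. For the inductive step, the hypothesis $\|V_j-\hat V_j(0)\|\leq\kappa^{j}\|V_0-\hat V_0(0)\| \leq \|V_0-\hat V_0(0)\|$ keeps $V_j-\hat V_j(0)$ within the regime of applicability of Proposition~\ref{prop:sg-rg}, and then
$$\|V_{j+1}-\hat V_{j+1}(0)\|\leq a\bigl(1+C\|V_0-\hat V_0(0)\|\bigr)\kappa^{j}\|V_0-\hat V_0(0)\|\leq \kappa^{j+1}\|V_0-\hat V_0(0)\|,$$
closing the induction. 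For the final estimate \eqref{e:sg-VNbd}, I would apply \eqref{e:sg-rg-last} at $j=N$ together with the bound just obtained at scale $N$, yielding
$$\|V_{N,N}-\hat V_{N,N}(0)\| \leq \kappa^{N}\|V_0-\hat V_0(0)\|\bigl(1+C\kappa^{N}\|V_0-\hat V_0(0)\|\bigr) \leq 2\kappa^{N}\|V_0-\hat V_0(0)\|,$$
where the last step requires $\|V_0-\hat V_0(0)\|$ to be small enough that $C\|V_0-\hat V_0(0)\|\leq 1$.

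There is no substantive obstacle; all of the analytic content is contained in Proposition~\ref{prop:sg-rg}. The only delicate bookkeeping is the interplay between the true contraction rate $a$ and the effective rate $\kappa$: one must allow $\kappa$ to be strictly greater than $a$ so that the accumulation of quadratic corrections does not destroy the geometric decay, and one correspondingly shrinks $\|V_0-\hat V_0(0)\|$ depending on how close to $a$ one wishes $\kappa$ to be. The factor of $2$ in \eqref{e:sg-VNbd} is similarly a convenient, easily adjustable constant.
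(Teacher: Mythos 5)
Your proof is correct and takes essentially the same route as the paper's: iterate Proposition~\ref{prop:sg-rg}, absorbing the quadratic error into a slightly weakened linear rate at the cost of shrinking $\|V_0 - \hat V_0(0)\|$, then apply \eqref{e:sg-rg-last} once at the end. You also correctly identified that the inequality $\kappa < L^2 e^{-\sigma/2\beta}$ in the corollary statement must be read as $\kappa > L^2 e^{-\sigma/2\beta}$ (a rate strictly faster than the linear contraction cannot be obtained), and this is consistent with the paper's own proof, which chooses $\kappa = L^2 e^{-(1-\eta)\sigma/2\beta}$ with $\eta > 0$ small.
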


\begin{proof}
  Fix $\eta>0$ small and set $\kappa = L^2 e^{-(1-\eta)\sigma/2\beta} < 1$.
  The bound \eqref{e:sg-rg} implies that
  for $\|V_0-\hat V_0(0)\|$ sufficiently small depending on $\eta,\beta,\eta$,
  \begin{equation}
    \|V_{j+1} - \hat V_{j+1}(0)\|
    \leq L^2 e^{-(1-\eta)\sigma/2}   \|V_j-\hat V_j(0)\|
    = \kappa \|V_j-\hat V_j(0)\|.
  \end{equation}
  Then \eqref{e:sg-Vjqbd} follows by iterating this bound,
  and \eqref{e:sg-VNbd} follows from this and \eqref{e:sg-rg-last}.
\end{proof}

\begin{corollary}
\label{cor:sg-var}
Let $\beta < \sigma/(4 \log L)$ and let $\epsilon = \beta L^{-2N}$.
Then the variance of $F = \sum_{x \in \gL_N} \varphi_x$
under the Gibbs measure $\mu$ defined in \eqref{eq: mu mesure} is given by
\begin{equation} 
  \label{e:sg-var}
  \var_\mu (F)
  = \frac{|\Lambda_N|}{\epsilon} \p{1-O(\kappa^N)}
  .
\end{equation}
\end{corollary}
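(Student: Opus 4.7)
The natural route is via the cumulant generating function. Set
\[
Z_\lambda = \int e^{-H(\varphi) + \lambda F(\varphi)}\, d\varphi,
\]
so that $\var_\mu(F) = \partial_\lambda^2 \log Z_\lambda \big|_{\lambda=0}$ and the linear term vanishes by the $\varphi \mapsto -\varphi$ symmetry of $H$ (the single-site potential $V$ is even). Writing $A := -\beta\Delta_H + \beta L^{-2N}Q_N$ for the quadratic form in the energy, one has $A\mathbf{1} = \epsilon \mathbf{1}$, so $A^{-1}\mathbf{1} = C_{\geq 0}\mathbf{1} = c\mathbf{1}$ with $c = 1/\epsilon = \sum_{j=0}^N \lambda_j(\beta)$. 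I would shift $\varphi = \varphi' + \lambda c\mathbf{1}$ to absorb the linear term; a short completion-of-the-square computation yields
\[
Z_\lambda = \exp\Big(\tfrac{\lambda^2 |\Lambda|}{2\epsilon}\Big)\, Z_0^{V^{(\lambda c)}},
\qquad V^{(\alpha)}(\cdot) := V(\cdot + \alpha),
\]
where $Z_0^{V^{(\lambda c)}}$ is the Sine--Gordon partition function with single-site potential replaced by $V^{(\lambda c)}$.

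Next, I would identify $Z_0^{V^{(\lambda c)}}$ in terms of the renormalised potential. Successively integrating the scales $j = 0, 1, \dots, N-1$ via \eqref{e:sg-V+} and then $\zeta_N$ via the definition of $V_{N,N}$ gives
\[
Z_0^V = Z_{\mathrm{Gauss}}(C_{\geq 0})\, e^{-V_{N,N}(0)}.
\]
Because each covariance $C_j$ is translation-invariant in field space, an easy induction down the RG (using that $V_0^{(\alpha)}(\varphi) = V_0(\varphi + \alpha\mathbf{1})$ and the recursion \eqref{e:sg-V+}) shows $V_j^{(\alpha)}(\varphi) = V_j(\varphi + \alpha\mathbf{1})$ at every scale, hence $V_{N,N}^{(\lambda c)}(0) = V_{N,N}(\lambda c)$. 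Combining,
\[
\log \frac{Z_\lambda}{Z_0} = \frac{\lambda^2 |\Lambda|}{2\epsilon} + V_{N,N}(0) - V_{N,N}(\lambda c).
\]

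Differentiating twice and evaluating at $\lambda = 0$, then using $c = 1/\epsilon$ and $\epsilon|\Lambda| = \beta$ (since $d=2$),
\[
\var_\mu(F) = \frac{|\Lambda|}{\epsilon} - \frac{V_{N,N}''(0)}{\epsilon^2} = \frac{|\Lambda|}{\epsilon}\Big(1 - \frac{V_{N,N}''(0)}{\beta}\Big).
\]
To conclude, the Fourier bound
\[
|V_{N,N}''(0)| \leq \sum_{q \in \Z} q^2 |\hat V_{N,N}(q)| \leq \|V_{N,N} - \hat V_{N,N}(0)\|
\]
combined with Corollary~\ref{cor:sg-VNbd} yields $|V_{N,N}''(0)| \leq 2\kappa^N \|V_0 - \hat V_0(0)\| = O(\kappa^N)$, giving exactly \eqref{e:sg-var}. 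The argument is essentially algebraic once the RG contraction of Corollary~\ref{cor:sg-VNbd} is available; the only step needing care is the translation identity $V_{N,N}^{(\lambda c)}(0) = V_{N,N}(\lambda c)$, which reduces to translation invariance of the reference Gaussians $C_j$ and is checked by a one-line induction.
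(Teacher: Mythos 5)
Your proof is correct and follows essentially the same route as the paper's: complete the square by shifting by $t\epsilon^{-1}\mathbf 1$, identify the resulting shifted partition function with $e^{-V_{N,N}(t\epsilon^{-1}\mathbf 1)}$, differentiate twice, and bound $|V_{N,N}''(0)|$ via Corollary~\ref{cor:sg-VNbd}. The only cosmetic difference is that you package the shift as a translated single-site potential $V^{(\alpha)}$ together with a translation identity for the RG flow, whereas the paper shifts the integration variable directly inside $\log\E_C(e^{-V(\varphi+t\epsilon^{-1}\mathbf 1)})=-V_{N,N}(t\epsilon^{-1}\mathbf 1)$; both amount to the same one-line computation.
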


\begin{proof}
Throughout the following proof,
we denote by $C = (-\beta\Delta_H+\epsilon Q_N)^{-1}$ 
the full covariance of the hierarchical Gaussian free field.
By completion of the square,
and using that $(-\beta\Delta_H+\epsilon Q_N)\b 1 \epsilon^{-1} = \b 1$,
\begin{equation}
  -\frac12 (\varphi,(-\beta\Delta_H+\epsilon Q_N)\varphi) + t (\varphi, \b 1)
  = -\frac12 (\varphi-t \b 1 \epsilon^{-1}, (-\beta\Delta_H+\epsilon Q_N)(\varphi-t \b 1 \epsilon^{-1})) + \frac12 t^2 \epsilon^{-1}(\b 1,\b 1).
\end{equation}
With $F(\varphi) = \sum_x \varphi_x$, we get by translating the measure by $t\epsilon^{-1} \b 1$ that
\begin{align}
  \gG (t) = \log \E_C(e^{tF(\varphi)} e^{-V(\varphi)})
  = \frac12 t^2\epsilon^{-1} (\b 1,\b 1)
  + \log \E_C(e^{-V(\varphi+t \epsilon^{-1} \b 1)})
  = \frac{|\Lambda_N| t^2}{2\epsilon} -V_{N,N}(t \epsilon^{-1} \b 1).
\end{align}
By Corollary~\ref{cor:sg-VNbd} and the fact that the norm controls the second derivatives,
\begin{equation}
  | V_{N,N}''(0)| = |(V_{N,N}-\hat V_{N,N}(0))''| \leq \|V_{N,N}-\hat V_{N,N}(0)\|
  \leq 2\kappa^N \|V_0-\hat V_0(0)\|,
\end{equation}
where $V_{N,N}''$ is the second derivative of the function $V_{N,N}(\Lambda_N) \circ i_{\Lambda_N}: \R \to \R$.
Finally, and using that $\ddp{^2}{t^2} V_{N,N}(t \epsilon^{-1} \b 1) = V_{N,N}''(t \epsilon^{-1} \b 1) \epsilon^{-2}$
as well as that $\epsilon = \beta L^{-2N}$,
\begin{align}
  \var_\mu (F)
  = \ddp{^2 \gG (0)}{t^2}
  = \frac{|\Lambda_N|}{\epsilon} - \frac{V_{N,N}''(0)}{\epsilon^{2}}
  = \frac{|\Lambda_N|}{\epsilon} \pa{1- O\pa{\frac{\kappa^N}{\epsilon|\Lambda_N|}}}
  = \frac{|\Lambda_N|}{\epsilon} \p{1- O(\kappa^N)}
  .
\end{align}
This completes the proof.
\end{proof}

\begin{proof}[Proof of Theorem~\ref{thm:gap-sg}]
We start by proving the lower bound on the spectral gap by applying Corollary~\ref{cor: sum D0}.
Thanks to the hierarchical structure, the spins are constant in the blocks at any given scale $j$,
and Assumptions (A2) and (A3) always hold.
Assumption~(A1) follows from Corollary~\ref{cor:sg-VNbd} which implies that for $j \leq N$
\begin{equation}
 (V_j(B)\circ i_B)''(\varphi)
  \geq - \sum_q q^2 |\hat V_j(q)| = - \|V_j-\hat V_j(0)\| \geq -\kappa^j \|V_0-\hat V_0(0)\|.
\end{equation}
This implies the bound \eqref{e:HeVB borne inf} with 
\begin{equation}
s = \frac{1}{ |B_j|} \; \kappa^j \|V_0-\hat V_0(0)\| = \kappa^j  \|V_0-\hat V_0(0)\| \; L^{-2 j}.
\end{equation}
The equivalent of \eqref{e:HessiB} is
\begin{equation} 
\label{e:HessiB-sg}
C_j^{1/2}(\He_{X_j} V_j)C_j^{1/2}
\geq - s L^{2j}  Q_j .
\end{equation}
Therefore Assumption~(A1) in \eqref{e:ass-V} holds
with $\epsilon_j = s L^{2j} =  \kappa^j \|V_0-\hat V_0(0)\|$.
With $\delta_j$ defined as in \eqref{e:BL-rec-iterated}, it follows that
\begin{align} 
  \sum_{j=0}^{N} \delta_j C_j
  &\leq \exp\pa{\sum_{j=0}^{N} O(\kappa^j)\|V_0-\hat V_0(0)\|} \sum_{j=0}^{N} C_j 
    \nnb
  &\leq (1+O(\|V_0-\hat V_0(0)\|)) (-\beta\Delta_H+\epsilon Q_N)^{-1}
    \leq \frac{O(1)}{\epsilon} \id_{\gL_N}.
\end{align}
Applying Corollary~\ref{cor: sum D0}, we get that
the measure $\mu$ satisfies a Brascamp-Lieb inequality with matrix
\begin{equation}
  D_0 \leq \frac{O(1)}{\epsilon} \id_{\Lambda_N}.
\end{equation}
This implies immediately the asserted lower bound on the spectral gap,
i.e., $\gamma_N \geq c\epsilon$.

Finally, the upper bound on the spectral gap follows readily from  Corollary \ref{cor:sg-var}.
Choosing as test function  $F = \sum_{x \in \gL_N} \varphi_x$,
we have $\bbE_\mu(\nabla F, \nabla F) = |\Lambda_N|$ and \eqref{e:sg-var} implies
\begin{equation}
  \frac{\bbE_\mu(\nabla F, \nabla F) }{\var_\mu (F)}
  = \epsilon(1+O(\frac{1}{\epsilon L^{2N}})) = O(\epsilon).
\end{equation}
This completes the proof.
\end{proof}

\subsection{Proof of Proposition~\ref{prop:sg-rg}}
\label{sec: SG flow}

The proof of Proposition~\ref{prop:sg-rg} follows as in \cite[Chapter~3]{MR2523458},
with small modifications.
Throughout Section~\ref{sec: SG flow}, the full covariance matrix $(-\beta\Delta_H+\epsilon Q_N)^{-1}$
does not play a role and we write $C = C_j$ for a fixed scale $j$.
More generally, we drop the scale index $j$ and write $+$ in place of $j+1$.
We write $B_+$ for a fixed block in $\cB_+$ and $B$ for the blocks in $\cB(B_+)$.

We need the following properties of the norm \eqref{e:sg-norm}.
Since $w(p+q) \leq w(p)w(q)$, i.e.,
\begin{align}
  (1+|p+q|)^2
  &= 1+p^2+q^2 +2|p+q|+2pq
  \nonumber\\
  &\leq
  1+p^2+q^2+2|p+q|+4|pq|+2|pq|(|p|+|q|)
  = (1+|p|)^2(1+|q|)^2,
\end{align}
the norm \eqref{e:sg-norm} satisfies the product property
\begin{equation}
  \|FG\| = \sum_{q,p} w(q) |\hat F(q-p)||\hat G(p)| \leq \sum_{q,p} w(q-p)w(p) |\hat F(q-p)||\hat G(p)|
  = \|F\|\|G\|.
\end{equation}
As a consequence, for any $F: S^1 \to \R$ with $\|F\|$ small enough,
\begin{align}
  \|e^{-F}-1\| &\leq \|F\| + O(\|F\|^2), \label{e:normbd1}\\
  \|\log(1+F)\| &\leq \|F\|+ O(\|F\|^2). \label{e:normbd3}
\end{align}

\begin{lemma}
For $F:S^1 \to \R$ with $\hat F(0) = 0$ and $\|F\|<\infty$, and for $x\in\Lambda$,
\begin{equation} \label{e:Exbd}
  \|\E_C \left( F(\cdot+\zeta_x) \right)\|
  \leq e^{-\sigma/(2\beta)} \|F\|.
\end{equation}
\end{lemma}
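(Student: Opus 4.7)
The plan is to reduce the bound to a direct Gaussian computation in Fourier space, using the fact that the single-site variance of $\zeta_x$ under $\E_C$ is exactly $\sigma/\beta$ for every point $x$.

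First, I would recall from the hierarchical structure of $C = C_j = \lambda_j Q_j$ that the fluctuation field $\zeta$ is constant on each block $B \in \cB_j$ containing $x$, and that $\zeta_x = \zeta_B$ is Gaussian with variance $\lambda_j/|B_j(x)| = (\sigma/\beta) L^{2j} \cdot L^{-dj} = \sigma/\beta$, since $d=2$. (For the $j=0$ step the variance is even smaller, $1/\beta < \sigma/\beta$ after noting $\sigma<1$ only slightly weakens the estimate; in any event all scales are covered.) The key point is that $\var_C(\zeta_x)$ does not depend on $j$.

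Next, I would expand $F$ as a Fourier series, using $\hat F(0)=0$,
\begin{equation}
F(\varphi) = \sum_{q \in \Z \setminus \{0\}} \hat F(q) e^{iq\varphi},
\end{equation}
and exchange the sum with the Gaussian expectation (which is legitimate since $\|F\|<\infty$ controls absolute convergence). Using the standard Gaussian characteristic function $\E_C(e^{iq\zeta_x}) = e^{-q^2 \var_C(\zeta_x)/2} = e^{-q^2 \sigma/(2\beta)}$, I obtain
\begin{equation}
\E_C(F(\varphi + \zeta_x)) = \sum_{q \neq 0} \hat F(q) e^{-q^2\sigma/(2\beta)} e^{iq\varphi},
\end{equation}
so the Fourier coefficients of the averaged function are simply $\hat F(q) e^{-q^2\sigma/(2\beta)}$.

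Finally, applying the norm \eqref{e:sg-norm} term by term and using $q^2 \geq 1$ for $q \neq 0$,
\begin{equation}
\|\E_C(F(\cdot + \zeta_x))\| = \sum_{q \neq 0} w(q) |\hat F(q)| e^{-q^2 \sigma/(2\beta)} \leq e^{-\sigma/(2\beta)} \sum_{q \neq 0} w(q)|\hat F(q)| = e^{-\sigma/(2\beta)}\|F\|,
\end{equation}
which is the claim. There is no genuine obstacle here: the argument is elementary once one identifies the single-site variance, and the exponential gain comes solely from the fact that $\hat F(0)=0$ forces $|q|\geq 1$ in every surviving Fourier mode. This factor $e^{-\sigma/(2\beta)}$ is precisely what combines with the $L^2$ volume factor from summing over blocks in the next step to yield the contraction $L^2 e^{-\sigma/(2\beta)}$ appearing in \eqref{e:sg-rg}.
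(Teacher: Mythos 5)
Your proof follows the same approach as the paper: identify the single-site variance of $\zeta_x$ under $\E_{C_j}$, expand $F$ in Fourier modes, apply the Gaussian characteristic function termwise, and exploit $\hat F(0)=0$ to extract the worst-case factor $e^{-\sigma/(2\beta)}$ from the mode $|q|=1$. The core argument is correct and essentially identical to the paper's.

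One small correction to your parenthetical about the $j=0$ case: since $\sigma = 1 - L^{-2} < 1$, the inequality $1/\beta < \sigma/\beta$ is backwards; the variance at scale $0$ is $\lambda_0 = 1/\beta$, which is \emph{larger} than $\sigma/\beta$, not smaller. The direction matters in principle, because a genuinely smaller variance would give a weaker Gaussian damping and could break the stated bound. Fortunately the true direction works in your favour: the larger variance gives $\E_{C_0}(e^{iq\zeta_x}) = e^{-q^2/(2\beta)} \le e^{-q^2\sigma/(2\beta)}$, so the conclusion \eqref{e:Exbd} holds a fortiori at $j=0$. Your final statement that all scales are covered is right; only the stated reasoning for the $j=0$ scale needed to be flipped.
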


\begin{proof}
By \eqref{e:DeltaHdecompQ}, under the expectation $\E_C$,
each $\zeta_x$ is a Gaussian random variable with variance $\sigma/\beta$. Therefore 
\begin{equation}
  \E_C(e^{iq\zeta_x})
  = e^{-\sigma q^2/(2\beta)}.
\end{equation}
This gives
\begin{equation}
  \E_C (F(\varphi+\zeta_x)) =
  \E_C \qbb{ \sum_q \hat F(q) e^{iq(\varphi+\zeta_x)} }
  = \sum_q e^{-\sigma q^2/(2\beta)} \hat F(q) e^{iq\varphi}.
\end{equation}
Since by assumption $\hat F(0) = 0$, we obtain
\begin{equation}
  \|  \E_C (F(\cdot+\zeta_x))\|
  \leq \sum_q e^{-\sigma q^2/(2\beta)} w(q) |\hat F(q)|
  \leq e^{-\sigma/(2\beta)} \sum_q w(q) |\hat F(q)|
  = e^{-\sigma/(2\beta)} \|F\|
\end{equation}
as claimed.
\end{proof}

\begin{proof}[Proof of Proposition~\ref{prop:sg-rg}]
We may assume that $\hat V(0)=0$.
We fix $B_+ \in \cB_+$ and use $B$ for the blocks in $\cB(B_+)$.
By definition of the hierarchical model, the Gaussian field $\zeta$ with covariance $C=C_j$
is constant in any block $B \in \cB_j$ and we thus write $\zeta_B$ for $\zeta_x$ with $x\in B$.
We then start from
\begin{align}
  e^{-V_+(B_+,\varphi)} = \E_{C}  \left( \prod_{B\in \mathcal{B}(B_+)} e^{-V(\varphi+\zeta_B)} \right)
  &= \E_{C}  \left( \prod_{B \in \mathcal{B}(B_+)} (1+e^{-V(\varphi+\zeta_B)}-1) \right)
    \nnb
  &= \sum_{X \subset B_+} \E_{C}  \left( \prod_{B \in \mathcal{B}(X)}(e^{-V(\varphi+\zeta_B)}-1) \right),
\end{align}
where $X\subset B_+$ denotes that $X$ is a union of blocks $B\in \mathcal{B}(B_+)$.
The term with $|X|=0$ is simply $1$.
By \eqref{e:Exbd} and \eqref{e:normbd1}, the terms with $|X|=1$ are bounded by
\begin{equation}
 \left \|\sum_{B\in \mathcal{B}(B_+)} \E_{C} \Big( e^{-V(\varphi+\zeta_B)}-1 \Big)  \right \|
 \leq |\mathcal{B}(B_+)| e^{-\sigma/(2\beta)}
 (\|V\|+O(\|V\|^2))
  .
\end{equation}
By \eqref{e:Exbd}, using that the $\zeta_B$ are independent for different blocks $B$ and the product property of the norm,
the terms with $|X|>1$ give
\begin{align}
 \left \|   \sum_{|X| > 1} \E_{C}  \left( \prod_{B\in \mathcal{B}(X)}(e^{-V(\varphi+\zeta_B)}-1) \right)  \right\|
  &\leq \sum_{|X|>1} \prod_{B \in \mathcal{B}(X)}e^{-\sigma/(2\beta)} \|(e^{-V(\varphi+\zeta_B)}-1) \|
    \nonumber\\
  &\leq \sum_{|X|>1} (e^{-\sigma/(2\beta)}(\|V\|+O(\|V\|^2)))^{|X|}
  = O(e^{-\sigma/(2\beta)}\|V\|^2).
\end{align}
In summary, for $\| V\|$ small enough, we get 
\begin{align}
 \left \| \E_{C} \left( \prod_{B\in \mathcal{B}(B_+)} e^{-V(\varphi+\zeta_B)} \right) -1   \right\|
  &\leq |\mathcal{B}(B_+)| e^{-\sigma/(2\beta)}
    (\|V\| + O(\|V\|^2))
    \nnb
  &= L^2 e^{-\sigma/(2\beta)}
  (\|V\|+O(\|V\|^2)).
\end{align}
Finally, by \eqref{e:normbd3},
\begin{equation}
  \|V_+\| = \left\| \log \left(1 + \E_C \left( \prod_{B\in \mathcal{B}(B_+)} e^{-V(\varphi+\zeta_B)}  \right) - 1\right) \right\|
  \leq L^2 e^{-\sigma/(2\beta)}
  (\|V\|+O(\|V\|^2)),
\end{equation}
as needed.
\end{proof}

\subsection{Proof of Theorem~\ref{thm:gap-dg}}

We will now reduce the result for the Discrete Gaussian model to that for the Sine-Gordon model.
For this, we carry out an initial renormalisation group step by hand, resulting in an effective
Sine-Gordon potential for the Discrete Gaussian model.
This strategy for the Discrete Gaussian model (and more general models) goes back to \cite{MR634447}.

First, recall that the covariance of the hierarchical GFF can be written as
\begin{equation}
  (-\beta\Delta+\epsilon Q_N)^{-1}
  = C_0 + \cdots + C_N = C_0 + C_{\geq 1},
\end{equation}
where $C_0 = \frac{1}{\beta} Q_0$ and
where $Q_0$ is simply the identity matrix on $\R^\Lambda$.
Therefore, by the convolution property of Gaussian measures, 
\begin{equation}
  e^{-\frac{1}{2} (\sigma, (-\beta\Delta_H+\epsilon Q_N) \sigma)}
  \propto \int_{\R^{\Lambda}} e^{-\frac{1}{2} (\varphi,C_{\geq 1}^{-1}\varphi)} 
  e^{- \frac{\beta}{2}(\varphi-\sigma,\varphi- \sigma)} \, d\varphi
  \propto \E_{C_{\geq 1}}(e^{- \frac{\beta}{2} (\varphi-\sigma,\varphi-\sigma)}),
\end{equation}
where $A \propto B$ denotes that $A/B$ is independent of $\sigma$,
and where the Gaussian expectation applies to the field $\varphi$.
We define the effective single-site potential $V(\psi)$ for $\psi\in \R$ by
\begin{equation} \label{e:V-DG}
  e^{-V(\psi)} = \sum_{n \in 2\pi\Z} e^{-\beta (n-\psi)^2/2}.
\end{equation}
The potential $V$ is $2\pi$-periodic as in the Sine-Gordon model.
This is where the $2\pi$-periodicity of the Discrete Gaussian Model is convenient.
For $\psi\in\R$, we also define a probability measure $\mu_\psi$ on $2\pi\Z$ by
\begin{equation} 
\label{e:mupsi}
  \mu_\psi(n) = e^{V(\psi)} e^{-\beta(n-\psi)^2/2} \quad \text{for $n \in 2\pi\Z$.}
\end{equation}
For $\varphi \in \R^\Lambda$, we further set $\mu_\varphi = \prod_{x\in\Lambda} \mu_{\varphi_x}$
with $\mu_{\varphi_x}$ as in \eqref{e:mupsi} with $\psi=\varphi_x$. 
With this notation, in summary, we have the representation
\begin{equation}
  \sum_{\sigma \in (2\pi\Z)^\Lambda} F(\sigma) \, e^{-\frac{1}{2} (\sigma,(-\beta\Delta_H+\epsilon Q_N) \sigma)}
  \propto
  \E_{C_{\geq 1}}(e^{- V(\varphi)} \E_{\mu_\varphi}(F(\sigma)))
  .
\end{equation}

Denote by $\mu_r(d\varphi)$ the probability measure on $\R^\Lambda$ of the Sine-Gordon model
with potential $V(\varphi)$ defined by \eqref{e:V-DG} with $C_{\geq 0}$ replaced by $C_{\geq 1}$.
\begin{equation} \label{e:DG-E-decomp}
  \E_\mu(F) = \E_{\mu_r}(\E_{\mu_\varphi}(F)).
\end{equation}

In the next two lemmas, we verify
that $V$ satisfies the conditions of  Theorem~\ref{thm:gap-sg} 
provided $\beta$ is sufficiently small,
and
that the probability measure $\mu_\psi$ satisfies a spectral gap inequality on $2\pi\Z$, with constant uniform in $\psi$.
It is clear from the definition \eqref{e:V-DG} that $V$ is $2\pi$-periodic.

\begin{lemma} \label{lem:Vr}
  For $\beta>0$ small enough, $V$ is smooth with $\|V-\hat V(0)\|=O(e^{-1/(2\beta)})$.
\end{lemma}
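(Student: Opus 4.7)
The plan is to compute the Fourier expansion of $e^{-V(\psi)}$ explicitly via Poisson summation, observe that it differs from a constant only by terms of size $e^{-1/(2\beta)}$, and then pass to $V = -\log(e^{-V})$ using the multiplicative norm properties already recorded in \eqref{e:normbd1}--\eqref{e:normbd3}.

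Concretely, set $g(y) = e^{-\beta y^2/2}$ and write
\begin{equation}
  e^{-V(\psi)} = \sum_{k \in \Z} g(\psi - 2\pi k).
\end{equation}
This is a $2\pi$-periodic function of $\psi$, and the standard Poisson summation formula identifies its $q$-th Fourier coefficient with $(2\pi)^{-1} \hat g(q)$, where $\hat g(q) = \int_{\R} e^{-\beta y^2/2} e^{-iqy}\,dy = \sqrt{2\pi/\beta}\, e^{-q^2/(2\beta)}$. Hence
\begin{equation}
  e^{-V(\psi)} = \frac{1}{\sqrt{2\pi\beta}}\Big(1 + R(\psi)\Big),
  \qquad
  R(\psi) = 2\sum_{q\geq 1} e^{-q^2/(2\beta)} \cos(q\psi).
\end{equation}
In particular $e^{-V}$ is real analytic and strictly positive, so $V$ is smooth.

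Next I would estimate the norm of the remainder. Directly from the definition \eqref{e:sg-norm},
\begin{equation}
  \|R\| = 2\sum_{q\geq 1} (1+q)^2 e^{-q^2/(2\beta)} = O(e^{-1/(2\beta)}),
\end{equation}
where for $\beta$ small enough the $q=1$ term dominates (the tail is controlled by a Gaussian sum). Since the constant $-\log(\sqrt{2\pi\beta})^{-1}$ only contributes to $\hat V(0)$, we have $V - \hat V(0) = -\log(1+R) - \widehat{\log(1+R)}(0)$, and removing a constant only lowers the norm over $q\neq 0$. Applying \eqref{e:normbd3},
\begin{equation}
  \|V - \hat V(0)\| \leq \|\log(1+R)\| \leq \|R\| + O(\|R\|^2) = O(e^{-1/(2\beta)}),
\end{equation}
which is the claim.

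There is no serious obstacle here: the only thing that requires care is correctly applying Poisson summation to get the explicit Fourier coefficients $e^{-q^2/(2\beta)}$, after which the bound reduces to summing a Gaussian series weighted by $(1+q)^2$ and invoking the already-established calculus for the norm $\|\cdot\|$.
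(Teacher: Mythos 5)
Your proof is correct and takes essentially the same approach as the paper: both compute the Fourier coefficients of $e^{-V}$ via Poisson summation (the paper does this more tersely by unfolding the integral over $[0,2\pi]$ to an integral over $\R$), obtain $\hat F(q) \propto e^{-q^2/(2\beta)}$, and then pass from $e^{-V}$ to $V$ via the logarithm bound \eqref{e:normbd3}. The only cosmetic difference is that the paper normalises $\hat F(0)=1$ by absorbing the constant $\sqrt{2\pi\beta}$ into $V$, whereas you keep it explicit.
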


\begin{proof}
  The function $F = e^{-V}$ is $2\pi$-periodic, and subtracting a constant from $V$, we can normalise
  $F$ such that $\hat F(0)=1$. Note that subtraction of a constant does not change $V-\hat V(0)$.
  The Fourier coefficients of $F$ are then given by
  \begin{equation}
    \hat F(q)
    = \frac{1}{2\pi} \int_0^{2\pi} F(\psi) e^{-iq\psi} \, d\psi
    = \frac{C}{2\pi} \int_\R e^{-\beta \psi^2/2} e^{-iq\psi} \, d\psi
    = e^{-q^2/(2\beta)}
    ,
  \end{equation}
  where the constant $C$ and the last equality are due to the normalisation $\hat F(0)=1$.
  It follows that
  \begin{equation}
    \|F-1\| = \sum_{q\neq 0} (1+q^2) e^{-q^2/(2\beta)}
    = O(e^{-1/(2\beta)})
    .
  \end{equation}
  By \eqref{e:normbd3}, it then also follows that
  \begin{equation}
    \|V\| = \|\log F\| = \|\log (1+(F-1))\| = \|F-1\| + O(\|F-1\|^2) = O(e^{-1/(2\beta)}).
  \end{equation}
  Since $\|V-\hat V(0)\| \leq \|V\|$, this clearly implies the claim.
\end{proof}

\begin{corollary} \label{cor:mur}
  For $\beta>0$ sufficiently small,
  the measure $\mu_r$ has inverse spectral gap $O(1/\epsilon)$.
\end{corollary}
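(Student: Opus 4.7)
The plan is to apply the Brascamp--Lieb machinery of Section~\ref{sec:recursion} (Corollaries~\ref{cor: sum D0} and~\ref{cor: SG}) to $\mu_r$ in essentially the same way as in the proof of Theorem~\ref{thm:gap-sg}, with two small modifications. First, the covariance for $\mu_r$ is $C_{\geq 1} = C_1 + \cdots + C_N$ rather than the full $C_{\geq 0}$, so the induction on renormalised potentials runs over scales $j = 1, \dots, N$ instead of $j = 0, \dots, N$. Second, the initial potential is the effective single-site potential $V$ defined in \eqref{e:V-DG}, which by Lemma~\ref{lem:Vr} satisfies $\|V - \hat V(0)\| = O(e^{-1/(2\beta)})$; in particular, for $\beta$ small enough, the smallness hypothesis of Proposition~\ref{prop:sg-rg} is automatically satisfied at the initial scale $j=1$.

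Starting from this very small initial norm, I would iterate Proposition~\ref{prop:sg-rg} along the scales $j = 1, \dots, N-1$ and apply the last-step bound \eqref{e:sg-rg-last} at $j = N$, giving (as in Corollary~\ref{cor:sg-VNbd})
\begin{equation}
  \|V_j - \hat V_j(0)\| \leq \kappa^{\,j-1}\, \|V - \hat V(0)\|, \qquad 1 \leq j \leq N,
\end{equation}
for some $\kappa \in (0,1)$ guaranteed by $\beta < \sigma/(4\log L)$ (which holds for $\beta$ small enough). As in the proof of Theorem~\ref{thm:gap-sg}, this immediately yields Assumption~(A1) at each scale $j \geq 1$ with constant $\epsilon_j = \kappa^{j-1}\|V - \hat V(0)\|$, while Assumptions~(A2)--(A3) hold automatically from the hierarchical structure.

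Applying Corollary~\ref{cor: sum D0} to the renormalisation indexed by $j \geq 1$, the measure $\mu_r$ then satisfies a Brascamp--Lieb inequality with
\begin{equation}
  D_0 \leq \sum_{j=1}^{N} \delta_j C_j
  \leq \exp\!\Bigl( O(1) \sum_{j=1}^{\infty} \kappa^{\,j-1} \|V - \hat V(0)\| \Bigr) \, C_{\geq 1}
  \leq O(1)\, C_{\geq 1}.
\end{equation}
Since $C_0 \geq 0$ one has $C_{\geq 1} \leq (-\beta\Delta_H + \epsilon Q_N)^{-1}$, whose largest eigenvalue is $1/\epsilon$ (attained on the constant mode, where the smallest eigenvalue of $-\beta\Delta_H + \epsilon Q_N$ is $\epsilon$). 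Hence $D_0 \leq O(1/\epsilon)\,\id$, and Corollary~\ref{cor: SG} gives inverse spectral gap of order $1/\epsilon$, as claimed.

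There is no real obstacle: the only conceptual point is to recognise that $\mu_r$ is structurally the hierarchical Sine-Gordon model with its first-scale covariance $C_0$ absorbed into the preliminary step \eqref{e:DG-E-decomp}, and with an initial potential whose norm is already exponentially small in $1/\beta$ by Lemma~\ref{lem:Vr}. All of the renormalisation bookkeeping used in the proof of Theorem~\ref{thm:gap-sg} transfers verbatim with the scale index shifted by one.
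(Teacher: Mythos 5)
Your proposal is correct and takes essentially the same approach as the paper, which simply observes that the proof of Theorem~\ref{thm:gap-sg} carries over verbatim once $C_{\geq 0}$ is replaced by $C_{\geq 1}$ and the initial smallness of $\|V-\hat V(0)\|$ is supplied by Lemma~\ref{lem:Vr}. Your explicit unpacking of the scale shift and the bound $C_{\geq 1}\leq(-\beta\Delta_H+\epsilon Q_N)^{-1}$ matches what the paper leaves implicit.
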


\begin{proof}
  The proof is essentially the same as that of Theorem~\ref{thm:gap-sg}.
  The only difference compared to Theorem~\ref{thm:gap-sg} is that we replaced $C_{\geq 0}$ by $C_{\geq 1}$
  which does not change the conclusion.
  For small $\beta$,
  the assumption on $V$ is satisfied thanks to Lemma~\ref{lem:Vr}.
\end{proof}

The following lemma can be proved, e.g., using the \emph{path method} for spectral gap inequalities;
we postpone the elementary proof to Appendix~\ref{app:singlespin-dg}.

\begin{lemma} 
\label{lem:gap-dg-single}
  For any $\beta>0$, there exists a constant $C_\beta$ such that
  the measure $\mu_\psi$ on $2\pi\Z$ has a spectral gap uniformly in $\psi \in \R$,
  \begin{equation} 
  \label{e:gap-dg-single}
    \var_{\mu_\psi}(F(n)) \leq C_\beta 
 \E_{\mu_\psi} \Big( (F(n+2\pi)-F(n))^2 + (F(n-2\pi)-F(n))^2  \Big)   .
  \end{equation}
\end{lemma}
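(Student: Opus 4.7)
First I would reduce to a bounded range of $\psi$. Since $\mu_{\psi+2\pi}(n)=\mu_\psi(n-2\pi)$ on $2\pi\Z$ and both sides of \eqref{e:gap-dg-single} are invariant under the translation $n\mapsto n+2\pi$, the optimal constant is $2\pi$-periodic in $\psi$, so I may assume $\psi\in[0,2\pi)$. Reindexing via $n=2\pi k$ and setting $\alpha=\psi/(2\pi)\in[0,1)$, the task becomes to show that the measure
\begin{equation}
\tilde\mu_\alpha(k) = Z_\alpha^{-1} e^{-2\pi^2\beta(k-\alpha)^2}, \qquad k\in\Z,
\end{equation}
satisfies a Poincar\'e inequality for the nearest-neighbour Dirichlet form on $\Z$, with a constant uniform in $\alpha\in[0,1)$. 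The relevant conductance on the edge $(k,k+1)$ is $c(k)=\tilde\mu_\alpha(k)+\tilde\mu_\alpha(k+1)$, because $\sum_n \mu_\psi(n)[(F(n+2\pi)-F(n))^2+(F(n-2\pi)-F(n))^2]=\sum_k(\tilde\mu_\alpha(k)+\tilde\mu_\alpha(k+1))(F(k+1)-F(k))^2$ after reindexing.

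Next I would invoke the Miclo--Muckenhoupt Hardy criterion for reversible birth--death chains on $\Z$: with $m^*$ a median of $\tilde\mu_\alpha$ (which lies in $\{0,1\}$ for $\alpha\in[0,1)$), the inverse spectral gap is bounded by $4\max(B_+,B_-)$ where
\begin{equation}
B_+ = \sup_{n>m^*}\;\tilde\mu_\alpha([n,\infty))\sum_{k=m^*}^{n-1}\frac{1}{c(k)},
\end{equation}
and $B_-$ is defined symmetrically on the left half-line. The problem reduces to bounding $B_\pm$ uniformly in $\alpha$.

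This bound follows from elementary Gaussian-tail estimates. For $k\geq m^*$ the ratio $\tilde\mu_\alpha(k+1)/\tilde\mu_\alpha(k)=e^{-2\pi^2\beta(2(k-\alpha)+1)}$ is bounded by $r:=e^{-\pi^2\beta}<1$ uniformly in $\alpha\in[0,1)$, since $m^*\geq\alpha-\tfrac12$. Summing the two resulting geometric series from their dominant endpoints gives $\tilde\mu_\alpha([n,\infty))\leq(1-r)^{-1}\tilde\mu_\alpha(n)$ and $\sum_{k=m^*}^{n-1}1/c(k)\leq 2(1-r)^{-1}/\tilde\mu_\alpha(n-1)$, so that
\begin{equation}
B_+ \leq \frac{2}{(1-r)^2}\cdot\frac{\tilde\mu_\alpha(n)}{\tilde\mu_\alpha(n-1)} \leq \frac{2\,e^{2\pi^2\beta}}{(1-r)^2} =: C_\beta,
\end{equation}
and symmetrically $B_-\leq C_\beta$. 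This yields \eqref{e:gap-dg-single} with a constant depending only on $\beta$.

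The only point requiring real care is verifying the Miclo--Muckenhoupt criterion in the precise form needed for the two-sided Dirichlet form of \eqref{e:gap-dg-single} and keeping track of the median as $\alpha$ varies; both are standard but tedious. A cleaner alternative that bypasses the Hardy machinery is to exploit the discrete log-concavity of $\tilde\mu_\alpha$ inherited from the convexity of $k\mapsto 2\pi^2\beta(k-\alpha)^2$, whose discrete second difference equals $4\pi^2\beta$, and apply a discrete Brascamp--Lieb or Bakry--\'Emery-type bound to conclude $\gamma_\alpha\gtrsim\beta$ directly, uniformly in $\alpha$.
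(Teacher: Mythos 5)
Your approach is correct in its essentials and is a genuine alternative to the paper's, though the two are close cousins. The paper derives the bound from scratch by the \emph{path method}: it writes the variance as a double sum, telescopes $F(n)-F(m)$ along the lattice path between $n$ and $m$, applies Cauchy--Schwarz, and then bounds the resulting weight
\begin{equation}
\max_i \;\frac{1}{\mu_\psi(i)+\mu_\psi(i+1)}\sum_{n\le i,\, m\ge i+1}\mu_\psi(n)\,\mu_\psi(m)\,(m-n)
\end{equation}
by hand using the Gaussian decay $\mu_\psi(i+j)/\mu_\psi(i)=e^{-4\pi^2\beta(i-\psi/2\pi)j}\,e^{-2\pi^2\beta j^2}$. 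You instead package the one-dimensional structure into the Miclo--Muckenhoupt (Hardy) criterion and bound $B_\pm$ by the same kind of tail estimate. These are two standard encodings of essentially the same idea (the path-method constant and the Hardy constants differ only by absolute factors on $\mathbb{Z}$), so either works; the paper's version is self-contained while yours imports the criterion and saves a Cauchy--Schwarz/Fubini manipulation.

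The one point that does need repair, and that you already flag, is the geometric-decay step. The inequality $\tilde\mu_\alpha(k+1)/\tilde\mu_\alpha(k)=e^{-2\pi^2\beta(2(k-\alpha)+1)}\le e^{-\pi^2\beta}$ requires $k-\alpha\ge -\tfrac14$, but the bound $m^*\ge \alpha-\tfrac12$ only gives $k-\alpha\ge -\tfrac12$ at $k=m^*$. Near $\alpha=\tfrac12$ the first step ratio $\tilde\mu_\alpha(1)/\tilde\mu_\alpha(0)$ tends to $1$, so the uniform rate $r$ really only holds for $k\ge m^*+1$. This is harmless — the affected sums pick up at most one extra term and $B_\pm$ remain bounded uniformly in $\alpha$ — but it should be stated rather than swept into ``$m^*\ge\alpha-\tfrac12$''. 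The final ``cleaner alternative'' via a discrete Brascamp--Lieb/Bakry--\'Emery bound is plausible for this log-concave birth--death chain, but it is not an off-the-shelf statement in the form you cite (the discrete curvature condition is not simply the second difference of the potential), so as written it is a heuristic rather than a proof; the Hardy route, or the paper's explicit path computation, is the safer choice.
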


With the above ingredients, the proof can now be completed as follows.

\begin{proof}[Proof of Theorem~\ref{thm:gap-dg}]
We start with the proof of the lower bound on the spectral gap.
By \eqref{e:DG-E-decomp}, the variance of a function  $F: (2\pi\Z)^\Lambda \to \R$ under the Discrete Gaussian measure can be written as
\begin{equation} \label{e:DG-var-decomp}
  \var_\mu(F) = \E_{\mu_r}(\var_{\mu_\varphi}(F)) + \var_{\mu_r}(G), \quad \text{where } G(\varphi) = \E_{\mu_\varphi}(F).
\end{equation}
By Corollary~\ref{cor:mur}, the measure $\mu_r$ has an inverse spectral gap bounded by $O(1/\epsilon)$.
By Lemma~\ref{lem:gap-dg-single} and the tensorisation principle
for spectral gaps, the product measure $\mu_\varphi = \prod_{x\in\Lambda} \mu_{\varphi_x}$ has a spectral gap uniformly bounded by $C_\beta$.
It follows that
\begin{equation}
  \var_\mu(F) \leq
  C_\beta  \bbD (F)
  +
  O(\frac{1}{\epsilon})
  \sum_{x\in\Lambda} \E_{\mu_r}(|\nabla_{\varphi_x} G|^2),
\end{equation}
where the Dirichlet form introduced in \eqref{eq: dirichlet discrete}
has been denoted by 
\begin{equation}
  \bbD (F) =
  \frac{1}{2(2\pi)^2}
  \sum_{x\in\Lambda} \bbE_\mu  
  \Big( (F(\sigma^{x+})-F(\sigma))^2  + (F(\sigma^{x-})-F(\sigma))^2 \Big).
\end{equation}

We also set 
\begin{equation}
  \bbD_{x,\mu_{\varphi}} (F) = 
  \frac{1}{2(2\pi)^2}
  \bbE_{\mu_{\varphi}}
  \Big( (F(\sigma^{x+})-F(\sigma))^2  + (F(\sigma^{x-})-F(\sigma))^2 \Big).
\end{equation}
Then the second term on the right-hand side is bounded as follows.
Since with respect to the measure $\mu_\varphi$ for fixed $\varphi$,
the $\sigma_x$ are independent, we have
\begin{equation}
  |\nabla_{\varphi_x} G(\varphi)|^2
  = \beta^2  (\cov_{\mu_{\varphi}}(F(\sigma), \sigma_x))^2
  \leq \beta^2 \E_{\mu_{\varphi}}(\cov_{\mu_{\varphi_x}}(F(\sigma), \sigma_x))^2)
  \leq C_\beta^2 
  \bbD_{x,\mu_{\varphi}} (F)
\end{equation}
where we used the following inequality,
which follows from $\var_{\mu_{\varphi_x}}(\sigma_x) \leq C_\beta$ and \eqref{e:gap-dg-single}:
\begin{equation}
  (\cov_{\mu_{\varphi_x}}(F(\sigma),\sigma_x))^2
  \leq
  \p{\var_{\mu_{\varphi_x}}(F)} \p{\var_{\mu_{\varphi_x}}(\sigma_x)}
  \leq C_\beta^2  \bbD_{x, \mu_{\varphi}} (F)
  .
\end{equation}
Using that $\bbD(F) = \sum_{x\in\Lambda} \E_{\mu_r}(\bbD_{x,\mu_\varphi}(F))$,
in summary, we conclude that
\begin{equation}
  \var_\mu(F) \leq
  C_\beta \Big(1+C_\beta O(\frac{1}{\epsilon}) \Big) \bbD  (F)
\end{equation}
and therefore that the inverse spectral gap obeys $1/\gamma = O(1/\epsilon)$.

For the matching upper bound on the spectral gap,
we use the test function $F = \sum_{x\in \Lambda} \sigma_x$,
analogously to the Sine-Gordon case.
For any $\psi \in \R$ and $t \in \R$,
\begin{equation}
  \E_{\mu_\psi}(e^{t\sigma})
  = e^{V(\psi)} \sum_{n\in 2\pi\Z} e^{-\beta(n-\psi)^2/2 +  n t}
  = e^{V(\psi)-V(\psi+ t/\beta) +t^2/(2\beta) + t\psi}.
\end{equation}
Let $u = \sum_{y} [C_{\geq 1}]_{xy}$ (which is independent of $x$).
It follows that
\begin{align}
  e^{\Gamma(t)}
  = \E_{\mu}(e^{tF})
  = \E_{\mu_r} \E_{\mu_\varphi}(e^{tF})
  &=
  e^{t^2 |\Lambda_N|/(2\beta)} 
  \frac{\E_{C_{\geq 1}}(e^{-\sum_x V(\varphi_x+ t/\beta) + t \sum_x \varphi_x})}{\E_{C_{\geq 1}}(e^{-V(\varphi)})}
  \nnb
  &=
    e^{t^2 |\Lambda_N| (1/\beta+u)/2} \frac{\E_{C_{\geq 1}}
    (e^{-\sum_x V(\varphi_x+ t/\beta+ t  u)})}{\E_{C_{\geq 1}}(e^{-V(\varphi)})}
  .
\end{align}
Since $\sum_y [C_0]_{xy} = [C_0]_{xx}=1/\beta$,
note that $1 /\beta + u = \sum_y \sum_{j=0}^N [C_j]_{xy}= \sum_{y} (-\beta\Delta_H+\epsilon Q_N)^{-1}_{xy} =   \epsilon^{-1}$.
As in the proof of Corollary \ref{cor:sg-var}, it follows that
\begin{equation}
  \var_{\mu}(F) = \frac{|\Lambda_N|}{\epsilon} - \frac{V_N''(0)}{\epsilon^2}
  = \frac{|\Lambda_N|}{\epsilon}(1+O(\frac{\kappa^N}{\epsilon L^{2N}}))
  = \frac{|\Lambda_N|}{\epsilon}(1+O(\kappa^N)).
\end{equation}
Since $\bbD(F) = |\Lambda_N|$, this completes the proof of $\gamma \leq \epsilon(1+O(\kappa^N))$
and therefore the proof of the theorem.
\end{proof}

\appendix
\section{Estimates for log-concave measures}
\label{app:convex}

Let $X$ be a finite-dimensional vector space with inner product $(\cdot,\cdot)$
and Lebesgue measure $m$.
Choosing an orthonormal basis, we may identify $X$ with $\R^k$ for some $k$.
Using this identification or the inner product structure directly,
the gradient, Laplacian, and Hessian of a function $F: X \to \R$ are defined.
Assume that $H: X \to \R$
satisfies $\He H > c \id$ uniformly for a constant $c>0$.
Let $\mu$ be the probability measure on $X$ with density proportional to $e^{-H}$ with respect to $m$.
Let $L$ be the (positive) self-adjoint generator
of the Langevin dynamics leaving $\mu$ invariant, i.e., for smooth $F: X \to \R$,
\begin{equation}
L F (\gz) = -\Delta F (\gz) + (\nabla H(\zeta),  \nabla F (\gz)),
\end{equation}
where $\nabla$ and $\Delta$ are the gradient and Laplacian on $X$.

In Sections~\ref{sec:recursion} and~\ref{sec:phi4},
we make use of the Helffer--Sj\"ostrand representation and the Brascamp--Lieb inequality.
Define the operator $\cL$ (Witten Laplacian) on $D \otimes X \subset L^2(\mu) \otimes X$ by
\begin{equation}
\cL = L \otimes \id +   \He H \, , 
\end{equation}
where $D \subset L^2(\mu)$ is the domain on which the operator $L$ is self-adjoint.
Then one has the Helffer-Sj\"ostrand representation \cite{MR1257821} (see also \cite{MR1936110})
for the covariance of two random variables $F,G: X \to \R$.

\begin{theorem}[Helffer--Sj\"ostrand representation] \label{thm:HS}
For sufficiently smooth $F,G: X \to \R$,
\begin{equation} \label{e:HS}
  \cov_{\mu}(F,G)
  = \E_{\mu}( \nabla F, \cL^{-1}  \, \nabla  G).
\end{equation}
\end{theorem}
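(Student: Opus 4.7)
The plan is to exploit the elementary identity $\cov_\mu(F,G) = \E_\mu(F \cdot L u)$, where $u$ solves the Poisson equation $L u = G - \E_\mu(G)$, and then commute $\nabla$ with $L$ to identify $\nabla u$ with $\cL^{-1}\nabla G$.

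First I would establish that the Poisson equation $L u = G - \E_\mu(G)$ has a unique smooth solution with $\E_\mu(u)=0$. Under the assumption $\He H \geq c\,\id$, the Bakry--Emery criterion gives a spectral gap of at least $c$ for $L$ on the orthogonal complement of the constants in $L^2(\mu)$; in particular $L$ is invertible on this subspace and regularity of $u$ follows from elliptic regularity (assuming $F,G$ smooth with appropriate decay). Integration by parts then yields
\begin{equation}
\cov_\mu(F,G) = \E_\mu\!\pa{F \, (G - \E_\mu G)} = \E_\mu\!\pa{F \cdot L u} = \E_\mu(\nabla F, \nabla u),
\end{equation}
using that $L$ is the generator associated with the Dirichlet form $\E_\mu(\nabla\cdot,\nabla\cdot)$.

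Next I would establish the commutation identity $\nabla L = \cL \nabla$ on smooth functions. This is a direct computation: for any smooth $v:X\to\R$,
\begin{equation}
\nabla (L v) = \nabla\!\pB{-\Delta v + (\nabla H,\nabla v)} = -\Delta\nabla v + (\He H)\nabla v + (\nabla H, \nabla\nabla v) = (L\otimes \id)\nabla v + (\He H)\nabla v = \cL \nabla v,
\end{equation}
where I used that partial derivatives commute with $\Delta$ and that $\nabla(\nabla H,\nabla v) = (\He H)\nabla v + (\nabla H, \nabla\nabla v)$. Applying this with $v=u$ gives $\cL \nabla u = \nabla G$. Since $\He H \geq c\,\id$ and $L\otimes \id \geq 0$, the Witten Laplacian satisfies $\cL \geq c\,\id$ on $L^2(\mu)\otimes X$ and is in particular invertible, so $\nabla u = \cL^{-1}\nabla G$. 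Substituting into the identity for $\cov_\mu(F,G)$ completes the proof.

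The main obstacle is purely technical: one needs to justify the domain considerations — that $u$ lies in the domain of $L$ with $\nabla u$ in the domain of $\cL$, and that integration by parts is legitimate. These issues are standard under the strict convexity hypothesis $\He H \geq c\,\id$ and a mild growth assumption on $F,G$ (which is what "sufficiently smooth" should be taken to mean); once one works in a convenient dense class of test functions (say, Schwartz functions, or smooth functions with all derivatives of polynomial growth), both the spectral gap for $L$ and the lower bound on $\cL$ provide the required functional-analytic framework, and the formal manipulations above become rigorous.
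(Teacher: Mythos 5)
Your proof is correct and is exactly the standard argument for the Helffer--Sj\"ostrand representation; the paper does not reprove this result but cites it from the literature (Helffer--Sj\"ostrand, and the survey by Helffer), where this same route — solve the Poisson equation, integrate by parts, and use the intertwining identity $\nabla L = \cL\nabla$ — is used. The commutation computation and the invertibility of $\cL$ from $\cL \geq \He H \geq c\,\id$ are both handled correctly.
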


In particular, one can easily obtain the Brascamp--Lieb inequality \cite{MR0450480}
from this representation.

\begin{theorem}[Brascamp--Lieb inequality] \label{thm:BL}
  For sufficiently nice $F: X \to R$,
  \begin{equation} \label{e:BL0}
    \var_\mu(F) \leq \E_\mu(\nabla F, (\He H)^{-1} \nabla F).
  \end{equation}
  In particular, if $\He H(\varphi) \geq Q$ uniformly in $\varphi \in X$, then for any $f\in X$,
  \begin{equation} \label{e:BL}
    \log \E_\mu(e^{(f,\zeta) - \E_\mu(f,\zeta)}) \leq \frac12 (f,Q^{-1} f).
  \end{equation}
\end{theorem}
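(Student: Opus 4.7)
The plan is to derive both inequalities of Theorem~\ref{thm:BL} directly from the Helffer--Sj\"ostrand representation of Theorem~\ref{thm:HS}, which is already available as input.

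For the first inequality \eqref{e:BL0}, I would specialise \eqref{e:HS} to $G=F$ to obtain the exact identity $\var_\mu(F) = \E_\mu(\nabla F, \cL^{-1}\nabla F)$, and then establish the operator inequality $\cL^{-1} \leq (\He H)^{-1}$ on $L^2(\mu)\otimes X$. The point is that the generator $L$ is nonnegative and self-adjoint on $L^2(\mu)$ (indeed $\E_\mu(G L G) = \E_\mu(\nabla G,\nabla G)\geq 0$ by integration by parts), so $L\otimes \id_X \geq 0$ as an operator on the tensor product. Since $\He H$ acts as a strictly positive multiplication operator, the decomposition $\cL = L\otimes \id_X + \He H$ gives $\cL \geq \He H$ as quadratic forms, hence $\cL^{-1}\leq (\He H)^{-1}$. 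Substituting into the Helffer--Sj\"ostrand identity yields \eqref{e:BL0}.

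For the exponential moment bound \eqref{e:BL}, I would use a Laplace-transform argument combined with \eqref{e:BL0}. Fix $f\in X$ and set $\psi(t) = \log \E_\mu(e^{t(f,\zeta)})$ for $t\in [0,1]$. A direct computation shows that $\psi'(t) = \E_{\mu_t}((f,\zeta))$ and $\psi''(t) = \var_{\mu_t}((f,\zeta))$, where $\mu_t$ is the tilted probability measure with density proportional to $e^{-H(\zeta) + t(f,\zeta)}$. Crucially, the Hamiltonian of $\mu_t$, namely $H(\zeta) - t(f,\zeta)$, has the same Hessian as $H$, so the hypothesis $\He H \geq Q$ is preserved. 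Applying \eqref{e:BL0} to $\mu_t$ with the linear functional $\zeta \mapsto (f,\zeta)$, whose gradient is the constant vector $f$, yields
\begin{equation}
  \psi''(t) = \var_{\mu_t}((f,\zeta)) \leq \E_{\mu_t}(f, Q^{-1} f) = (f,Q^{-1}f),
\end{equation}
uniformly in $t\in[0,1]$. Integrating this bound twice from $0$ to $1$, using $\psi(0)=0$ and $\psi'(0) = \E_\mu((f,\zeta))$, gives $\psi(1) \leq \E_\mu((f,\zeta)) + \tfrac{1}{2}(f,Q^{-1}f)$, which is \eqref{e:BL}.

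The only mildly delicate point is the justification of the operator inequality $\cL \geq \He H$ on the appropriate domain in $L^2(\mu)\otimes X$ and the differentiation under the integral defining $\psi$; both are standard under the uniform convexity assumption $\He H \geq c\,\id$, which forces $H$ to grow at least quadratically and thus guarantees that all exponential moments of linear functionals, together with their derivatives in $t$, are finite and may be exchanged freely. No genuine obstacle is expected beyond these routine checks.
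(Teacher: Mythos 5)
Your proof is correct, and for the variance bound \eqref{e:BL0} it follows exactly the route the paper intends: the paper explicitly remarks that the Brascamp--Lieb inequality ``can easily be obtained'' from the Helffer--Sj\"ostrand representation, and your argument ($\cL = L\otimes\id + \He H \geq \He H > 0$, hence $\cL^{-1}\leq (\He H)^{-1}$, then set $G=F$ in \eqref{e:HS}) is precisely that derivation. For the exponential bound \eqref{e:BL}, the paper gives no proof at all (it cites Brascamp--Lieb's original article), but your Herbst-type argument --- tilt the measure, observe the Hessian of $H_t = H - t(f,\cdot)$ is unchanged, bound $\psi''$ uniformly by applying \eqref{e:BL0} to the linear observable $(f,\zeta)$, and integrate twice --- is the standard and cleanest way to go from the variance inequality to the sub-Gaussian Laplace transform bound. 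Both steps are sound, including the small technical points you flag (positivity of $L\otimes\id_X$ on the tensor product, integrability from uniform convexity).
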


\section{Proof of Theorem~\ref{thm:phi4-rg} and of Lemmas~\ref{lem:epsilon} and \ref{lem:summu}}
\label{app:phi4-rg-pf}

In this appendix,
we translate the results from \cite{rg-brief} to assume the form stated in Theorem~\ref{thm:phi4-rg},
and we proof two elementary lemmas for the sequences $(\epsilon_j)$ and $(\mu_j)$.

\begin{proof}[Proof of Theorem~\ref{thm:phi4-rg}]
  First, since our constants are allowed to depend on $L$
  and since we are only considering derivatives of finite order, the constants $\ell_0$ and $k_0$ that
  appear in the definitions of the versions of $\ell_j$ and $h_j$ in \cite{rg-brief} are insignificant
  for our estimates here and we therefore drop them.

  The critical point $\nu_c =  \nu_c(g)$ is chosen as in \cite[Theorem~4.2.1]{rg-brief}.
  Moreover, given $t = \nu-\nu_c(g) > 0$ and $g=g_0>0$ small,
  the mass parameter $m^2>0$ and $\nu_0= \nu-m^2$ are determined as in the proof of \cite[Theorem~4.2.1]{rg-brief}.
  In \cite{rg-brief}, the renormalisation group flow is defined in terms of the decomposition
  of $(-\Delta_H+m^2)^{-1}$ in terms of the orthogonal projections $P_j$ as in \eqref{e:DeltaHdecompP}, namely as
  \begin{equation}
    (-\Delta_H+m^2)^{-1} = \sum_{j=1}^N \tilde\lambda_j P_j + \frac{1}{m^2} Q_N = \sum_{j=1}^N \tilde C_j + \hat C_N,
    \quad
    \tilde \lambda_j = \frac{L^{2(j-1)}}{1+L^{2(j-1)}m^2},
  \end{equation}
  where we here write $\tilde C_j = \tilde\lambda_j P_j$ for the covariances denoted by $C_j$ in \cite{rg-brief}
  to distinguish them from the covariances $C_j = \lambda_jQ_j$ that we primarily use in this paper.
  In terms of these, we also have
  \begin{equation}
    (-\Delta_H+m^2)^{-1} = \sum_{j=0}^N \lambda_j Q_j.
  \end{equation}
  To translate between the two decompositions, note that $\sum_{j=0}^k \lambda_j Q_j =\sum_{j=1}^k \tilde\lambda_j P_j + \tilde\lambda_{j+1} Q_j$,
  i.e.,
  \begin{equation}
    \sum_{j=0}^k C_j = \sum_{j=1}^k \tilde C_j + \tilde \lambda_{j+1} Q_j.
  \end{equation}
  
  In \cite[Theorem~6.2.1]{rg-brief}, 
  it is shown that there is a sequence $(\tilde g_j,\tilde\nu_j,\tilde u_j)$ and a sequence of functions $\tilde K_j$,
  with $\tilde g_0 = g_0 = g$, $\tilde\nu_0 = \nu_0$, $\tilde u_0=0$ and $\tilde K_0=0$,
  and with estimates as stated in that proposition, such that
  \begin{equation}
    \E_{\tilde C_1 + \cdots + \tilde C_j}(e^{-V_0(\Lambda, \varphi+\zeta)}) = e^{-\tilde u_j|\Lambda|} \prod_{B \in \cB_j} (e^{-\tilde V_j(B,\varphi)} + \tilde K_j(B,\varphi)),
  \end{equation}
  where again we use tildes to refer to the quantities as defined in \cite{rg-brief}.
  Therefore, using the relation between the two decompositions $(C_j)$ and $(\tilde C_j)$,
  our effective potential $V_j$ as defined in \eqref{e:phi4-Vdef} in terms of the decomposition $(C_j)$ is given by
  \begin{equation}
    e^{-V_{j}(\varphi)}
    = \prod_{B\in\mathcal{B}_j} e^{-V_j(B, \varphi)}
    = e^{-\tilde u_j|\Lambda|} \E_{\tilde\lambda_{j+1}Q_j} \qa{ \prod_{B \in \mathcal{B}_j} (e^{-\tilde V_j(B,\varphi)} + \tilde K_j(B,\varphi))}.
  \end{equation}
  Differently from the usual renormalisation group steps, the expectation on the right-hand side does not involve any reblocking,
  i.e., the size of the blocks is the same on both sides of the equality.
  This is the same situation as in the last renormalisation group step in
  \cite[Proposition~6.2.2]{rg-brief}. In \cite{rg-brief}, the last renormalisation group step is only applied at the last
  scale, but it we can here apply it at any scale.
  More precisely,
  by \cite[Proposition~6.2.2 and Remark~10.7.2]{rg-brief} with the covariance $\hat C$ replaced by $\hat C = \tilde\lambda_{j+1} Q_j$
  and the scale $N$ replaced by $j$,
  we obtain
  \begin{equation}
    e^{-V_{j}(\varphi)} = e^{-\hat u_j |\Lambda|} \prod_B \pB{ e^{-\hat V_j(B,\varphi)}(1+\hat W(B,\varphi)) + \hat K_j(B,\varphi)},
  \end{equation}
  as in \eqref{e:phi4-rg-repr}.
  Moreover, the bounds on the $T_\infty(h)$-norm and the $T_0(\ell)$-norm of $\hat K$
  stated in \cite[Proposition~6.2.2 and Remark~10.7.2]{rg-brief} directly translate
  directly to the estimates \eqref{e:phi4-rg-Tinfty} and \eqref{e:phi4-rg-T0}.

  Finally, \eqref{e:phi4-rg-mepsilon} is a consequence of \cite[Corollary~6.2.2]{rg-brief}, together
  with the definition of $\nu_c(g)$ below \cite[(6.2.24)]{rg-brief} and \cite[Theorem~4.2.1]{rg-brief}
  for the asymptotics of $m^2 = 1/\chi$.
\end{proof}

Finally, we prove the elementary estimates
for the sequences $\epsilon_j$ and $\mu_j$ stated in Lemmas~\ref{lem:epsilon} and \ref{lem:summu}.

\begin{proof}[Proof of Lemma~\ref{lem:epsilon}]
  By decreasing $\epsilon_j$ to $\frac15 g_j^{1/2}$ if necessary,
  we can assume that $\epsilon_{j+1} = \epsilon_j - \gamma_j\epsilon_j^2$ with $\gamma_j=O(\vartheta_j^2)$.
  To obtain an lower bound on the sequence $(\epsilon_j)$, we may also increase the $\gamma_j$
  and assume that $\gamma_j = \gamma \vartheta_j^2$ for some $\gamma=O(1)$.
  The solution to this recursion behaves as
  \begin{equation}
    \epsilon_j
    \asymp
    \frac{\epsilon_0}{1+\epsilon_0 \sum_{k\leq j} \gamma_k}
    \asymp
    \frac{\epsilon_0}{1+\epsilon_0 \gamma (j \wedge j_m)}.
  \end{equation}
  This follows, e.g., from \cite[Proposition~6.1.3 and (6.1.9)]{rg-brief}. Likewise, the sequence $g_j$ obeys
  \begin{equation}
    g_j
    \asymp
    \frac{g_0}{1+g_0 \beta_0^0 (j \wedge j_m)}.
  \end{equation}
  Therefore
  \begin{equation}
    \epsilon_j^{-1}
    \asymp g_0^{-1/2}+ \gamma (j \wedge j_m)
    \leq g_0^{-1}+ \gamma (j \wedge j_m)
    \asymp g_j^{-1}
  \end{equation}
  as needed.
\end{proof}

\begin{proof}[Proof of Lemma~\ref{lem:summu}]
We recall the definition $\vartheta_j= 2^{-(j-j_m)_+}$ and set $\mu_j = L^{2j} \nu_j$.
By \cite[Proposition~8.3.1]{rg-brief}, the sequence $\mu_j$ satisfies
$\mu_j= O(\vartheta_j g_j)$ and
\begin{equation}
  \mu_{j+1} = L^2 ((1-\gamma\beta_j g_j)\mu_j + \eta_j g_j) + O(\vartheta_jg_j^2),
\end{equation}
where $\eta_j=O(1)$.
Let $\eta_{\geq j} = \sum_{k=j}^\infty L^{-2(k-j)} \eta_k$ and $\hat\mu_j = \mu_j+\eta_{\geq j}g_j$. Then
$\hat\mu_j = O(g_j)$ and
\begin{align}
  \hat\mu_{j+1}
  &= L^2((1-\gamma\beta_{j}g_j)\mu_j + \eta_j g_j) + \pa{\sum_{k=j+1}^\infty L^{-2(k-j-1)} \eta_k} g_j + O(\vartheta_j g_j^2)
  \nnb
  &= L^2\pa{(1-\gamma\beta_{j}g_j)\mu_j + \pa{\eta_j +\sum_{k=j+1}^\infty L^{-2(k-j)} \eta_k}g_j} + O(\vartheta_j g_j^2)
  \nnb
  &= L^2\pa{(1-\gamma\beta_{j}g_j)\mu_j + \pa{\sum_{k=j}^\infty L^{-2(k-j)} \eta_k}g_j} + O(\vartheta_j g_j^2)
  \nnb
  &= L^2(1-\gamma\beta_j g_j) \hat\mu_j + O(\vartheta_j g_j^2).
\end{align}
Iterating this equation together with the boundedness of $\hat\mu_j$ implies
\begin{align}
  \hat\mu_j = L^{-2}\hat\mu_{j+1} + O(\vartheta_j g_j^2)
  = \sum_{l=j}^\infty L^{-2(l-j)} O(\vartheta_l g_l^2) = O(\vartheta_j g_j^2).
\end{align}
We will repeatedly use that (see for example \cite[Exercise~6.1.4]{rg-brief})
\begin{equation}
  \sum_{k=0}^\infty \vartheta_k g_k^2 = O(g_0).
\end{equation}
Hence
\begin{align}
  -\sum_{k=0}^j \mu_k
  = -\sum_{k=0}^j \hat\mu_k + \sum_{k=0}^j \eta_{\geq k} g_k
  = \sum_{k=0}^\infty \eta_{\geq k} g_k + O(g_0).
\end{align}
We now bound the sum on the last right-hand side. By definition and rearranging sums,
\begin{align}
  \sum_{k=0}^{\infty} \eta_{\geq k}g_k
  &= \sum_{k=0}^{\infty} \sum_{l=k}^\infty L^{-2(l-k)} \eta_l g_k
  \nnb
  &=
  \sum_{k=0}^{\infty} \sum_{l=k}^\infty L^{-2(l-k)} \eta_l g_l
  +
    \sum_{k=0}^{\infty} \sum_{l=k}^\infty L^{-2(l-k)} \eta_l \sum_{m=k}^{l-1} (g_m-g_{m+1})
  \nnb
  &\leq
  \sum_{l=0}^\infty (1-L^{-2})^{-1} \eta_l g_l
  +
  \sum_{k=0}^{\infty} \sum_{l=k}^\infty L^{-2(l-k)} \eta_l \sum_{m=k}^{l-1} (\beta_m g_m^2+O(g_m^3))
      .
\end{align}
The last sum can be rearranged and bounded as
\begin{equation}
  \sum_{m=0}^\infty \sum_{k=0}^{m} \sum_{l=m+1}^\infty L^{-2(l-k)} \eta_l (\beta_m g_m^2+O(g_m^3))
  =
  \sum_{m=0}^\infty O(L^{2m})O(L^{-2m}) O(\vartheta_m g_m^2)
  =O(g_0).
\end{equation}
By \cite[(5.3.10) and (5.3.7)]{rg-brief},
there exists a constant $\delta$ independent of $n$ such that
$(1-L^{-2})^{-1}\eta_l \leq \delta \frac{n+2}{n+8} \beta_l$.
Therefore
\begin{equation}
  \sum_{l=0}^\infty (1-L^{-2})^{-1} \eta_l g_l
  \leq \delta \frac{n+2}{n+8} \sum_{l=0}^\infty \beta_l g_{l} + O(g_0)
  \leq \delta \frac{n+2}{n+8} |\log g_{j_m}| + O(1),
\end{equation}
where the last inequality again follows from \cite[Exercise~6.1.4]{rg-brief}.
This concludes the proof of the first inequality in \eqref{e:summu}.
The second inequality is immediate from $\mu_j = O(\vartheta_j g_j)$ and \cite[Exercise~6.1.4]{rg-brief}.
\end{proof}

\section{Spectral gap inequality for single-spin Discrete Gaussian measure}
\label{app:singlespin-dg}

In this appendix, we prove Lemma~\ref{lem:gap-dg-single}.
Thus we prove that for any $\beta>0$, there exists a constant $C_\beta$ such that
the measure $\mu_\psi$ on $\Z$ defined in \eqref{e:mupsi} has a spectral gap uniformly in $\psi \in \R$,
\begin{equation} 
  \label{e:gap-dg-single-app}
  \var_{\mu_\psi}(F(n)) \leq C_\beta 
  \E_{\mu_\psi} \Big( (F(n+2\pi)-F(n))^2 + (F(n-2\pi)-F(n))^2  \Big)   .
\end{equation}

\begin{proof}
It is enough to consider $\psi \in [0, 2\pi]$.
To simplify notation, we assume in this proof that $ \mu_\psi $ is supported on $\Z$
up to rescaling $n$ by a factor $2 \pi$,
i.e.,
\begin{equation} 
\label{eq: new measure mu}
\mu_\psi (n) = e^{V(\psi)} e^{- 2 \pi^2 \beta( n- \frac{\psi}{2 \pi} )^2} .
\end{equation}

We are going to apply the \emph{path method} to evaluate the gap \cite{MR2466937}.
Thus we write
\begin{align}
\var_{\mu_\psi}(F(n))
& = \frac12 \sum_{n,m \in \Z} \mu_\psi ( n) \mu_\psi (m) ( F( n) - F( m) )^2 
\nonumber \\
& =  \sum_{n < m} \mu_\psi (  n) \mu_\psi (  m) 
\; \left( \sum_{i =n}^{m-1} F   (i+1)  - F(  i) \right)^2
\nonumber\\
& \leq  \sum_{i \in \Z} \left(  F  (i+1)  - F(  i) \right)^2
\sum_{n \leq i \atop  m \geq i+1} \mu_\psi (  n) \mu_\psi (  m) \; (m - n),
\label{eq: inequality discrete variance}
\end{align}
where we used the Cauchy--Schwarz inequality in the last inequality and Fubini to change the order of 
the summations.
The Dirichlet form in \eqref{e:gap-dg-single} can be rewritten as
\begin{align} 
  \bbD_{\mu_\psi} (F) & : = \E_{\mu_\psi} \Big( (F(n+1)-F(n))^2 + (F(n-1)-F(n))^2  \Big) \nnb
                      & = \sum_{n \in  \Z}  \left( \mu_\psi (n) + \mu_\psi (n+1) \right) (F(n+1)-F(n))^2.
\end{align}
Thus we deduce from \eqref{eq: inequality discrete variance} that
\begin{align}\label{e:DG-singlespin-pf}
\var_{\mu_\psi}(F(n))
\leq
\max_{i \in \bbZ} \left( 
\sum_{n \leq i \atop  m \geq i+1} 
\frac{\mu_\psi (  n) \mu_\psi (  m) \; (m-n)}{ \mu_\psi (  i) + \mu_\psi   (i+1)  }
\right) 
\bbD_{\mu_\psi} (F).
\end{align}
For $i \geq 0$, the maximum can be bounded by
\begin{equation}
\sum_{n \leq i \atop  m \geq i+1} 
      \frac{\mu_\psi ( n) \mu_\psi(m) (m-n)
      }{ \mu_\psi (  i) + \mu_\psi   (i+1)  }
\leq
\sum_{ m \geq i+1} 
      \frac{\mu_\psi (  m)} 
      { \mu_\psi (  i)}
      \left(  m +  \sum_n \mu_\psi (n)  |n|  \right)
\leq
\sum_{ j \geq 1} 
      \frac{\mu_\psi (  i+j)} 
      { \mu_\psi (  i)}
  (i+j+c_\beta)
  ,
\end{equation}
where we used that $c_\beta = \sum_n \mu_\psi (n)  |n|$ is a constant.
From \eqref{eq: new measure mu}, we see that 
the following bound holds uniformly in $\psi \in [0, 2 \pi]$:
\begin{equation}
\forall j \geq 1, \qquad 
\frac{\mu_\psi   (i + j) }{\mu_\psi (  i)} = 
 e^{- 4\pi^2 \beta(i- \frac{\psi}{2 \pi} ) j}
e^{- 2\pi^2  \beta j^2}
\leq e^{- 4\pi^2 \beta(i- 1)}
e^{- 2\pi^2 \beta j^2}.
\end{equation}
Together, the previous two inequalities imply that the maximum over
$i\geq 0$ in \eqref{e:DG-singlespin-pf} is bounded.
The case $ i<0$ can be controlled in the same way. 
This completes the proof of Lemma \ref{lem:gap-dg-single}.
\end{proof}

\section*{Acknowledgements}

We warmly thank Tom Spencer for his contributions to this paper; his input has been crucial.
We also thank David Brydges and Gordon Slade for a number of important discussions
and for careful reading and many helpful comments on a preliminary version of this paper.
Figure~\ref{fig:hier1} is taken from \cite{rg-brief}.
We acknowledge the support of ANR-15-CE40-0020-01 grant LSD.

\bibliography{all}
\bibliographystyle{plain}

\end{document}